\documentclass[opre,nonblindrev]{informs4rad} 

\newcommand{\revcolor}[1]{{\color{black}#1}}
\newif\ifMS
\MStrue
\OneAndAHalfSpacedXI

\newif\iffigdraft





\TheoremsNumberedThrough 
\EquationsNumberedThrough

\usepackage[T1]{fontenc}
\usepackage[utf8]{inputenc}
\usepackage{newtxtext}
\usepackage{bbm}

\usepackage[normalem]{ulem}
\usepackage{xfrac}
\usepackage{xcolor}
\usepackage{graphicx}
\usepackage{subfigure}

\usepackage{thm-restate}
\usepackage[ruled,vlined,linesnumbered]{algorithm2e}
\SetNoFillComment

\usepackage[breakable, skins]{tcolorbox}
        

\usepackage{csquotes}

\makeatletter
\renewenvironment*{displayquote}
  {\begingroup\setlength{\leftmargini}{0cm}\csq@getcargs{\csq@bdquote{}{}}}
  {\csq@edquote\endgroup}
\makeatother
\DeclareRobustCommand{\mybox}[2][gray!15]{
\begin{tcolorbox}[ 
        colback=white,      
        colframe=gray,  
        boxrule=0.2pt,      
        arc=2pt,outer arc=2pt,
        left=12pt,
        right=12pt,
        top=5pt,
        bottom=5pt,
        width=1.07\linewidth,
        enlarge left by=-0.55cm,
        before upper=\renewcommand{\baselinestretch}{1.3}\selectfont,
        after upper=\normalfont
        ]
 #2
 \end{tcolorbox}
}

\usepackage{fix-cm}



\usepackage{mathtools}
\definecolor{cornellred}{rgb}{0.7, 0.11, 0.11}
\definecolor{maroon}{rgb}{0.52, 0, 0}
\definecolor{dgreen}{rgb}{0.0, 0.5, 0.0}
\definecolor{ballblue}{rgb}{0.13, 0.67, 0.8}
\definecolor{royalblue(web)}{rgb}{0.25, 0.41, 0.88}
\definecolor{bleudefrance}{rgb}{0.19, 0.55, 0.91}
\definecolor{royalazure}{rgb}{0.0, 0.22, 0.66}
\usepackage[hypertexnames=false]{hyperref}
\hypersetup{
	colorlinks = true,
	linkcolor=royalazure,
	urlcolor=royalazure,
	citecolor=royalazure,
	linkbordercolor = {white}
}
\usepackage{cleveref}
\usepackage{enumitem}
\usepackage{multirow}

\usepackage{pgf,tikz,pgfplots}

\usetikzlibrary{decorations.pathreplacing,calc}

\pgfplotsset{compat=1.15}
\usetikzlibrary{arrows}
\usepackage{pgfplots}
\usetikzlibrary{patterns}
\usepgfplotslibrary{fillbetween}
\usetikzlibrary{intersections}

\usetikzlibrary{arrows, decorations.markings}

\tikzstyle{vecArrow} = [thick, decoration={markings,mark=at position
	1 with {\arrow[semithick]{open triangle 60}}},
	double distance=1.4pt, shorten >= 5.5pt,
	preaction = {decorate},
postaction = {draw,line width=1.4pt, white,shorten >= 4.5pt}]
\tikzstyle{innerWhite} = [semithick, white,line width=1.4pt, shorten >= 4.5pt]

{\theoremstyle{THkey}\newtheorem{informal}{Theorem (informal)}}



	\usepackage{accents}

	\usepackage[authoryear,round]{natbib}

\setlength{\textfloatsep}{8pt} 
\setlength{\intextsep}{8pt}    

	\newcommand{\reals}{\mathbb{R}}
	\newcommand{\naturals}{\mathbb{N}}

\ifMS
\else
	\DeclareMathOperator{\argmax}{argmax}
	
\fi

\newcommand{\allocTilde}{\tilde{\alloc}}
\newcommand{\edgeallocijp}{\edgealloc_{ij}^{\permu}}\newcommand{\pre}{\texttt{pre}}
\newcommand{\nxt}{\texttt{suc}}
\newcommand{\suc}{\nxt}
\newcommand{\randseed}{\eta}
\newcommand{\prob}[2][]{\text{\bf Pr}\ifthenelse{\not\equal{}{#1}}{_{#1}}{}\!\left[{\def\givenn{\middle|}#2}\right]}
\newcommand{\expect}[2][]{\text{\bf E}\ifthenelse{\not\equal{}{#1}}{_{#1}}{}\!\left[{\def\givenn{\middle|}#2}\right]}

\newcommand{\indicator}[1]{{\mathbbm{1}\left\{ #1 \right\}}}

\newcommand{\condition}{\,\mid\,}

    \DeclareMathOperator{\EX}{\mathbb{E}}
	\DeclareMathOperator{\OPT}{\textsc{OPT}}

	\DeclarePairedDelimiterX{\set}[1]\{\}{#1}
	\let\Pr\relax
	\DeclarePairedDelimiterXPP{\Pr}[1]{\mathbb{P}}[]{}{#1}
	\DeclarePairedDelimiterXPP{\Ex}[1]{\mathbb{E}}[]{}{#1}

\newcolumntype{P}[1]{>{\centering\arraybackslash}c{#1}}


\newcommand*{\rom}[1]{\expandafter\romannumeral #1}
\newcommand{\Rom}[1]{\uppercase\expandafter{\romannumeral #1\relax}}

\newcommand{\permu}{\pi}

\newcommand{\xhdr}[1]{\smallskip\noindent{\bf {#1}\ }}

\newcommand{\optCRgen}{\Gamma_{\textsc{gen}}(\buybackcost)}
\newcommand{\optCRdet}{\Gamma_{\textsc{det-int}}(\buybackcost)}

\newcommand{\optCRgenfBar}{\Gamma_{\textsc{gen}}(\Bar{\buybackcost})}
\newcommand{\optCRgenexf}{\Gamma_{\textsc{gen}}(\EX[\buybackcost])}

\newcommand{\dd}{{\mathrm d}}

\newcommand{\bipartitegraph}{G}
\newcommand{\onlinenodes}{U}
\newcommand{\offlinenodes}{V}
\newcommand{\Edge}{E}

\newcommand{\totalresource}{n}
\newcommand{\totaltime}{T}
\newcommand{\weightc}{\mathcal{W}}
\newcommand{\weight}{w}
\newcommand{\largenumber}{K}
\newcommand{\weightij}{\weight_{ij}}
\newcommand{\weightijConfig}{\weight_{ijC}}
\newcommand{\weighti}{\weight_i}
\newcommand{\buybackcost}{f}
\newcommand{\ALG}{\textsc{ALG}}
\newcommand{\CR}{\textsc{CR}}
\newcommand{\instance}{I}
\newcommand{\instances}{\mathcal{I}}
\newcommand{\approxratio}{\Gamma}
\newcommand{\alloc}{x}
\newcommand{\calloc}{y}
\newcommand{\allocj}{\alloc_j}
\newcommand{\callocj}{\calloc_j}

\newcommand{\continstances}{\instances_{\textrm{cont}}}

\newcommand{\onesup}{^{(1)}}

\newcommand{\calloconesup}{\calloc\onesup}

\newcommand{\buybackweight}{\underline{\weight}}
\newcommand{\buybackweightonesup}{\buybackweight\onesup}

\newcommand{\Lambert}{W_{-1}}
\newcommand{\Lambertterm}{\Lambert\left(\frac{-1}{e(1+\buybackcost)}\right)}
\newcommand{\pen}{\Psi}

\newcommand{\penscalar}{\tau}
\newcommand{\offlinedual}{\beta}
\newcommand{\onlinedual}{\alpha}
\newcommand{\diracdeltafunction}{\delta}
\newcommand{\demand}{d}
\newcommand{\demandi}{\demand_i}
\newcommand{\capacity}{s}

\newcommand{\buybackweightj}{\buybackweight_j}
\newcommand{\approxratiodet}{\hat{\approxratio}_{\penscalar}}
\newcommand{\detthreshold}{\frac{1}{3}}

\newcommand{\offlinedualj}{\offlinedual_j}
\newcommand{\onlineduali}{\onlinedual_i}
\newcommand{\expfuncparamone}{\lambda}
\newcommand{\expfuncparamtwo}{\tau}
\newcommand{\approxratioexp}{\approxratio_{(\expfuncparamone,\expfuncparamtwo)}}
\newcommand{\fracthreshold}{\frac{e-2}{2}}
\newcommand{\edgealloc}{z}
\newcommand{\edgeallocij}{\edgealloc_{ij}}
\newcommand{\edgeallocijConfig}{x_{ijC}}
\newcommand{\configallocij}{\edgealloc_{iC}}
\newcommand{\edgealloci}{\edgealloc_{i}}
\newcommand{\weightipj}{\weight_{i'j}}
\newcommand{\weightip}{\weight_{i'}}
\newcommand{\penderivative}{\psi}

\newcommand{\weightmax}{\weight_{\max}}

\newcommand{\primed}{^\dagger}
\newcommand{\weightprimed}{\weight\primed}
\newcommand{\callocwpsup}{\calloc^{(\weightprimed)}}
\newcommand{\buybackweightwpsup}{\buybackweight^{(\weightprimed)}}

\newcommand{\thresholdweight}{{\hat\weight}}
\newcommand{\canalloc}{\hat{\alloc}}

\newcommand{\price}{p}

\newcommand{\alloconesup}{\alloc\onesup}
\newcommand{\largeconstant}{\largenumber}
\newcommand{\allocwpsup}{\alloc^{(\weightprimed)}}

\iffigdraft
\usepackage[allfiguresdraft]{draftfigure}
\fi

\newcommand{\inventory}{s}
\newcommand{\mininventory}{\inventory_{\min}}

\usepackage{color-edits}
\addauthor{yf}{purple}    
\addauthor{fe}{blue}
\addauthor{RN}{red}

\newcommand{\assortment}{S}
\newcommand{\choice}{\mathcal{C}}

\newcommand{\config}{C}
\newcommand{\Config}{\mathcal{C}}

\newcommand{\fullallocthreshold}{\largenumber_0(\buybackcost)}


\addtolength{\textheight}{0.8cm}
\addtolength{\voffset}{-0.4cm}
\addtolength{\textwidth}{0.35cm}
\addtolength{\hoffset}{-0.16cm}





\begin{document}

 \RUNAUTHOR{}

\RUNTITLE{\revcolor{Online Resource Allocation with Cancellations}}

\TITLE{\revcolor{Online Resource Allocation with  Cancellations}}

\ARTICLEAUTHORS{%
\AUTHOR{Farbod Ekbatani}
\AFF{University of Chicago Booth School of Business, Chicago, IL, \EMAIL{fekbatan@chicagobooth.edu}}
\AUTHOR{Yiding Feng}
\AFF{Hong Kong University of Science and Technology (HKUST), Hong Kong, \EMAIL{ydfeng@ust.hk}}
\AUTHOR{Rad Niazadeh}
\AFF{University of Chicago Booth School of Business, Chicago, IL, \EMAIL{rad.niazadeh@chicagobooth.edu}}
} 

\ABSTRACT{%
\revcolor{We initiate the study of two-sided online resource allocation with costly cancellations within the buyback model. Our focus is on edge-weighted online matching and several of its extensions. In our base matching model, nodes arrive online on one side and request offline resources on the other side. In contrast to the classic literature,} in our model the decision maker can reclaim any fraction of an offline resource that was preallocated to an earlier online node. However, reclaiming a resource not only results in the loss of the previously allocated edge-weight but also incurs an additional penalty equal to a non-negative constant factor $f$ times the edge-weight. Parameterizing the problem by the buyback factor $f$, our main result is the development of optimal competitive algorithms for \emph{all possible values} of $f$ through a novel primal-dual family of algorithms in the fractional (or equivalently, large capacity) setting. We establish the optimality of our results by deriving separate lower bounds for both the small and large buyback factor regimes and showing that our primal-dual algorithm exactly matches these lower bounds by appropriately tuning a parameter as a function of $f$. Interestingly, our results reveal a phase transition: for the small buyback regime ($f < \frac{e-2}{2}$), the optimal competitive ratio is $\frac{e}{e-(1+f)}$, and for the large buyback regime ($f \geq \frac{e-2}{2}$), the competitive ratio is $-W_{-1}\left(\frac{-1}{e(1+f)}\right)$, where $W_{-1}$ is the non-principal branch of the Lambert $W$ function.

We also study the lower and upper bounds on the competitive ratio in variants of this model, such as matching with deterministic integral allocations or single-resource environments with varying demand sizes. For deterministic integral matching, our results again show a phase transition: for the small buyback regime ($f < \frac{1}{3}$), the optimal competitive ratio is $\frac{2}{1-f}$, while for the large buyback regime ($f \geq \frac{1}{3}$), the competitive ratio is $1 + 2f + 2\sqrt{f(1+f)}$. \revcolor{We further extend our model to settings with combinatorial configuration allocations, buyer-arriving submodular welfare maximization, and assortment planning. We also consider an extension with negative values of $f$, which models scenarios with secondary supply channels or overflow capacities available at discounted rates.} Our unifying family of primal-dual algorithms achieves the exact optimal competitive ratio across all these variants, demonstrating the power of our algorithmic framework for online resource allocations with cancellations. \revcolor{We further complement our theoretical analysis by numerically evaluating the performance of our algorithms and validating our theoretical results in more realistic scenarios.}

}%

\KEYWORDS{Online matching, buyback problem, online resource allocation, recourse and cancellations, primal-dual analysis, callable resources.}

\maketitle
\setcounter{page}{1}


%


\newpage
\section{Introduction}
\label{sec:intro}

Real-time matching of online demand nodes to offline supply nodes is a central problem in revenue management, underpinning diverse applications ranging from digital advertising to resource allocation in platforms. A prominent approach to such problems is captured by the classic online bipartite matching model, introduced by the seminal work of \cite{KVV-90}, which is particularly relevant in robust decision-making scenarios where future demands are highly uncertain or non-stationary~\citep{BQ-09}. Numerous applications and extensions of this model have been studied in the computer science and operations research literature, spanning the assignment of keywords to advertisers in search advertising~\citep{MSVV-07, BN-09}, assigning impressions to advertisers in display advertising~\citep{FKMMP-09,DHKMY-16}, dynamic assortment planning in retail~\citep{GNR-14,MS-20}, real-time matching in ride sharing platforms~\citep{ABDJSS-19,HKTWZZ-20}, and several others. 

\revcolor{A fundamental assumption in traditional online matching literature is that decisions made online are \emph{irrevocable}---once a demand node is matched to a supply node, this assignment cannot be altered later. While irrevocability is inherent in certain applications, in several others where it can be relaxed, it limits the flexibility of the decision-maker, especially in environments where future matches may offer significantly higher rewards. Relaxing irrevocability allows algorithms to revisit earlier, suboptimal assignments and potentially cancel them, enhancing the overall quality and efficiency of the final matching. 

Such ``value-enhancing'' flexibility is particularly relevant in certain modern applications for matching.
For instance, in cloud computing spot markets, allocated computing resources are frequently dynamically reassigned to optimize resource utilization and operational efficiency~\citep{AWS2}. In digital banner advertising~\citep{MSN,BHK-09}, publishers often overbook different digital billboard slots (banners) on their websites during an ad campaign, initially allocating each banner to multiple advertisers and subsequently revoking lower-value assignments. Similarly, in display advertising, advertisers typically commit to purchasing up to a certain number of impressions. Allocation algorithms, however, initially assign impressions beyond these commitments, ultimately billing advertisers only for the highest-valued impressions within their capacity constraints, essentially canceling the assignment of excess impressions~\citep{FKMMP-09,DHKMY-16}. Lastly, in classical revenue management contexts such as hotels or airlines overbooking~\citep{rot-71,LY-78}, initial assignments can exceed available capacity, followed by selective revocations of excess allocated requests to maximize overall revenue.\footnote{\revcolor{\label{footnote:applications-intro}To mention more examples, emerging matching systems in ride-sharing platforms---such as \hyperlink{https://help.lyft.com/hc/en-us/driver/articles/3202901162-Ride-Finder}{Lyft's Non-Exclusive Notifications} and \hyperlink{https://www.uber.com/en-AU/blog/introducing-trip-radar/}{Uber's Trip Radar}---assign a ride request to multiple drivers and subsequently cancel lower-quality matches, thus balancing improved service efficiency against driver dissatisfaction caused by cancellations. 
Similarly, in live-streaming platforms, servers initially allocate bandwidth to multiple streams but later dynamically reallocate resources---reducing bandwidth or resolution for some streams---to support more compatible or higher-value streams. See \Cref{app:practical} for further discussion of these and additional applications.}}

Despite the potential benefits of allowing cancellations, unrestricted cancellation can lead to substantial negative externalities, causing dissatisfaction among users whose assignments are revoked, which leads to undermining market trust. Also, cancellations might even have a physical immediate cost for the platform, or it might have an opportunity cost in future. As the result, the platform faces a tradeoff: on one hand, allowing more cancellations can enhance the value of the final match, and on the other hand the total canceled reward has a cost for the platform. In such a situation, the platform may want to incorporate a mechanism (or an algorithm) for deciding on allocations and cancellations, so that it benefits from creating more room for better assignments in future through cancellations, while controlling the amount of cancellations. 

To balance the flexibility benefits with the drawbacks associated with cancellations as discussed above, we adopt a principled framework known in the literature as the \emph{buyback model}~\citep{BHK-09,CFMP-09}. In this model, the decision-maker aims to maximize a mixed-sign objective function that explicitly accounts for \emph{costly cancellations}. Specifically, the total reward obtained by serving (uncanceled) requests is penalized by the total reward of canceled requests, scaled by a parameter $\buybackcost \geq 0$. This parameter, often referred to as the \emph{buyback factor}, serves as the ``shadow price'' or implicit cost of canceling each unit of reward, quantifying how significantly cancellations impact the decision-maker's goal.\footnote{\label{footnote:lagrangify-intro}\revcolor{Alternatively, one can think of our objective function as  ``Lagrangian function,'' where an upper-bound constraint on the total reward of canceled assignments is Lagrangified into the objective with multiplier $\buybackcost\geq 0$.}} Thus, this mixed-sign objective provides a fine-grained control over the algorithm's cancellation behavior and smoothly interpolates between two extremes: completely free cancellations ($\buybackcost=0$) and entirely irrevocable decisions ($\buybackcost=+\infty$).

Before presenting our formal problem statement and model, we provide a few remarks about the specific cancellation penalty term introduced in our objective. From a technical viewpoint, this penalty is natural because it ensures invariance under scaling of rewards. From a practical perspective, it provides a simple and interpretable mechanism to control cancellations in revenue management settings. Additionally, this linear cancellation cost can also be interpreted as a \emph{buyback cost}: in certain applications, it represents a physical payment in the form of a ``compensation fee'' paid to users, typically a fixed percentage of the original transaction, when reclaiming previously allocated resources---a practice frequently employed in real-world industries such as hotel or airline bookings~\citep{USDOT2011}.\footnote{\label{footnote:fixed-f}\revcolor{While we adopt a fixed function in this paper, which is simpler and more interpretable, personalized cancellation costs (when $\buybackcost$ is a function of time) may be interesting to investigate in some applications---which fall outside the scope of this paper.}}}

\smallskip
\noindent\textbf{Problem statement \& model:}
\revcolor{Motivated by the above discussion, we initiate the study of \emph{two-sided online resource allocation with costly cancellations} under the buyback model.} Unlike previous studies that focused mainly on \emph{single-resource environments}~\citep{BHK-09,AK-09}---similar to the single-leg revenue management problem---our work addresses complex \emph{matching environments} with multiple resources, reflecting realistic settings with multiple advertisers or service providers. In our base model, the nodes on the online side of a bipartite graph arrive one by one (which are demands or requests), revealing their matching weights (or rewards) to the set of offline nodes (which are supply or resources) upon arrival~\citep{KVV-90, FKMMP-09}. The decision maker must determine how to match each arriving online node to offline nodes with capacity constraints either fractionally or integrally. She is allowed to allocate more than the capacity to an offline node, but must revoke the excess allocations by paying the buyback cost given the buyback factor $\buybackcost$. \revcolor{Importantly, this way of cancellations is equivalent to an alternative online procedure in which past allocations can be revoked greedily online by discarding the smallest weight.}

Given this model, the decision maker aims to find an online matching (fractional or integral) that maximizes the total weight of the final assignments minus $\buybackcost$ times the buyback cost. Our goal is to develop robust online algorithms capable of handling non-stationary arrivals. Building upon the seminal work of \cite{KVV-90} from the computer science literature and \cite{BQ-09} from the revenue management literature, we consider a framework with adversarial online arrivals.\footnote{We primarily focus on deterministic (fractional or integral) algorithms, without distinguishing between adaptive and oblivious adversaries.} By parameterizing the problem with the buyback factor $\buybackcost$, we measure the performance of our online algorithms using the competitive ratio, defined as the worst-case ratio of the optimal offline objective (an omniscient benchmark) to that achieved by the online algorithm. Specifically, we ask the following research questions:

\begin{displayquote}
\mybox{\emph{(i)~Can we design simple (fractional) online algorithms for the above problem that achieve the optimal competitive ratio for \underline{all possible values} of the buyback factor $\buybackcost$? \\
\smallskip
(ii)~What if we restrict our attention to the simpler class of deterministic integral online algorithms?}}
\end{displayquote}

\noindent\textbf{Our main contributions.} 
We affirmatively answer both questions above by introducing a unified approach: \emph{We design and analyze a parametric family of primal-dual online algorithms that, after appropriate parameter tuning, achieve optimal competitive ratios for both fractional and deterministic integral allocations in the matching environment.} \revcolor{We further show that in the single-resource environment---a special case of the matching setting---our primal-dual algorithms achieve improved and optimal competitive ratios under general parameter choices.} Additionally, using randomized rounding techniques, we show that any fractional algorithm can be converted into a randomized integral algorithm with negligible loss under large capacities in the matching environment \revcolor{and exactly no loss even under small capacities in the single-resource special case.} Thus, our results provide a complete picture for designing online algorithms that achieve optimal competitive ratios in the edge-weighted online matching problem with buyback under large capacities, \revcolor{as well as in the single-resource buyback problem with arbitrary capacities.} Our contributions are two-fold:

\smallskip
\textbf{(i)~Lower-bounds in different parameter regimes.}~We first establish separate lower-bounds of $\optCRgen$ and $\optCRdet$ on the optimal competitive ratios for the general setting and the deterministic integral setting, respectively. These lower-bounds (below) are drawn in \Cref{fig:CRs}; here, $\Lambert:[\sfrac{-1}{e},0)\rightarrow(-\infty,-1]$ is the non-principal branch of the Lambert W function:\footnote{In mathematics, the Lambert W function, also called the omega function or product logarithm, is the converse relation of the function $y(x)=x\cdot e^x$. The non-principal branch $x=W_{-1}(y)$ is the inverse relation when $x\leq -1$.} 
\begin{align*}
\optCRgen\displaystyle\triangleq
\left\{
	\begin{array}{lll}
		\frac{e}{e-(1+\buybackcost)}  &~&\mbox{if } \buybackcost \leq \frac{e-2}{2} \\
		&\\
		-\Lambert\left(\frac{-1}{e(1+\buybackcost)}\right) &~&\mbox{if } \buybackcost \geq \frac{e-2}{2}
	\end{array}
\right. ~~~,~~~
\optCRdet\displaystyle\triangleq
\left\{
	\begin{array}{lll}
		\frac{2}{1-\buybackcost}  &~&\mbox{if } \buybackcost \leq \frac{1}{3} \\
		&\\
	1 + 2\buybackcost + 
2\sqrt{\buybackcost(1+\buybackcost)} &~&\mbox{if } \buybackcost \geq \frac{1}{3}
	\end{array}
\right.
\end{align*}
To establish the lower bounds, we identify two sources of uncertainty that prevent an online algorithm from performing as well as the optimal offline solution: \emph{edge-wise uncertainty} and \emph{weight-wise uncertainty}. Intuitively, edge-wise uncertainty is related to the multi-resource aspect of our model and reflects the information-theoretic challenge of identifying the correct offline node to match with an arriving online node, given the uncertainty about future arrivals (hence, it persists even when $\buybackcost=0$). In contrast, weight-wise uncertainty is tied to the buyback aspect of our model and captures the difficulty of determining the edge with the largest weight among all edges adjacent to an offline node, given future uncertainty (hence, it persists even with a single resource as long as $\buybackcost>0$). Notably, as $\buybackcost$ increases, weight-wise uncertainty becomes more significant, while edge-wise uncertainty plays a more important role when $\buybackcost$ is smaller.

We use novel constructions in our worst-case instances for the competitive ratio to exploit the trade-off between the two sources of uncertainty mentioned above. Interestingly, we observe a sharp ``phase transition'' based on the parameter $\buybackcost$ regarding the behavior of the worst-case instance, both in fractional and deterministic integral settings. This phase transition divides the problem into two regimes. First, there is the \emph{small buyback cost} regime where $\buybackcost<\hat{\buybackcost}$, and the worst-case instance primarily relies on the multi-resource aspect of the model. Notably, the special case of edge-weighted online matching with \emph{free-disposal}, that is, $\buybackcost=0$~\citep{FKMMP-09,DHKMY-16,FHTZ-20} falls within this regime. Then, there is the \emph{large buyback cost} regime where $\buybackcost>\hat{\buybackcost}$, and the worst-case instance---which is, in fact, \emph{exactly} the same as the worst-case instance in the single-resource environment---only relies on the buyback aspect. The phase transition thresholds differ between the two settings: $\hat{\buybackcost}=\frac{e-2}{2}$ for the general setting and $\hat{\buybackcost}=\frac{1}{3}$ for the deterministic integral setting.

As alluded to, in the large buyback cost regime, our lower bounds in the matching environment for general (fractional or randomized) algorithms and deterministic integral algorithms match the bounds in \cite{AK-09} and \cite{BHK-09}, established only for the single-resource environment.  Furthermore, as $\buybackcost \to 0$ in the extreme case of the small buyback cost regime, the lower bound $\optCRgen$ converges to $\frac{e}{e-1}$, and $\optCRdet$ converges to $2$. These values correspond to the worst-case competitive ratios for edge-weighted online bipartite matching with free disposal for general fractional \revcolor{(or randomized integral under large capacities)} and deterministic integral algorithms, respectively.

\begin{figure}[ht]
    \centering
    \input{figs/fig-competitive-ratio}
    \caption{Competitive ratio as a function of the buyback factor $\boldsymbol{\buybackcost}$: The {\color{blue} blue} solid (black dashed) curve represents the optimal competitive ratio $\boldsymbol{\optCRgen}$ for the matching environment (single-resource environment). The {\color{red}red} solid ({\color{gray}gray dashed}) curve represents the optimal competitive ratio $\boldsymbol{\optCRdet}$ for deterministic integral algorithms in the matching environment (single-resource environment). The solid blue curve and the dashed black curve coincide for buyback factors $\boldsymbol{\buybackcost \geq \frac{e-2}{2} \approx 0.359}$, while the solid red curve and the dashed gray curve coincide for buyback factors $\boldsymbol{\buybackcost \geq \frac{1}{3}}$.
    } \label{fig:CRs}
\end{figure}

\smallskip
\textbf{(ii)~Tight upper-bounds via primal-dual in all parameter regimes.}
After establishing lower bounds, we provide \emph{exact} matching upper bounds by introducing a novel primal-dual family of online algorithms for this problem. The primal-dual framework has proven effective in designing and analyzing online bipartite allocation algorithms with irrevocable allocations or free cancellations across various contexts. \footnote{Notable examples include the well-known online bipartite matching problem with vertex weights~\citep{AGKM-11}, the online budget allocation problem (Adwords)~\citep{MSVV-07,BN-09}, online assortment optimization with or without reusable resources~\citep{GNR-14,FNS-19,GIU-20,GGISUW-22,DFNSU-24}, and the special case of online edge-weighted bipartite matching with free disposal, i.e., $\buybackcost=0$~\citep{FKMMP-09,DHKMY-16}.} However, until now, this framework has not been applied to settings where cancellation is costly, and it has remained unclear whether it could be beneficial in such settings.

The main challenge stems from the fact that the optimal offline benchmark does not involve buybacks, meaning it incurs no cancellation costs. As a result, the linear program for the optimal offline solution \emph{does not} takes into account any aspects of the buyback feature in our problem. Prior work on single-resource buyback~\citep{BHK-09,AK-09} have not proposed systematic approaches to address this issue; instead, they rely on simple variations of greedy algorithms or ad hoc randomized schemes (to round the weights) to develop competitive online algorithms. It is not clear whether any of these ad hoc methods can be extended to the matching environment. This gap in understanding the buyback problem from the prior literature makes the development of an optimal competitive online algorithm for edge-weighted online matching within the buyback model highly challenging.

In a nutshell, our main technical and conceptual contribution is bridging thus existing gap in the literature by establishing a new connection between the primal-dual framework for online bipartite resource allocation problems and the buyback problem. This newfound understanding enables us to derive optimal competitive algorithms for all regimes of the buyback parameter $\buybackcost$.
We begin by revisiting the single-resource buyback problem through the systematic lens of the primal-dual framework. This approach allows us to develop new optimal competitive  primal-dual online algorithms for both fractional and integral single-resource buyback problems. More interestingly, we demonstrate how to naturally extend our primal-dual fractional algorithm to the matching environment, leading to our main result. Additionally, we show that natural adaptations of this approach in the integral setting yield optimal competitive ratios for deterministic integral algorithms.

\smallskip

\noindent\textbf{Our techniques.} We summarize our technical findings below. The complete technical story in this paper is elaborate and we refer the reader for the details to Sections~\ref{sec:single-resource},~\ref{sec:matching}, \ref{sec:deterministic}, and \ref{sec:lower-bounds-apx}.

\smallskip

\revcolor{
\paragraph{Single-resource: weight continuum \& primal-dual algorithm.} For the single-resource environment with only one offline node, inspired by our lower-bound result, we first focus on instances where infinitely many online nodes arrive, with weights forming a \emph{continuum}. Through a simple general-purpose blackbox reduction (\Cref{sec:reduction-to-continuum}), we establish that restricting to this class is without loss of generality, and any fractional algorithm for this instance can be transformed into a fractional algorithm for general instances without loss in performance. Given such instances, we try to identify some structural properties that an algorithm should satisfy to be optimally competitive. To formalize such properties, we rely on the notion of \emph{allocation distributions}, a key concept central to the design and analysis of our algorithms in this paper. When a new online node arrives, we consider all past (fractional) allocations of the offline node that have not been revoked, effectively capturing the complete state of the allocation by recording the distribution of these non-revoked weights.\footnote{\label{footnote:overbooking} When over-allocation occurs, it is optimal to perform buybacks from the smallest allocated weight until the capacity is no longer exceeded. Although this buyback process typically occurs at the end, it can equivalently be viewed as an online procedure that continuously performs buybacks from the smallest weight whenever the total allocation exceeds capacity. As a result, the allocation distribution can be tracked in an online fashion.}
Given this notion, in \Cref{sec:single-properties}, we identify natural invariant properties---namely \emph{``scale invariance''} (\Cref{def:SI}) and \emph{``greedy buyback''} (\Cref{def:GB})---which should be preserved by the allocation distribution of the optimal competitive online algorithm for the class of weight continuum instances. 

Leveraging these properties, in \Cref{sec:single resource primal dual}, we propose a family of primal-dual online algorithms based on a simple (almost trivial) LP relaxation of the optimal offline problem in the single-resource environment. The algorithm maintains a single dual variable $\beta$ and allocates (continuously and fractionally) to an arriving online node as long as its weight is smaller than $\beta$. Inspired by \cite{DHKMY-16}, we specify a particular way to update the dual variable $\beta$, by integrating a penalized version of the allocation distribution's quantile function using a penalty function $\pen:[0,1]\rightarrow\mathbb{R}_{\geq 0}$ (see \Cref{alg:primal dual single resource}, line~(7)). With this formulation, we show that the resulting algorithm satisfies the invariant properties discussed earlier, enabling us to analyze its competitive ratio through a primal-dual approach, provided that the penalty function meets certain conditions.\footnote{As it turns out, not only can we analyze the competitive ratio, but we can also \emph{exactly} characterize the allocation distribution of online algorithms that satisfy our invariant properties when executed on a weight continuum instance. Moreover, the resulting allocation distributions are parametric with respect to a single parameter, leading to an optimal competitive algorithm when appropriately tuned. See \Cref{remark:characterize-family} and \Cref{app:single-resource-characterization} for further details.}

To complement our results, in \Cref{sec:single-lossless-rounding}, we develop a correlated rounding technique that allows us to round this fractional online algorithm---and, more generally, \emph{any} fractional online algorithm---into an online randomized integral algorithm without any loss in expected performance. This second reduction results in a new primal-dual-based randomized integral online algorithm for the single-resource buyback problem.}

\smallskip
\paragraph{Primal-dual for matching environment: the general exponential penalty functions.} Turning to the matching environment and building on the insights from our previous investigation, we consider a natural extension of our primal-dual algorithm to the matching environment. This algorithm relies on the natural LP for maximum edge-weighted matching, \emph{ignoring} the buyback cost. Using this LP, we develop a direct primal-dual analysis for the competitive ratio of our algorithm. At a high level, similar to the single-resource setting, our algorithm maintains a dual assignment $\beta_j$ for each offline node $j$. It then (continuously and fractionally) allocates the arriving online node to the offline node with the maximum edge weight minus the dual assignment (following the complementary slackness conditions of the LP), and greedily discards the smallest edge weight assigned to this offline node by paying the buyback cost. The key remaining aspect of the algorithm is determining how to update the dual assignments and what information is required for this update.

\revcolor{There are two key technical components in the way we update the dual variables. First, we use the allocation distributions of each offline node---particularly their quantile functions---in exactly the same form as used in the single-resource environment. Intuitively, we employ a penalty function to summarize the information in the allocation distribution of each offline node by proper integrations (see the exact formula for the dual assignment $\beta_j$ in our algorithms, \Cref{alg:primal dual matching} and \Cref{alg:opt deterministic matching}). This summarized information allows us to capture an ``importance score" (the dual assignment) for each offline node based on its complete allocation state, rather than relying solely on partial states such as the total allocated weight. Second, we crucially use a parametric family of \emph{general exponential penalty functions} of the form:
$$\pen(y)=\tau\cdot(\lambda^y-1)~,$$ 
where $\lambda,\tau\geq 0$ are parameters of the penalty function. Although our primal-dual analysis for the single-resource environment uses the same specific form of penalty function, it is primarily for simplicity of exposition. As discussed in \Cref{remark:characterize-family} and \Cref{app:single-resource-characterization}, such a specific form is, in fact, \emph{not} necessary for the single resource; rather, almost any penalty function would yield an optimal competitive algorithm. In contrast, for the general matching environment, our primal-dual algorithm \emph{critically} relies on this specific penalty function, and our analysis strongly indicates that it is essentially the unique choice for obtaining the optimal competitive ratio.

To gain intuitions, by carefully selecting the parameters $\tau$ and $\lambda$ based on the buyback factor $\buybackcost$, this family of functions allows us to \emph{indirectly} encode the historical buyback costs into the derived importance scores for each offline node. While buyback decisions and costs do not explicitly appear in the standard linear programming formulation for edge-weighted bipartite matching, the mathematical properties of the generalized exponential penalty function enable their implicit incorporation into the primal-dual analysis. We view this property as the critical technical link connecting the buyback problem to other online bipartite allocation frameworks. Our resulting family of primal-dual algorithms is simple, interpretable, and optimally competitive---achieving a competitive ratio of $\optCRgen$ with appropriately chosen parameters. For more details, see \Cref{sec:matching}.

Motivated by practical considerations, we also explore several extensions to our matching model in Sections~\ref{sec:extensions-main} and \ref{apx:extension}, including general configuration allocations with combinatorial assignments, submodular welfare maximization capturing diminishing returns and substitution effects, and online assortment planning with dynamic product selections.}

\smallskip
\paragraph{Primal-dual for deterministic integral setting: simpler algorithms.} In case of integral allocations, our earlier primal-dual algorithm takes a much simpler form, as only the choice of $\pen(1)$ matters: this integral algorithm discounts each edge-weight by subtracting a fraction $\hat{\tau}(\buybackcost)\equiv \pen(1) $ of the buyback weight and then runs a greedy algorithm---unlike a naive greedy approach that matches each online node to the offline node with the maximum edge weight, which is unboundedly competitive.  By properly tuning $\hat{\tau}$ as a function of $\buybackcost$, the competitive ratio of this algorithm exactly matches the earlier lower bound $\optCRdet$ for deterministic integral algorithms. For more details, see \Cref{sec:deterministic}.

\revcolor{\paragraph{Negative buyback parameter $\buybackcost$.}  We also generalize our analysis to accommodate negative buyback factors $\buybackcost\in [-1,0]$, modeling scenarios with secondary supply channels or overflow capacities available at discounted rates. Equivalently, one can also think of scenarios where each offline node has two tiers of service, where the first tier is capacitated and obtains the assignment scores at a rate of $1$, while the second tier has infinite capacity and can only obtain the assignment scores at a rate of $-f$. See more details in \Cref{apx:negative buyback cost}. Interestingly, our \emph{exact} same competitive ratio results for the small buyback regime, for both fractional algorithms and deterministic integral algorithms, extend to this regime, which broadens the practical applicability of our theoretical framework. See \Cref{fig:negative-f} for the comparison of the competitive ratios. Notably, these new competitive ratio curves start at $\frac{e}{e-1}$ (fractional) and $2$ (deterministic integral) when $f=0$, and both become $1$ as $\buybackcost$ goes to $-1$. Moreover, as we show in \Cref{apx:negative buyback cost}, these extended competitive ratio results are again tight.}

\begin{figure}[ht]
    \centering
   \input{figs/fig-cr-negative-f}
    \caption{Competitive ratio as a function of the negative buyback factor $\boldsymbol{\buybackcost}\in [-1,0]$: The {\color{blue} blue} solid curve represents the optimal competitive ratio $\boldsymbol{\frac{e}{e-(1+f)}}$ for the matching environment under fractional allocations. The {\color{red}red} solid curve represents the optimal competitive ratio $\boldsymbol{\frac{2}{1-f}}$ for deterministic integral algorithms in the matching environment. Interestingly, these curves are extensions of the similar curves in \Cref{fig:CRs}.
    } \label{fig:negative-f}
\end{figure}

\xhdr{Numerical simulations.} In addition to theoretical results, we also run simulations on synthetic data to numerically evaluate the performance of our online algorithms. See \Cref{apx:numerics} for more details. 

Our work is connected to various lines of literature in operations research and computer science. See \Cref{sec:further-related-work} for a comprehensive discussion of further related work.

\section{Problem Formulation}
\label{sec:prelim}
We study edge-weighted online matching (the \emph{matching environment}) and its special case with a single offline node (the \emph{single-resource environment}) within the buyback framework. In this section, we describe the model and some notation, and we formalize the type of performance guarantee considered in this paper.

\xhdr{General model.} An instance of our problem consists of a bipartite graph $\bipartitegraph = (\onlinenodes, \offlinenodes, \Edge)$, where $\offlinenodes$ is the set of resources (supply), indexed by $[\totalresource]\equiv\{1,\dots,\totalresource\}$, and $\onlinenodes$ is the set of requests (demand), indexed by $[\totaltime]\equiv\{1,\dots,\totaltime\}$. The edge set $\Edge \subseteq \onlinenodes \times \offlinenodes$ denotes allocation compatibility: an edge $(i,j) \in \Edge$ indicates that request~$i$ can be matched to resource~$j$. Each edge $(i,j)$ has a non-negative weight $\weightij$; for convenience, we set $\weightij = 0$ whenever $(i,j)\notin \Edge$, and thus assume without loss of generality that $\Edge=\onlinenodes\times\offlinenodes$. We impose no additional assumptions on the weights, as if they are chosen adversarially. Resources in $\offlinenodes$ are known upfront and hence referred to as the \emph{offline nodes}. In contrast, requests arrive sequentially over discrete times $1,\dots,\totaltime$, hence called the \emph{online nodes}. Upon arrival at time~$i$, online node~$i$ reveals weights $\weightij$ for all offline nodes~$j\in\offlinenodes$. Given these weights, the decision maker immediately chooses a (possibly fractional) allocation of node~$i$---with total demand equal to one unit---across offline nodes. Unlike typical online matching problems, however, these allocations are \emph{not} irrevocable, as we explain next.

Each offline node~$j$ has one unit of total capacity,\footnote{As shown later in \Cref{app:non-uniform demand}, this assumption is without loss of generality for fractional allocations, both in the matching environment and in the special case of the single-resource environment.} meaning that at any point during times $1,\dots,\totaltime$, no more than one unit of resource $j$ can be allocated in total across all online nodes $\onlinenodes$. Allocating an amount~$dx$ of offline node~$j \in \offlinenodes$ to online node~$i$ upon its arrival yields an immediate reward of~$\weightij dx$. To maintain capacity constraints, the decision maker may revoke (buy back) any fraction of previous allocations at a linear buyback cost. Specifically, buying back an allocation of amount~$dx$ from an earlier online node~$i'<i$ and offline node~$j$ not only forfeits the initial reward~$\weight_{i'j} dx$, but also incurs an additional cost of~$\buybackcost \cdot \weight_{i'j} dx$, where~$\buybackcost \geq 0$ is the given buyback factor. The objective is to design an online algorithm that dynamically determines fractional allocations and buybacks, maximizing the \emph{profit}: the net reward from retained allocations minus buyback costs.

\revcolor{
\xhdr{Offline vs.\ online cancellations.} In our base model above, buyback decisions occur simultaneously with allocation decisions in a fully online manner. This model is, in fact, \emph{equivalent} to an alternative model with ``overbooking,'' in which allocations occur without immediate capacity constraints and buybacks are deferred until the end of horizon (after time $T$). Although the alternative model appears more flexible, the linear buyback costs imply that the globally optimal buyback strategy is to greedily revoke allocations with the smallest edge weights immediately upon capacity violation. Thus, both models are effectively equivalent.

\xhdr{Fractional vs.\ integral allocations.} We primarily focus on fractional allocations and buybacks throughout the paper. For the single-resource environment (\Cref{sec:single-resource}), our results can be extended without loss to randomized integral allocations, as we show later. In the matching environment (\Cref{sec:matching}), our fractional algorithms can be converted into randomized integral algorithms in the asymptotic regime with large capacities, incurring minimal performance loss (see the randomized rounding details in \Cref{apx:matching rounding}). Additionally, we consider the setting with integral and deterministic allocations and buybacks (\Cref{sec:deterministic}), providing guarantees valid even for unit capacities, which are relevant in certain practical applications.}

\xhdr{Performance \& benchmark.} To evaluate the performance of our online algorithms, we benchmark them against an optimal omniscient algorithm, termed the \emph{optimum offline}, which knows the entire problem instance in advance. Since the optimum offline has complete foresight, it incurs no buyback costs—it directly selects the maximum edge-weighted matching (or simply the highest-weight online node in the single-resource scenario). For a fixed buyback factor $\buybackcost \geq 0$, we measure our algorithms' performance via its \emph{competitive ratio}, defined as the worst-case ratio of the offline optimum's profit to that of the online algorithm.

\begin{definition}[Competitive Ratio]
\label{def:comp-ratio}
Given a buyback factor $\buybackcost$, an online algorithm $\ALG$ is $\approxratio(\buybackcost)$-competitive with respect to the optimum offline benchmark within a class of problem instances $\instances$ if
\[
    \sup_{\instance \in \instances}~ \frac{\OPT(\instance)}{\ALG(\instance)} \leq \approxratio(\buybackcost)~,
\]
where $\ALG(\instance)$ denotes the expected total profit of algorithm $\ALG$, and $\OPT(\instance)$ denotes the total profit of the optimum offline solution.
\end{definition}

\xhdr{Allocation distribution: density \& quantile functions.} A crucial construct in our algorithm design and analysis is the tracking of an \emph{allocation distribution} for each offline node $j \in \offlinenodes$. This distribution fully captures the ``current state'' of the node, reflecting the history of its fractional allocations to previously arrived online nodes. Specifically, consider fractional allocations in the matching environment. At any point during the execution of the algorithm, each offline node $j$ has past fractional allocations made at various weights. Since each offline node has unit capacity, these allocations define a histogram over weights---like a probability distribution---representing the fraction of the node's capacity currently allocated at each weight.\footnote{To maintain a total mass of $1$ at all times, each node's allocation distribution is initially set as a point mass at weight zero.} Formally, for each offline node $j \in \offlinenodes$, we define an \emph{allocation density function} $\allocj:\reals_+ \rightarrow \reals_+$, where $\allocj(\weight)$ is the density (fraction) of node $j$ currently allocated at weight $\weight$ (excluding revoked allocations). Additionally, we define the \emph{allocation quantile function} $\callocj:\reals_+ \rightarrow[0, 1]$, where
\[
\callocj(\weight) = \int_\weight^\infty \allocj(t)\,\dd t
\]
is the fraction of node $j$ allocated at weights at least $\weight$ (again, excluding revoked allocations).

\section{Primal-Dual Algorithm for Single-Resource Environment}
\label{sec:single-resource}
\revcolor{We begin by considering the single-resource environment studied in \cite{AK-09, BHK-09}, under both fractional and integral allocations. We introduce a new systematic approach based on the \emph{primal-dual} framework to solve this special case, resulting in a simple (and interpretable) optimal competitive online algorithm. This perspective---and the resulting algorithm---will be the key to extending this result to the matching environment, leading to our main technical results in Sections~\ref{sec:matching} and \ref{sec:deterministic}. Throughout this section, we drop subscript~$j$ for the single offline node in our notation; for example, we replace weight $\weightij$ with $\weighti$, and functions $\allocj(\cdot)$ and $\callocj(\cdot)$ with $\alloc(\cdot)$ and $\calloc(\cdot)$.

\subsection{Overview of our approach}
\label{sec:single-resource-overview}
Although the single-resource problem appears to be a relatively simple special case, obtaining an optimal (or even constant) competitive online algorithm in this setting is highly nontrivial. Consider, for example, a greedy algorithm that fractionally and myopically allocates the offline resource to an arriving online node of weight~$\weight$ by canceling the previous allocation of weight $\weight'$ (if necessary), provided the net reward $\weight - (1+\buybackcost)\weight'$ is nonnegative. A simple instance shows that this algorithm has an unbounded competitive ratio:  suppose online nodes have weights $\weighti = (1+\buybackcost+\varepsilon)^i,~i=1,\ldots,\totaltime$ for an arbitrary small $\varepsilon>0$. The optimal offline strategy allocates only to the last node, collecting a reward of $(1+\buybackcost+\varepsilon)^\totaltime$. By contrast, the greedy algorithm allocates to every single online node (and buys back the previous one), obtaining a net reward of $1+\buybackcost+\mathcal{O}(\varepsilon)$. As $\totaltime\to \infty$, the ratio between the two rewards becomes unbounded. 

This example underscores that an online algorithm must carefully trade off immediate reward against future (potential) buyback costs, given the uncertainty in future arriving weights. Indeed, as the following lower bound (adapted from \citealp{AK-09}, formally proved in \Cref{sec:apx-missing lower bound single resource}) indicates, this trade-off imposes an inherent limit on the competitive ratio, even under fractional allocations.
\begin{restatable}{proposition}{lowerboundsingleresource}
\label{prop:lower bound single resource}
In the single-resource environment,
for any buyback factor $\buybackcost\geq 0$,
no fractional online algorithm can obtain a competitive ratio better than
$-\Lambertterm$.
\end{restatable}
As alluded to earlier, to design a competitive online algorithm that optimally balances this trade-off and matches the above lower bound (under both fractional and integral allocations), we adopt a systematic approach based on the primal-dual framework. Our approach consists of the following steps:

\begin{enumerate}[leftmargin=14pt,label=(\roman*)]
\smallskip
    \item \textit{Reduction to a continuum of weights (\Cref{sec:reduction-to-continuum}):} 
    Focusing first on fractional allocations, we show that the general problem---where online nodes arrive with arbitrary weights---can be reduced to a special case in which a spectrum of weights (infinitely many online nodes) arrive continuously  until an unknown stopping point. We refer to this special instance as a ``truncated weight continuum'' (formally, \Cref{def:weight-continuum}).

\smallskip
    \item \textit{Identifying structural properties of good algorithms (\Cref{sec:single-properties}):} 
    For the restricted class of truncated weight continuum instances, we identify key natural properties that the allocation distribution of an effective online algorithm should satisfy. These properties reveal the ``correct'' structural form of a desired algorithm, and help us to characterize its allocation distribution under truncated weight continuum instances.
    
\smallskip
    \item \textit{Fractional primal-dual algorithm (\Cref{sec:single resource primal dual}):} 
    Using the primal linear program of the optimal offline and its dual, along with a carefully constructed dual solution, we propose a family of primal-dual algorithms satisfying the identified structural properties. Then, through a novel dual fitting analysis, we show that appropriate parameter selection yields an optimal competitive fractional online algorithm.

\smallskip
    \item \textit{Integral-to-fractional reduction (\Cref{sec:single-lossless-rounding}):} 
    Lastly, by designing a randomized rounding procedure correlated across time, we show that any fractional online algorithm can be converted into an integral algorithm without loss in expectation. This yields an optimal competitive integral primal-dual algorithm.
\end{enumerate}

    

\subsection{Designing \& analyzing an optimal competitive primal-dual algorithm}
\label{sec:design and analyze optimal algorithm single-resource}
We now elaborate our primal-dual approach for designing and analyzing algorithms in the single-resource environment, initially focusing on fractional allocations and then extending our algorithm to integral allocations. Our investigation will lead to the following main result of this section.

\begin{informal}[Theorem (informal).]
In the single-resource environment, for any buyback factor $f\geq 0$, there exist primal-dual-based online algorithms---under both fractional and randomized integral allocations---that achieve the optimal competitive ratio of $-\Lambertterm$ within the class of all instances $\instances$.
\end{informal}
\subsubsection{Reduction to truncated weight continuum instances}
\label{sec:reduction-to-continuum} Central to the proof of the lower bound in \Cref{prop:lower bound single resource} is a truncated weight continuum instance introduced earlier. This special class of instances, denoted by $\continstances \triangleq \{\instance_\totaltime\}_{\totaltime \in \reals_+}$ and formally defined below, plays a crucial role in our algorithmic development.

\begin{definition}[Truncated Weight Continuum Instance]
\label{def:weight-continuum}
Given $\totaltime \in \reals_+$, the \emph{truncated weight continuum instance $\instance_\totaltime \in \continstances$} consists of a continuum of online nodes $\onlinenodes = [0, \totaltime]$, where each node indexed by $\weight \in \onlinenodes$ arrives continuously at time $\weight$ with weight equal to $\weight$. The arrivals conclude at time~$\totaltime$.
\end{definition}

Now consider a general adversarial instance with arriving weights $\weight_1, \weight_2, \ldots$ and let $\weightmax \triangleq \max_{i} \weight_i$. Suppose we have access to an online fractional algorithm $\ALG$ that achieves competitive ratio $\Gamma$ for the restricted class of instances $\continstances$. As the first step in our approach, roughly speaking, we show that by tracking $\weightmax$ online and dynamically feeding \emph{proxy online nodes} to $\ALG$ from the evolving truncated weight continuum instance $I_{\weightmax}$, we can convert the decisions of $\ALG$ into a feasible fractional online algorithm with weakly higher net reward, thus achieving the same competitive ratio $\Gamma$ for general instances. See \Cref{alg:algorithm reduction single resource} for details. The following proposition formally establishes this blackbox reduction from arbitrary instances to $\continstances$.

\begin{restatable}{proposition}{thmfractionalreduction}
\label{prop:algorithm reduction single resource}
In the single-resource environment, for any online algorithm $\ALG$ achieving competitive ratio $\approxratio$ within the truncated weight continuum instances $\continstances$, \Cref{alg:algorithm reduction single resource} (using $\ALG$ as input) is a feasible fractional online algorithm and achieves the same competitive ratio $\approxratio$ within all instances $\instances$.
\end{restatable}

\begin{algorithm}[hbt]
\caption{\revcolor{Reduction from general instances to $\continstances$ for fractional algorithms (single-resource)}}
\revcolor{
\label{alg:algorithm reduction single resource}
    \SetKwInOut{Input}{Input}
    \SetKwInOut{Output}{Output}
    \Input{Fractional online algorithm $\ALG$ for truncated weight continuum instances $\continstances$}

    \vspace{1mm}

    Initialize $\weightmax \gets 0$.

    \vspace{1mm}

    \For{each online node $i \in \onlinenodes$}{

        \vspace{1mm}

        \If{$\weighti > \weightmax$}{
            \vspace{1mm}
            
            \For{each proxy weight $\weight \in (\weightmax, \weighti]$}{
            
                \vspace{1mm}
                
                Feed a proxy online node $\weight$ to  $\ALG$ (as arriving input in a weight continuum instance).
                
                \vspace{1mm}
                
                \If{$\ALG$ allocates fraction $\alloc$ to the proxy online node $\weight$}{
                
                    \vspace{1mm}
                    
                    Allocate fraction $\frac{\weight}{\weighti}\alloc$ to actual online node $i$.
                    
                    \vspace{1mm}
                    
                    Preserve capacity by greedily buying back from the smallest weights (if necessary).
                }
            }

            \vspace{1mm}

            Update $\weightmax \gets \weighti$.
        }
    }
    }
\end{algorithm}

\begin{proof}{\emph{Proof sketch of \Cref{prop:algorithm reduction single resource}}.}
Our proof proceeds in two steps. First, we introduce a variant of our model in which the decision maker can choose a \emph{discounted price} (at most the arriving weight) for each allocation. Specifically, when a node arrives with weight $\weight_i$, the algorithm can allocate at any discounted price $\weight \leq \weight_i$, rather than being restricted to exactly $\weight_i$. The reward and future buyback costs are then calculated based on $\weight$ instead of $\weight_i$.  We show that algorithms designed for truncated weight continuum instances can easily adapt to this discounted allocation model. In the second step, we establish an equivalence between the base model and the discounted allocation model using \Cref{alg:algorithm reduction single resource}.  This algorithm proportionally maps allocations from discounted price $\weight$ to weight $\weight_i$, allocating a fraction $\frac{\weight}{\weight_i} x$ at $\weight_i$ when a fraction $x$ was allocated at discounted price $\weight$. See \Cref{app:proof-reduction} for more details about the algorithm and the formal proof of \Cref{prop:algorithm reduction single resource}.
\end{proof}

\subsubsection{Natural structural properties of allocation distributions}
\label{sec:single-properties}
We now restrict our attention to the class of truncated weight continuum instances $\continstances$ and investigate the allocation distributions induced by fractional online algorithms when we run them on instances in this class. We hypothesize two natural properties that an optimally competitive online algorithm's allocation distribution should satisfy---or, more generally, properties that any candidate algorithm aspiring to be optimal competitive would naturally exhibit.}

To formally define these properties, let $\callocwpsup(\cdot)$ denote the allocation quantile function $\calloc(\cdot)$ of a given online fractional algorithm at the moment immediately after processing the online node $\weightprimed$ on a truncated weight continuum instance $\instance_{\totaltime}$ with $T > \weightprimed$. Moreover, define $\buybackweightwpsup \triangleq \inf\{\weight \in \reals_+ : \callocwpsup(\weight) < 1\}$
as the smallest allocated weight at this moment.\footnote{Equivalently, $\callocwpsup(\cdot)$ (resp., $\buybackweightwpsup$) can be viewed as the allocation quantile function $\calloc(\cdot)$ (resp., the smallest allocated weight) at the end of processing the truncated continuum instance $\instance_{\weightprimed}\in\continstances$.} 
Without loss of generality, we focus on instances with truncation points $T \geq 1$. We then define the following properties.

\begin{definition}[Scale Invariance]
\label{def:SI}
An online algorithm satisfies the \emph{Scale Invariance (\texttt{SI})} property if its allocation quantile function $\calloc(\cdot)$ satisfies
\begin{align}
\tag{\textsc{SI}}
\label{eq:allocation invariant property}
   \forall \weightprimed,\weight \in \reals_+: \quad \callocwpsup(\weight) =
    \calloconesup\left(\frac{\weight}{\weightprimed}\right).
\end{align}

\end{definition}
\begin{definition}[Greedy buyback]
\label{def:GB}
An online algorithm satisfies the \emph{Greedy Buyback (\texttt{GB})} property if its allocation quantile function $\calloc(\cdot)$ satisfies
\begin{align}
\tag{\textsc{GB}}
\label{eq:greedily allocation property}
   \forall\weightprimed\in[\buybackweightonesup, 1],\weight 
\in [\buybackweightonesup, \weightprimed]:~~~ \calloconesup(\weight) -\callocwpsup(\weight) =  
    1 - \callocwpsup\left(\buybackweightonesup\right)
\end{align}
\end{definition}

To explain the first property (\Cref{def:SI}), we start with an intuitive observation: from the perspective of an online algorithm, all truncated weight continuum instances $\instance_\totaltime\in\continstances$ appear identical before the last node $\totaltime$ arrives. Moreover, any instance $\instance_\totaltime$ can be viewed as a scaled (stretched or compressed) version of another instance $\instance_{\totaltime'}$ by rescaling arrival times (equivalently, weights) by a factor of $\frac{T}{T'}$. Therefore, an optimally competitive algorithm should treat these instances equivalently after appropriate rescaling. This intuition is precisely captured by property~\eqref{eq:allocation invariant property} in \Cref{def:SI}: the allocation quantile function $\callocwpsup(\cdot)$ after processing node $\weightprimed$ is a scaled version of the quantile function $\calloconesup(\cdot)$ after processing node $1$, with scaling factor~$\weightprimed$.

The second property (\Cref{def:GB}) follows directly from a simple observation: whenever an online algorithm must buy back to free capacity, it is always optimal---given any future sequence of online arrivals---to greedily buy back allocation from previously allocated online nodes with the \emph{smallest weight}. Now consider the interval between the arrivals of online nodes $\weightprimed \in [\buybackweightonesup, 1]$ and node $1$. First, the definition of $\buybackweightonesup$ implies that the algorithm has already revoked all allocations at weights below $\buybackweightonesup$, totaling a mass of $1 - \callocwpsup(\buybackweightonesup)$, and has reallocated this mass to newly arrived online nodes in $[\weightprimed, 1]$. Second, due to the greedy buyback strategy, no allocations within the interval $[\buybackweightonesup, \weightprimed]$ were revoked when nodes in $[\weightprimed, 1]$ arrived. Together, these facts imply that the quantile function $\callocwpsup(\weight)$ increases uniformly for all $\weight$ in the interval $[\buybackweightonesup, \weightprimed]$ by $1 - \callocwpsup(\buybackweightonesup)$ to obtain $\calloconesup(\weight)$, precisely capturing property~\eqref{eq:greedily allocation property} in \Cref{def:GB}.

\revcolor{So far, we have identified two structural properties that allocation distribution of ``good'' online algorithms should satisfy under truncated weight continuum instances: the scale invariance~\eqref{eq:allocation invariant property} and the greedy buyback ~\eqref{eq:greedily allocation property}. Both properties appear to be necessary for any optimally competitive online algorithm. Now, a natural question arises: \emph{Can we identify a simple, interpretable family of algorithms satisfying these properties?}

\smallskip
Ideally, such algorithms should not only satisfy our natural properties, but also provably include the optimal competitive algorithm and easily extend to more complex settings (e.g., the matching environment). Next, we provide an affirmative answer to this question, demonstrating that these desirable features indeed hold.}

\subsubsection{A family of optimal competitive fractional primal-dual algorithms}
\label{sec:single resource primal dual}
Inspired by primal-dual algorithms developed for various variants of online bipartite matching~\citep[cf.][]{meh-13}, we propose a parametric family of primal-dual algorithms that satisfy the two natural structural properties introduced in \Cref{sec:single-properties}. The core idea is simple: for a given instance $I_T\in\continstances$, consider the following (almost trivial) linear program and its dual, which together characterize the offline optimal allocation:
\begin{align*}
\tag{$\mathcal{P}_{\textsc{OPT-single-cont}}$}
\label{eq:primal-dual-single-cont}
\arraycolsep=1.4pt\def\arraystretch{1}
\begin{array}{lllllll}
\max~~~~&\displaystyle \int_{0}^\totaltime\weight\cdot \alloc(\weight)\,\dd \weight&~~\text{s.t.} & & \quad\quad\quad\quad \text{min}~~~~ &\displaystyle\offlinedual&~~\text{s.t.} \\[1.2em]
 &\displaystyle\int_{0}^{\totaltime}\alloc(\weight)\,\dd \weight \leq 1& &
& &\offlinedual\geq \weight~,&~~\weight\in[0,T]\\[1.2em]
&\alloc(\weight)\geq 0~,&~~\weight\in[0,T]\qquad\qquad & & & \offlinedual\geq 0& 
\end{array}
\end{align*}
Our primal-dual algorithm maintains a feasible dual assignment $\offlinedual$ for the above linear program throughout its execution. The algorithm is parameterized by a monotone increasing \emph{penalty function} $\pen:[0,1]\rightarrow\reals_+$, satisfying $\pen(0)=0$ and $\pen(1)\geq 1$. Given this penalty function, the dual assignment is directly related to the current allocation quantile function as follows:
\begin{align}
\label{eq:dual assignment}
\tag{\textsc{Dual-single}}
    \offlinedual = \displaystyle\int_0^\infty
    \pen\left(\calloc(\weight)\right)\,\dd\weight~.
\end{align}
To maintain dual feasibility, when a new online node $\weightprimed \in \reals_+$ arrives, the algorithm continuously allocates to this node (greedily buying back from the smallest allocated weights) until the dual assignment matches the arriving weight, i.e., until $\offlinedual = \weightprimed$. This procedure always terminates because $\pen(1) \geq 1$ implies $\int_{0}^{\weightprimed}\pen(1)\,\dd\weight \geq \weightprimed$, 
ensuring that the stopping condition $\offlinedual = \weightprimed$ is eventually met. Since the dual variable $\offlinedual$ only increases thereafter, the feasibility remains guaranteed. This process is formalized in \Cref{alg:primal dual single resource}. Interestingly, the class of algorithms provided by \Cref{alg:primal dual single resource} for various choices of the penalty function $\pen$ satisfies the structural properties we discussed earlier. We sketch the proof below and defer the detailed proof to \Cref{app:proof-properties}.

\begin{algorithm}
\caption{Primal-dual fractional online algorithm for 
$\continstances$ (single-resource)}
\label{alg:primal dual single resource}
    \SetKwInOut{Input}{input}
    \SetKwInOut{Output}{output}
 \Input{Penalty function $\pen$
 }
 
 \vspace{2mm}
 
 Initialize $\offlinedual \gets 0$, and for all $\weight \in \reals_+$, set $
    \alloc(\weight) \gets \diracdeltafunction_0(\weight)$ and $\calloc(\weight) \gets \indicator{\weight =0}$.
 
 \vspace{1mm}
 
 {\color{royalazure}\tcc{$\diracdeltafunction_{\weight'}(\cdot)$
 is the Dirac delta function centered at $\weight'$.}}
 
 \vspace{1mm}
 
 \For{each online node $\weightprimed\in[0,\totaltime]$}
 {
 
 \vspace{1mm}
 
 \While{$\offlinedual < \weightprimed$}
 {
 \vspace{1mm}
 Buy back infinitesimal fraction $dx$ from the smallest allocated weight $\buybackweight$, i.e., $\alloc(\weight) \gets
 \alloc(\weight) - dx \cdot\delta_{\buybackweight}(\weight)$~~{\color{royalazure}\tcc{Formally, $\buybackweight = \inf\{\weight\in\reals_+:
 \calloc(\weight) < 1\}$}}
 
 \vspace{1mm}
 Allocate fraction $dx$ to the current online node $\weightprimed$, i.e., $\alloc(\weight) \gets \alloc(\weight) + dx \cdot\delta_{\weightprimed}(\weight)$
 
 \vspace{1mm}
 Update the allocation quantile function $\forall \weight \in [\buybackweight,\weightprimed]: \displaystyle\calloc(\weight) \gets \int_{\weight}^{\infty}
\alloc(\weight')\,\dd\weight'$

 \vspace{1mm}
Update the dual assignment $\displaystyle\offlinedual\gets 
\int_{0}^{\infty}
\pen(\calloc(\weight))\,\dd\weight$}
 }
\end{algorithm}

\begin{restatable}{proposition}{propproperties}
\label{prop:primal-dual-properties}
In the single-resource environment, given any monotone increasing function $\pen:[0,1]\rightarrow\mathbb{R}_{\geq 0}$ with $\pen(0)=0$ and $\pen(1)\geq 1$, \Cref{alg:primal dual single resource} instantiated with penalty function $\pen$ satisfies properties~\eqref{eq:greedily allocation property} and~\eqref{eq:allocation invariant property} when executed on truncated weight continuum instances $\continstances$.
\end{restatable}


\begin{proof}{\emph{Proof sketch of \Cref{prop:primal-dual-properties}.}}
By construction, \Cref{alg:primal dual single resource} clearly satisfies the greedy buyback property~\eqref{eq:greedily allocation property}. To see why it also satisfies the scale invariance property~\eqref{eq:allocation invariant property}, note that the algorithm continues allocating upon the arrival of weight $\weightprimed$ as long as the following allocation condition holds:
\begin{align*}
    \weightprimed >& \offlinedual =\int_0^{\infty}\pen\left(\calloc(\weight)\right)\,\dd\weight= \weight' \int_0^{\infty}\pen\left(\calloc\left(\frac{\weight}{\weight'}\cdot \weight'\right)\right)\,\dd\left(\frac{\weight}{\weight'}\right),
\end{align*}
for any $\weight' \in (0,\weightprimed)$. Equivalently, the algorithm continues allocating upon the arrival of weight $\weightprimed$ if the following condition holds:
$$
\frac{ \weightprimed}{\weight'}>\int_0^{\infty}\pen\left(\calloc\left(\frac{\weight}{\weight'}\cdot \weight'\right)\right)\,\dd\left(\frac{\weight}{\weight'}\right)~.
$$
The above condition matches precisely the allocation criterion of \Cref{alg:primal dual single resource} after scaling the weight continuum instance $(0,\weightprimed)$ by factor $\weight'$, that is, after transforming $\weight\rightarrow \frac{\weight}{\weight'}$ and correspondingly changing $\calloc(\weight)\rightarrow \calloc(\weight\cdot \weight')$. Now, we can couple the execution of \Cref{alg:primal dual single resource} on an instance with arriving weights in $(0,\weightprimed)$ to its execution on another instance with weights in $(0,\frac{\weightprimed}{\weight'})$ via the mapping $\weight\leftrightarrow \frac{\weight}{\weight'}$. Consequently, the allocation quantile functions in both executions share the same shape, differing only by a rescaling factor of $\weight'$, that is,
\begin{align*}
\forall \weight, \weight':\quad
\calloc^{(\weightprimed)}\left(\weight\cdot\weight'\right) = \calloc^{(\frac{\weightprimed}{\weight'})}\left(\weight\right)\quad\Longrightarrow\quad
\calloc^{(\weightprimed)}\left(\weight\right) = \calloc^{(\frac{\weightprimed}{\weight'})}\left(\frac{\weight}{\weight'}\right).
\end{align*}
Setting $\weight'=\weightprimed$ completes the proof of the scale invariance property~\eqref{eq:allocation invariant property}. \hfill \halmos
\end{proof}

\revcolor{Given these two properties of the allocation distribution of \Cref{alg:primal dual single resource}---namely the scale invariance~\eqref{eq:allocation invariant property} and the greedy buyback~\eqref{eq:greedily allocation property}---we are ready to analyze the competitive ratio of \Cref{alg:primal dual single resource}. We first consider a specific choice of penalty function that simplifies the primal-dual analysis. This particular choice, termed as the \emph{generalized exponential function}, also plays a critical role in our main result for the matching environment in \Cref{sec:matching}. Later, we extend our single-resource analysis to more general penalty functions (see \Cref{remark:characterize-family}).

\begin{proposition}
\label{prop:optimal-single-resource-fractional}
In the single-resource environment, for any buyback factor $f\geq 0$, consider \Cref{alg:primal dual single resource} with the penalty function $\pen(\calloc)=
\frac{1}{\log(\thresholdweight)}(\thresholdweight^\calloc - 1)$, where $\thresholdweight = -(1 + \buybackcost) \Lambertterm$. This algorithm achieves the competitive ratio of $-\Lambertterm$ within the class of truncated weight continuum instances $\continstances$.
\end{proposition}
\begin{proof}{\emph{Proof.}}
Consider running \Cref{alg:primal dual single resource} on an instance $I_T\in\continstances$. To analyze the competitive ratio, we employ `dual fitting' via the primal and dual LP pair for the optimal offline, formally described as \ref{eq:primal-dual-single-cont}. Specifically, we construct a feasible dual solution whose objective value is no more than $-\Lambertterm$ times the primal objective. This, combined with LP weak duality, completes the proof.

We use the following dual assignment with $\pen(\calloc)=
\frac{1}{\log(\thresholdweight)}(\thresholdweight^\calloc - 1)$, which is also used by \Cref{alg:primal dual single resource}:
$$\hat{\offlinedual} = \int_{0}^{\infty} \pen(\calloc^{\textrm{final}}(\weight))\,\dd\weight~,$$
where $\calloc^{\textrm{final}}(\weight)$ is the final allocation quantile function of \Cref{alg:primal dual single resource}, after processing the last arrival in $I_T$.

First, we show  feasibility of this dual assignment. For our choice of the penalty function $\pen$, we have
\begin{align*}
 \pen(1) = \frac{\thresholdweight -1}{\log( \thresholdweight)} \overset{(a)}{\geq} \frac{\thresholdweight}{\log(\thresholdweight) +1} \overset{(b)}{=} 1+f\geq 1,
\end{align*}
where the inequality~(a) holds because for positive values of $\thresholdweight$ we have $\thresholdweight-1\geq \log(\thresholdweight)$, and equality~(b) holds as $\thresholdweight = -(1 + \buybackcost) \Lambertterm$. 
As a result of inequality $\pen(1)\geq 1$, the algorithm keeps allocating to an arriving online node with weight $\weightprimed\in [0,T]$ until the dual (maintained by the algorithm) matches the weight $\weightprimed$, that is $\beta=\weightprimed$ (as otherwise, $\beta<\weightprimed$, but we know $\int_{0}^{\weightprimed}\pen(1)\,\dd\weight \geq \weightprimed$). After this point, $\beta$ can only increase; thus, for any $\weightprimed \in [0,T]$, we have $\hat{\beta} \geq \weightprimed$.

Next, we compare the change in the primal objective, denoted by $\Delta(\text{Primal})$, and the dual objective, denoted by $\Delta(\text{Dual})$, as the algorithm allocates an infinitesimal fraction $dx$ of online node $\weightprimed\in [0,T]$ and buying back the same fraction from online node with weight $\weight'$. We have:
\begin{align*}
    \Delta(\text{Primal}) &= 
    \left(\weightprimed - (1+\buybackcost)\weight'\right)dx~.
\end{align*}
Furthermore, let $\calloc(\weight)$ be the quantile allocation function of \Cref{alg:primal dual single resource} before processing this infinitesimal allocation of online node $\weightprimed$. Then the change in the dual objective is upper-bounded as follows :
\begin{align*}
    \Delta(\text{Dual})&= \int_{\weight'}^{\weightprimed} 
    \penderivative(\calloc(\weight))\,\dd\weight\\
    &\overset{(c)}{\leq}
    \left(
    \log(\thresholdweight)
    \left(
    \weightprimed -
    \displaystyle\int_0^\infty \pen(\calloc(\weight))\,\dd\weight
    \right)
    +
    \int_{\weight'}^{\weightprimed} 
    \penderivative(\calloc(\weight))\,\dd\weight
    \right)\,dx
    \\
    &\overset{(d)}{\leq}
    \left(
    \log(\thresholdweight)
    \left(
    \weightprimed -
    \displaystyle\int_0^{\weight'} \pen(1)\,\dd\weight
    -
    \displaystyle\int_{\weight'}^{\weightprimed} \pen(\calloc(\weight))\,\dd\weight
    \right)
    +
    \int_{\weight'}^{\weightprimed} 
    \penderivative(\calloc(\weight))\,\dd\weight
    \right)\,dx
    \\
    &=
    \left(
    \log(\thresholdweight)
    \weightprimed -
    \left(\thresholdweight - 1\right)\weight'
    -\displaystyle\int_{\weight'}^{\weightprimed} 
    \left(
    \thresholdweight^{\calloc(\weight)} - 1
    \right)\,\dd\weight
    +
    \int_{\weight'}^{\weightprimed} 
    \thresholdweight^{\calloc(\weight)}
    \,\dd\weight
    \right)\,dx
    \\
    &=
    \left((\log(\thresholdweight) + 1)\weightprimed
    -\thresholdweight
    \cdot\weight' \right)dx \overset{(e)}{=} \frac{\thresholdweight}{1+\buybackcost}\left(\weightprimed -(1+\buybackcost)\weight'\right)dx =-\Lambertterm \Delta(\text{Primal})~,
\end{align*}
where inequality~(c) holds as $\int_{0}^{\infty} \pen(\calloc(\weight))\,\dd\weight=\offlinedual= \weightprimed$ after processing  online node with weight $\weightprimed$ ($\offlinedual$ is the dual maintained by the algorithm), and inequality~(d) holds due to the greedy buyback property~\eqref{eq:greedily allocation property}. Lastly, equality~(e) holds as $\thresholdweight = -(1 + \buybackcost) \Lambertterm$. Summing over all allocations, we conclude that the dual objective is at most $-\Lambertterm$ times the primal objective, which finishes the proof. \hfill\halmos
\end{proof}

\medskip
\noindent\textbf{Optimal competitive fractional online algorithm for general instances.}
By combining \Cref{prop:algorithm reduction single resource} and \Cref{prop:optimal-single-resource-fractional}, we conclude that \Cref{alg:algorithm reduction single resource}, using \Cref{alg:primal dual single resource} as input (with an appropriate penalty function, as stated in \Cref{prop:optimal-single-resource-fractional}), yields an optimal competitive fractional online algorithm for the single-resource environment within general instances $\instances$.\footnote{We also consider a straightforward generalization of this result to settings with non-unit demand sizes and arbitrary offline resource capacities. We show through a simple reduction that the optimal competitive ratio remains achievable. See \Cref{app:non-uniform demand} for details.} In the next subsection, we introduce an additional reduction to transform this fractional algorithm (in a black-box manner) into a randomized integral algorithm, without loss in expected performance. Thus, our result extends to an optimally competitive (randomized) integral online algorithm. Before proceeding, we highlight two important remarks:

\begin{remark}[Direct primal-dual fractional online algorithm]
\label{remark:direct-primal-dual}
Although the optimal competitive fractional algorithm can be obtained using the continuum-based reduction of \Cref{sec:reduction-to-continuum}, as described above, it is also possible to directly run \Cref{alg:primal dual single resource} on general instances. This direct version, formalized as \Cref{alg:primal-dual-direct-single} in \Cref{app:primal-dual-direct-single}, is itself optimal competitive and admits a nearly identical primal-dual analysis. See \Cref{app:primal-dual-direct-single} for further details.
\end{remark}

\begin{remark}[Characterization of allocation distributions]
\label{remark:characterize-family}
Our previous primal-dual analysis (in particular, the proof of \Cref{prop:optimal-single-resource-fractional}) relies on a specific choice of the penalty function $\pen$, which we also employ in the more general matching environment in \Cref{sec:matching}. In \Cref{app:single-resource-characterization}, we present a refined analysis specifically tailored to the single-resource environment, showing that \Cref{alg:primal dual single resource} remains optimally competitive for \emph{any} penalty function $\pen$ satisfying:
\begin{align*}
 \thresholdweight = 
    \pen(1) + 
    \displaystyle\int_1^{\thresholdweight}
    \pen\left(
    1 - \frac{\log(\weight)}{\log(\thresholdweight)}
    \right)d\weight~,
\end{align*}
with $\thresholdweight = -(1 + \buybackcost) \Lambertterm$. Moreover, we extend this result to characterize a broader \emph{family} of optimal competitive primal-dual algorithms for the single-resource environment. We provide an \emph{exact characterization} of allocation distributions induced by algorithms satisfying properties~\eqref{eq:allocation invariant property} and~\eqref{eq:greedily allocation property} when executed on instances $\continstances$, and precisely determine their competitive ratios. This characterization identifies a single-parameter class of allocation distributions (parametric in $\thresholdweight$, see \Cref{fig:canonical} in \Cref{app:single-resource-characterization}), which includes the optimal competitive algorithm by setting $\thresholdweight = -(1 + \buybackcost) \Lambertterm$. For details and an alternative proof of \Cref{prop:optimal-single-resource-fractional}, see \Cref{app:single-resource-characterization}.
\end{remark}
}



\revcolor{
\subsubsection{Reduction from integral to fractional allocations}
\label{sec:single-lossless-rounding} We now briefly describe how to convert any fractional online algorithm for general instances into a randomized integral online algorithm, preserving its expected performance (on every instance) and thus maintaining its competitive ratio.

This integral-to-fractional reduction relies on a simple, correlated, lossless randomized rounding algorithm. This algorithm first samples a random seed $\eta\sim\mathcal{U}[0,1]$ at the outset. Then, given a sequence of fractional allocations $\tilde{\alloc}_1,\dots,\tilde{x}_T$ produced (in an online fashion) by the fractional online algorithm, it allocates randomly and integrally to each online node $i$ upon its arrival in a correlated manner such that $\prob{\text{allocation at time } i} = \tilde{\alloc}_i$, and the probability of buying back the allocation made at time $i$ matches exactly the fraction of node $i$ bought back by the fractional online algorithm. See \Cref{alg:algorithm-randomized-reduction-single-resource} for formal details. This leads directly to the following proposition, with the proof deferred to \Cref{app:proof-integral-to-fractional-single}.

\begin{algorithm}[hbt]
\caption{\revcolor{Reduction from integral to fractional algorithms for general instances (single-resource)}}
\label{alg:algorithm-randomized-reduction-single-resource}
\SetKwInOut{Input}{Input}
\SetKwInOut{Output}{Output}
\Input{Fractional online algorithm $\ALG$ for general instances}

\vspace{1mm}

Sample $\randseed \sim \mathcal{U}[0,1]$.

\vspace{1mm}

\For{each online node $i \in \onlinenodes$}{

    \vspace{1mm}
    
    Let $\allocTilde_i$ be the fractional amount allocated to online node $i$ by $\ALG$.
    
    \vspace{1mm}

    Allocate integrally to online node $i$ (buying back if necessary) if there exists an integer $\ell \in \naturals$ satisfying:
    $
    \displaystyle\sum_{t=1}^{i-1}\allocTilde(t) \leq \ell + \randseed - 1 < \sum_{t=1}^{i}\allocTilde(t).
    $
    }
\end{algorithm}

\vspace{-3mm}
\begin{restatable}{proposition}{thmintegralreduction}
\label{prop:integral-reduction-single-resource}
In the single-resource environment,
for any fractional online algorithm $\ALG$ with the competitive ratio $\approxratio$ within all instances $\instances$,
\Cref{alg:algorithm-randomized-reduction-single-resource} (with $\ALG$ as input)
is a feasible randomized integral online algorithm and has the competitive ratio $\approxratio$ within all instances $\instances$.
\end{restatable}

By combining \Cref{prop:algorithm reduction single resource}, \Cref{prop:optimal-single-resource-fractional}, and \Cref{prop:integral-reduction-single-resource}, and using the composition of Algorithms~\ref{alg:algorithm reduction single resource}, \ref{alg:primal dual single resource}, and \ref{alg:algorithm-randomized-reduction-single-resource}, we achieve an optimal competitive primal-dual-based randomized integral online algorithm for the single-resource environment. We conclude this section with an intriguing related observation. 

\begin{remark}[Comparison to \citealp{AK-09}]
\label{remark:primal-dual-BK}
One might wonder how our algorithm relates to that of \cite{AK-09}. In \Cref{app:BK implementation}, we address this question by slightly adjusting our framework and applying a slightly different lossless online rounding scheme directly to our optimal competitive fractional algorithm from \Cref{prop:algorithm reduction single resource} for instances in $\continstances$, and simplifying the resulting algorithm. Remarkably, we show that the derived primal-dual-based algorithm becomes \emph{exactly} identical to the direct randomized algorithm of \cite{AK-09}. Thus, somewhat surprisingly, their optimal competitive algorithm is essentially a primal-dual algorithm \emph{in disguise}.
\end{remark}

}



\section{Primal-Dual Algorithms for Matching Environment \revcolor{(and Beyond)}}
\label{sec:matching}
In this section, we turn to the more general matching environment and present our main result: an optimally competitive online algorithm for edge-weighted online matching with costly cancellations for all factors $\buybackcost\geq 0$. Our algorithm (\Cref{alg:primal dual matching}) is a natural generalization of the primal-dual algorithm introduced earlier for the single-resource environment (\Cref{alg:primal dual single resource}, analyzed in \Cref{sec:single-resource}).

\subsection{Algorithm description}
Inspired by the primal-dual algorithms developed for the single-resource environment, \Cref{alg:primal dual matching} maintains a dual assignment \revcolor{$\offlinedualj = \int_0^\infty \pen(\calloc_j(\weight))\,\dd\weight$} for each offline node $j$. When a new online node $i$ arrives, the algorithm continuously and fractionally allocates this online node (with a unit total demand) to the offline node $j^*$ maximizing $\weight_{ij^*} - \offlinedual_{j^*}$, while breaking ties arbitrarily. During this process, the algorithm updates only the dual assignment $\offlinedual_{j^*}$. Note that the choice of $j^*$ may shift during this process as the dual variables evolve. Now, this allocation process continues until either node $i$'s demand is fully allocated or the value $\weightij - \offlinedualj$ becomes negative for every offline node $j \in \offlinenodes$. Also, as before, the algorithm greedily buys back capacity from offline node $j^*$ (if required), always revoking fractions allocated at the smallest weights first.\footnote{As a sanity check, when there is only one offline node, \Cref{alg:primal dual matching} reduces to \Cref{alg:primal dual single resource}.}

\begin{algorithm}
\caption{Primal-dual fractional online algorithm (matching)}
\label{alg:primal dual matching}
    \SetKwInOut{Input}{input}
    \SetKwInOut{Output}{output}
 \Input{Penalty function $\pen$
 }
 
 \vspace{2mm}
 
 Initialize $\offlinedual_j \gets 0$, and for all $\weight \in \mathbb{R}_+$ and $j\in\offlinenodes$, set $\allocj(\weight) \gets \diracdeltafunction_0(\weight), \callocj(\weight) \gets \indicator{\weight = 0}$.
 
 
 
 \vspace{1mm}
 
 {\color{royalazure}\tcc{$\diracdeltafunction_{\weight'}(\cdot)$
 is the Dirac delta function centered at $\weight'$.}}
 
 \vspace{1mm}
 

 
 \For{each online node $i\in\onlinenodes$}
 {
 
 \vspace{1mm}
 
 \While{capacity of online node $i$ is not exhausted and
 there exists $j\in\offlinenodes$ s.t.\ $\offlinedualj < \weightij$}
 {
 \vspace{2mm}
 
 Let $j^* \gets \argmax_{j\in\offlinenodes}\ 
 \weightij - \offlinedualj$ 
 
 Buy back fraction $dx$ of offline node $j^*$ from 
 the smallest allocated weight $\buybackweight_{j^*}$ to $j^*$,
 i.e., $\alloc_j^*(\weight) \gets
 \alloc_j^*(\weight) - dx \cdot\delta_{\buybackweight_{j^*}}(\weight)$~~{\color{royalazure}\tcc{Formally, $\buybackweight_j = \inf\{\weight\in\reals_+:
 \calloc_j(\weight) < 1\}$}}

 \vspace{1mm}
 Allocate fraction $dx$ of offline node $j^*$
 to the current online node $i$, i.e., $\alloc_j^*(\weight) \gets
 \alloc_j^*(\weight) + dx \cdot\delta_{\weight_{ij^*}}(\weight)$

 \vspace{1mm}
 Update the allocation quantile function $\forall \weight \in [\buybackweight_{j^*},\weight_{ij^*}]:\calloc_{j^*}(\weight) \gets \int_\weight^{\infty}
\alloc_{j^*}(t)\,\dd t$

 \vspace{1mm}
Update the dual assignment $\offlinedual_{j^*}\gets 
\int_{0}^{\infty}
\pen_{j^*}(\calloc_{j^*}(\weight))\,\dd\weight$}
 }
\end{algorithm}
A few remarks are in order regarding the implementation and analysis of \Cref{alg:primal dual matching}. First, although described as a continuous allocation procedure (due to the while loop in lines 3--8), it is straightforward to see that the allocation for each online node terminates by construction. Second, the algorithm can be implemented efficiently in polynomial time. In \Cref{apx:implementation}, we present two explicit polynomial-time implementations: one based on a direct approach, and another leveraging a convex-programming formulation.

\revcolor{It is also important to highlight a key difference between the single-resource and matching environments. As discussed in \Cref{remark:characterize-family} in \Cref{sec:single resource primal dual}, the analysis for the single-resource environment simplifies for specific penalty functions. However, the optimal competitive ratio can still be achieved by any monotone increasing penalty function $\pen:[0,1]\rightarrow \reals_+$ with $\pen(0)=0$, when suitably scaled. In contrast, for the matching environment, additional complexities arise, making the choice of penalty function critical to achieving optimal competitiveness, which we discuss next.}
\subsection{Optimal competitive ratio via generalized exponential penalty function}
Although our primal-dual algorithm for the matching environment naturally generalizes our earlier algorithm for the single-resource setting, its analysis poses new challenges due to uncertainty across both future weights of different offline nodes and future weights of the same offline node over time.

Recall that our main technical approach to designing and analyzing algorithms in the single-resource environment relied on reducing general instances to truncated weight continuum instances $\continstances$ (see the overview in \Cref{sec:single-resource-overview}). However, applying a similar reduction to the matching environment introduces significant technical difficulties. In particular, neither truncated weight continuum instances nor the associated reduction techniques directly extend to this setting. To illustrate, note that in the single-resource environment, it is optimal to ignore (i.e., allocate nothing to) any online node $i\in\onlinenodes$ whose weight $\weight_{i1}$ is not maximal among previously arrived nodes. This property is crucial in our reduction, as it enables our reduction algorithm (\Cref{alg:algorithm reduction single resource}) to effectively "fill gaps" between consecutive weights to form continuum instances (\Cref{sec:reduction-to-continuum}). In contrast, within the matching environment, ignoring edges $\weightij$ whenever a previous node $i'<i$ has a higher or equal weight $\weight_{i'j}\geq \weightij$ can lead to an unbounded competitive ratio. Thus, the reduction approach from the single-resource environment does not appear tractable in the matching environment.

To circumvent these difficulties and analyze the competitive ratio \emph{directly}, we restrict attention to the parametric subclass of ``generalized exponential functions,'' introduced earlier in \Cref{sec:single resource primal dual}.\footnote{In fact, our simplified proof for the direct implementation of \Cref{alg:primal dual single resource} on general instances (formalized as \Cref{alg:primal-dual-direct-single} in \Cref{app:primal-dual-direct-single}), which critically relies on a generalized exponential penalty function, closely parallels our analysis in this section. See \Cref{remark:direct-primal-dual} and the detailed analysis in \Cref{app:primal-dual-direct-single}.} We formally define this subclass below. As shown in the remainder of this section, this restriction has two key advantages: (i) it significantly simplifies the competitive analysis by requiring minimal explicit information about the induced allocation quantile functions $\{\calloc_j\}_{j\in V}$ (see \Cref{remark:exponentialpenalty}); and (ii) this subclass remains sufficiently general to achieve the optimal competitive ratio through suitable parameter choices for the penalty function.

\revcolor{
\begin{definition}[Generalized exponential penalty function]
\label{penalty-definition}
Given parameters $\expfuncparamone> 1$ and $\expfuncparamtwo \geq 0$, the \emph{generalized exponential penalty function} $\pen:[0,1]\rightarrow \reals_+$ is defined by $\pen(\calloc) = \expfuncparamtwo\left(\expfuncparamone^\calloc - 1\right),~~\forall \calloc \in [0,1].$
\end{definition}}
\revcolor{Note that a generalized exponential penalty function is always strictly monotone increasing, as $\expfuncparamone> 1$.}

We summarize the competitive ratio of \Cref{alg:primal dual matching} with the generalized exponential penalty function in \Cref{thm:competitive ratio exponential penalty function}. At a high level, the proof involves a novel LP-based primal-dual analysis that, somewhat surprisingly, incorporates the buyback cost \emph{indirectly}. Indeed, the buyback cost emerges through algebraic properties of the generalized exponential function, despite neither buyback decisions nor their costs explicitly appearing in the linear program for the optimum offline solution.

\begin{theorem}
\label{thm:competitive ratio exponential penalty function}
For every $\expfuncparamone \geq e$
and 
$\expfuncparamtwo \geq \frac{1 + \buybackcost}{\expfuncparamone - 1}$,
\Cref{alg:primal dual matching} with generalized exponential penalty function 
$\pen(\calloc) = \expfuncparamtwo(\expfuncparamone^\calloc - 1)$
has competitive ratio at most $\approxratioexp(\buybackcost)$,
where
\begin{align*}
    \approxratioexp(\buybackcost) \triangleq
    \max_{\weight \geq 1+\buybackcost}
    \frac
    {
    (\expfuncparamtwo + 1)\log(\expfuncparamone)\weight 
    -
    \expfuncparamtwo\expfuncparamone\log(\expfuncparamone)
    }
    {
    \weight - (1 + \buybackcost) 
    }
\end{align*}
\end{theorem}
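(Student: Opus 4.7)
The plan is to prove the theorem via an LP-based online primal-dual analysis. Consider the standard fractional bipartite matching LP $\max\sum_{ij}\weightij z_{ij}$ subject to unit capacity on each side and $z_{ij}\geq 0$; its dual has variables $\onlineduali,\offlinedualj\geq 0$ with constraints $\onlineduali+\offlinedualj\geq \weightij$ for every $(i,j)$. I would set $\offlinedualj$ equal to its algorithmic final value $\int_0^\infty \pen(\callocj(\weight))\,d\weight$, and define $\onlineduali=(\weight_{ij^*}-\offlinedual_{j^*})^+$ at termination of the processing of online node $i$, where $j^*$ is the final maximizer of $\weightij-\offlinedualj$. Dual feasibility then follows from the while-loop termination rule: either $\offlinedualj\geq \weightij$ for all $j$ at the end of processing (so $\onlineduali=0$ suffices), or the capacity of $i$ is saturated and $\onlineduali$ exactly dominates the residual gap $\weight_{ij^*}-\offlinedual_{j^*}$; monotonicity of $\offlinedualj$ over time ensures the \emph{final} $\offlinedualj$ also dominates its value at the end of processing $i$. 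By weak duality $\OPT\leq \sum_i\onlineduali+\sum_j\offlinedualj$, and it suffices to establish the per-online-node inequality
\[
\onlineduali+\sum_j\Delta_i\offlinedualj \;\leq\; \approxratioexp(\buybackcost)\cdot(\text{net gain of \ALG{} from }i),
\]
whose sum over all $i$ yields the theorem.

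The crux is a clean per-step bound on $d\offlinedual_{j^*}$ that exploits the self-similar differential identity $\pen'(\calloc)=\log(\expfuncparamone)(\pen(\calloc)+\expfuncparamtwo)$ enjoyed by the generalized exponential. In each infinitesimal step, the algorithm allocates $dx$ of $j^*$ at weight $\weight_{ij^*}$ while simultaneously buying back $dx$ at the smallest allocated weight $\buybackweight$; this increases $\calloc_{j^*}(\weight)$ by exactly $dx$ on the interval $(\buybackweight,\weight_{ij^*})$ (the changes cancel below $\buybackweight$ and neither appears above $\weight_{ij^*}$). Hence
\[
d\offlinedual_{j^*} = dx\int_{\buybackweight}^{\weight_{ij^*}}\pen'(\calloc_{j^*}(\weight))\,d\weight = dx\,\log(\expfuncparamone)\Bigl[\int_{\buybackweight}^{\weight_{ij^*}}\pen(\calloc_{j^*}(\weight))\,d\weight+\expfuncparamtwo(\weight_{ij^*}-\buybackweight)\Bigr].
\]
The invariant that total allocated mass remains equal to $1$ forces $\calloc_{j^*}(\weight)=1$ on $[0,\buybackweight)$, giving $\pen(\calloc_{j^*}(\weight))=\pen(1)=\expfuncparamtwo(\expfuncparamone-1)$ there. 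Dropping the nonnegative tail $\int_{\weight_{ij^*}}^\infty\pen(\calloc_{j^*})\,d\weight\geq 0$ then yields $\int_{\buybackweight}^{\weight_{ij^*}}\pen(\calloc_{j^*})\,d\weight\leq\offlinedual_{j^*}-\expfuncparamtwo(\expfuncparamone-1)\buybackweight$. Substituting this bound and writing $v\triangleq \weight_{ij^*}-\offlinedual_{j^*}\geq 0$ for the current ``bid value'' produces
\[
d\offlinedual_{j^*}\leq dx\,\log(\expfuncparamone)\bigl[(1+\expfuncparamtwo)\weight_{ij^*}-v-\expfuncparamtwo\expfuncparamone\buybackweight\bigr].
\]

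For the $\onlineduali$ side, the instantaneous value $v(x)=\max_j(\weightij-\offlinedualj(x))$ is non-increasing in the cumulative amount $x$ allocated during processing of $i$ (only the currently selected $\offlinedualj$ is growing), and equals $\onlineduali$ at $x=x_i^{\max}$. Therefore $\onlineduali\cdot x_i^{\max}\leq \int_0^{x_i^{\max}}v(x)\,dx$, and since $x_i^{\max}\leq 1$ together with $\expfuncparamone\geq e$ (which gives $\log\expfuncparamone\geq 1$), we obtain $\onlineduali\leq \log(\expfuncparamone)\int_0^{x_i^{\max}}v(x)\,dx$; in the complementary case where processing stops strictly before saturation, $v(x_i^{\max})=0$ forces $\onlineduali=0$ and the bound is trivial. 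Integrating the $d\offlinedual_{j^*}$ bound over processing of $i$ and adding the $\onlineduali$ bound, the $v$-terms cancel and leave
\[
\onlineduali+\sum_j\Delta_i\offlinedualj \;\leq\; \log(\expfuncparamone)\int\bigl[(1+\expfuncparamtwo)\weight_{ij^*}-\expfuncparamtwo\expfuncparamone\buybackweight\bigr]\,dx.
\]

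The last step is a pointwise comparison against the algorithm's gain $(\weight_{ij^*}-(1+\buybackcost)\buybackweight)\,dx$ per infinitesimal allocation. The hypothesis $\expfuncparamtwo\geq (1+\buybackcost)/(\expfuncparamone-1)$ guarantees $\offlinedual_{j^*}\geq \pen(1)\buybackweight\geq (1+\buybackcost)\buybackweight$, and since the algorithm only allocates while $\weight_{ij^*}>\offlinedual_{j^*}$, we always have $\weight_{ij^*}>(1+\buybackcost)\buybackweight$ whenever $\buybackweight>0$. Dividing by $\buybackweight$ and writing $w:=\weight_{ij^*}/\buybackweight>1+\buybackcost$ turns the required inequality into $\frac{(1+\expfuncparamtwo)\log(\expfuncparamone)\,w-\expfuncparamtwo\expfuncparamone\log(\expfuncparamone)}{w-(1+\buybackcost)}\leq \approxratioexp(\buybackcost)$, which holds by definition; the edge case $\buybackweight=0$ contributes the $w\to\infty$ limit $(1+\expfuncparamtwo)\log(\expfuncparamone)$, dominated by the same maximum. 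The main obstacle is conceptual rather than computational: discovering that the self-similar derivative identity for the generalized exponential, together with the ``saturation'' invariant $\calloc_{j^*}=1$ on $[0,\buybackweight)$, is precisely what makes the linear buyback penalty $(1+\buybackcost)\buybackweight$ materialize in the bound---even though neither buyback decisions nor their costs appear anywhere in the primal LP.
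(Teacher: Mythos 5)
Your argument is correct and is essentially the paper's primal--dual proof: you use the same LP, the same assignment $\offlinedualj=\int_0^\infty\pen(\callocj(\weight))\,d\weight$, and the same cancelation of the $\callocj$-dependent terms that the generalized-exponential form affords, arriving at the identical per-allocation-step bound $\log(\expfuncparamone)\bigl[(1+\expfuncparamtwo)\weight_{ij^*}-\expfuncparamtwo\expfuncparamone\buybackweight\bigr]dx$ and the same final case analysis on $w=\weight_{ij^*}/\buybackweight$. The only cosmetic differences are your choice to set $\onlineduali$ equal to the residual gap $\max_j(\weightij-\offlinedualj)^+$ and then dominate it by the running integral $\log(\expfuncparamone)\int v\,dx$ (the paper defines $\onlineduali$ as that running integral and shows it dominates the gap), and your packaging of the exponential's role via the identity $\pen'(\calloc)=\log(\expfuncparamone)\bigl(\pen(\calloc)+\expfuncparamtwo\bigr)$ rather than direct substitution.
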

\begin{proof}{\emph{Proof.}}
Consider the linear program formulation of the maximum edge-weighted bipartite matching problem on the complete bipartite graph $\bipartitegraph = (\onlinenodes, \offlinenodes)$ with weights $\{\weightij\}_{i\in \onlinenodes,j\in \offlinenodes}$. We refer to this formulation as the primal linear program and also consider its dual linear program:
\begin{align*}
\tag{$\mathcal{P}_{\textsc{OPT}}$}
\label{eq:LP-max-weight}
\arraycolsep=1.4pt\def\arraystretch{1}
\begin{array}{llllllll}
\max  &\displaystyle\sum_{i\in\onlinenodes}
\displaystyle\sum_{j\in \offlinenodes}
\edgeallocij \weightij &~~\text{s.t.}&
& \quad\quad\quad\quad \text{min} &\displaystyle\sum_{i\in\onlinenodes }{\onlineduali}+\displaystyle\sum_{j\in \offlinenodes}{\offlinedualj}&~~\text{s.t.} \\[1.4em]
 &\displaystyle\sum_{j\in \offlinenodes}{\edgeallocij}\leq1 &  i\in\onlinenodes~,& 
& &\onlineduali+\offlinedualj\geq \weightij& i\in\onlinenodes,j\in \offlinenodes~,\\[1.4em]
 &\displaystyle\sum_{i\in \onlinenodes}{\edgeallocij}\leq 1 &j\in \offlinenodes~, &
& &\onlineduali \geq 0 &i\in \onlinenodes~, \\
 &\edgeallocij \geq 0 &i\in\onlinenodes,j\in\offlinenodes~.
 \qquad \qquad &
& &\offlinedualj\geq 0  &j\in \offlinenodes~. 
\end{array}
\end{align*}
We construct a dual assignment based on the allocation decisions made by \Cref{alg:primal dual matching} (denoted by $\ALG$ throughout the proof) as follows. Initially, set dual variables $\onlineduali \gets 0$ and $\offlinedualj \gets 0$ for all $i\in\onlinenodes$ and $j\in\offlinenodes$. Then, consider every allocation (and buyback) decision in $\ALG$. Whenever the algorithm buys back an infinitesimal fraction $dx$ of offline node $j$ previously allocated to online node $i'$ and reallocates this fraction to the current online node $i$, we update the dual variables as follows:
\begin{align*}
    \onlineduali \gets \onlineduali + 
    \log(\expfuncparamone)\left(
    \weightij - \offlinedualj
    \right)dx
    \qquad\textrm{and} \qquad
    \offlinedualj \gets 
    \offlinedualj + 
    \left(\displaystyle\int_{\weightipj}^{\weightij}
    \penderivative(\callocj(\weight))\,d\weight
    \right)dx~~,
\end{align*}
where $\penderivative(\cdot)$ denotes the first-order derivative of the penalty function $\pen(\cdot)$, i.e., $\penderivative(y)=\frac{\partial}{\partial y}\pen(y)$, and $\callocj(\cdot)$ is the allocation quantile function of offline node $j$ at the time of the update. By construction, the invariant $\offlinedualj = \int_0^{\infty}\pen(\callocj(\weight))\,d\weight$
is maintained throughout the execution of $\ALG$.

The remainder of the proof proceeds in two steps:

\noindent
[\emph{Step i}] \emph{Checking dual feasibility.}  
We first show that the constructed dual assignment is feasible. By construction, whenever $\ALG$ allocates an infinitesimal fraction $dx$ of offline node $j$ to online node $i$, we must have $\weightij - \offlinedualj \geq 0$. Therefore, it follows that $\onlineduali \geq 0$ and $\offlinedualj \geq 0$ for all $i\in\onlinenodes, j\in\offlinenodes$. To show feasibility of the dual constraint $\onlineduali + \offlinedualj \geq \weightij$, we consider two distinct cases:
\begin{itemize}[leftmargin=20pt]
    \item \underline{Case I --- $\ALG$ does not exhaust the unit demand of online node $i$:}  
    By construction, upon completion of the allocation for online node $i$, we must have $\offlinedual_{j'} \geq \weight_{ij'}$ for all offline nodes $j' \in \offlinenodes$. Thus, the dual constraint $\onlineduali + \offlinedual_{j'} \geq \weight_{ij'}$ holds trivially.

    \item \underline{Case II --- $\ALG$ exhausts the unit demand of online node $i$:}  
    By construction, each infinitesimal fraction $dx$ allocated to offline node $j'$ must satisfy $\weight_{ij'} - \offlinedual_{j'} \geq \weightij - \offlinedualj$. Thus, the dual variable $\onlineduali$ can be lower bounded as follows:
   \begin{align*}
        \onlineduali \geq \displaystyle\int_{0}^{1}
        \log(\expfuncparamone) (\weightij - \offlinedualj^{(i, x)})
        dx
        \geq
        \log(\expfuncparamone) (\weightij - \offlinedualj^{(i, 1)})
        \geq 
        \weightij - \offlinedualj~~,
    \end{align*}
    where $\offlinedualj^{(i,x)}$ is the value of dual variable $\offlinedualj$ after a fraction $x$ of online node $i$ is matched with offline nodes, and the last inequality follows since $\expfuncparamone \geq e$ and $\offlinedualj$ is non-decreasing throughout $\ALG$'s execution.
\end{itemize}
   
\smallskip
\noindent[\emph{Step ii}] \emph{Comparing objective values in primal and dual.}  
Next, we show that the total profit of $\ALG$ provides a $\approxratioexp(f)$-approximation to the objective value of the dual assignment constructed above. To establish this, we examine each allocation (and buyback) decision made by $\ALG$ and analyze its incremental impact on both the primal profit and the dual objective.

Suppose $\ALG$ buys back an infinitesimal fraction $dx$ of offline node $j$ from an earlier online node $i'$ and reallocates this fraction to the current online node $i$. The resulting net change in the primal objective, after incorporating the buyback cost, is equal to:
\begin{align*}
    \Delta(\text{Primal}) &= \left(\weightij - (1+\buybackcost)\weightipj\right)dx~.
\end{align*}
Furthermore, the change in the dual objective is upper-bounded as follows:
\begin{align*}
    \Delta(\text{Dual}) &=
    \left(
    \log(\expfuncparamone)
    \left(
    \weightij -
    \displaystyle\int_0^\infty \pen(\callocj(\weight))\,\dd\weight
    \right)
    +
    \int_{\weightipj}^{\weightij} 
    \penderivative(\callocj(\weight))\,\dd\weight
    \right)dx
    \\
    &\overset{(a)}{\leq} 
    \left(
    \log(\expfuncparamone)
    \left(
    \weightij -
    \displaystyle\int_0^{\weightipj} \pen(1)\,\dd\weight
    -
    \displaystyle\int_{\weightipj}^{\weightij} \pen(\callocj(\weight))\,\dd\weight
    \right)
    +
    \int_{\weightipj}^{\weightij} 
    \penderivative(\callocj(\weight))\,\dd\weight
    \right)dx
    \\
    &=
    \left(
    \log(\expfuncparamone)
    \left(
    \weightij -
    \expfuncparamtwo\left(\expfuncparamone - 1\right)\weightipj
    -
    \displaystyle\int_{\weightipj}^{\weightij} 
    \expfuncparamtwo\left(
    \expfuncparamone^{\callocj(\weight)} - 1
    \right)\,\dd\weight
    \right)
    +
    \int_{\weightipj}^{\weightij} 
    \log(\expfuncparamone)
    \expfuncparamtwo
    \expfuncparamone^{\callocj(\weight)}
    \,\dd\weight
    \right)dx
    \\
    &=
    \left( (\expfuncparamtwo + 1)\log(\expfuncparamone)\weightij
    -
    \expfuncparamtwo\expfuncparamone\log(\expfuncparamone)
    \weightipj \right)dx~~,
\end{align*}
where inequality~(a) holds 
by dropping $\int_{\weightij}^\infty \pen(\callocj(\weight))\,d\weight$ 
and the fact that $\callocj(\weight) = 1$ 
for every~$\weight\leq\weightipj$.
Since $\Delta(\text{Dual})= 
    ({
    (\expfuncparamtwo + 1)\log(\expfuncparamone)\weightij
    -
    \expfuncparamtwo\expfuncparamone\log(\expfuncparamone)
    \weightipj
    })
    \cdot
    ({
    \weightij
    -
    (1+\buybackcost)\weightipj
    })^{-1}
    \cdot
    \Delta(\text{Primal})$, to show why we have $\Delta(\text{Dual}) \leq \approxratioexp(\buybackcost) 
\cdot \Delta(\text{Primal})$
it remains to argue $\frac{\weightij}{\weightipj} \geq 1 + f$.
Note that 
\begin{align*}
    \weightij 
&\geq \int_0^\infty \pen(\callocj(\weight))\,\dd\weight 
\geq 
\int_0^{\weightipj} \pen(\callocj(\weight))\,\dd\weight 
=
\pen(1)\weightipj 
\geq (1 + \buybackcost)\weightipj~~,
\end{align*}
where the last inequality holds since $\expfuncparamtwo \geq \frac{1+f}{\expfuncparamone - 1}$. By summing $\Delta(\text{Dual})$ and $\Delta(\text{Primal})$ over all allocations and buyback decisions throughout the horizon, we obtain:
$$
\textrm{total-profit}(\ALG)\triangleq  \text{Primal}\geq \frac{1}{ \approxratioexp(\buybackcost) }\cdot\text{Dual}
$$
Finally, by weak duality of the linear program, we have $\text{Dual}\geq \text{profit}(\OPT)$, completing the proof.\hfill\halmos
\end{proof}

\begin{remark}\label{remark:exponentialpenalty}
In the proof of \Cref{thm:competitive ratio exponential penalty function}, when analyzing the dual objective value, the terms involving the allocation quantile function $\calloc(\cdot)$ cancel out precisely due to the choice of the generalized exponential penalty function $\pen(\calloc) = \expfuncparamtwo(\expfuncparamone^\calloc - 1)$. This specific functional form thus plays a critical role in our analysis.
\end{remark}

Given \Cref{thm:competitive ratio exponential penalty function}, we are now ready to obtain the optimal competitive algorithm by properly selecting the parameters $(\expfuncparamone,\expfuncparamtwo)$ in the generalized exponential penalty function $\pen(\calloc) = \expfuncparamtwo(\expfuncparamone^\calloc - 1)$. As discussed in the introduction, we distinguish between two regimes based on the buyback factor $\buybackcost$: the \emph{small buyback cost regime} (i.e., $f\leq \frac{e-2}{2}$) and the \emph{large buyback cost regime} (i.e., $\buybackcost\geq \frac{e-2}{2}$). In each regime, we explicitly select parameters $(\expfuncparamone,\expfuncparamtwo)$ such that the resulting algorithm achieves the optimal competitive ratio $\optCRgen$; see \Cref{fig:CRs} for the competitive ratios and \Cref{fig:tauLambda} for the exact parameter selections as functions of $\buybackcost$. In \Cref{sec:lower-bound}, we provide tight lower-bound constructions matching these upper bounds, thereby proving that \Cref{alg:primal dual matching}, with appropriately chosen parameters, is optimally competitive for all $\buybackcost \geq 0$.

\begin{corollary}[Optimal competitive ratio in the small buyback cost regime]
\label{coro:optimal competitive ratio small f}
For every buyback factor $0\leq \buybackcost\leq \fracthreshold$,
\Cref{alg:primal dual matching}
with the generalized exponential penalty function $\pen(\calloc) = 
\frac{1 + \buybackcost}{e - (1 + \buybackcost)}(e^\calloc - 1)$
achieves a competitive ratio of at most $\frac{e}{e - (1 + \buybackcost)}$.
\end{corollary}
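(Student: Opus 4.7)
The plan is to invoke Theorem~\ref{thm:competitive ratio exponential penalty function} with the specific parameter choices $\expfuncparamone = e$ and $\expfuncparamtwo = \frac{1+\buybackcost}{e-(1+\buybackcost)}$, which exactly produces the stated penalty function $\pen(\calloc) = \expfuncparamtwo(\expfuncparamone^\calloc - 1)$. I would first verify the two hypotheses of the theorem. The condition $\expfuncparamone \geq e$ is immediate. The condition $\expfuncparamtwo \geq \frac{1+\buybackcost}{\expfuncparamone-1} = \frac{1+\buybackcost}{e-1}$ reduces, after clearing denominators, to $e - (1+\buybackcost) \leq e - 1$, i.e., $\buybackcost \geq 0$, which holds by assumption. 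Note also that $\buybackcost \leq \frac{e-2}{2}$ implies $\buybackcost < e-1$, so $\expfuncparamtwo$ is positive and finite.

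Next I would simplify the expression $\approxratioexp(\buybackcost) = \max_{w \geq 1+\buybackcost}\frac{(\expfuncparamtwo+1)\log(\expfuncparamone)\,w - \expfuncparamtwo \expfuncparamone \log(\expfuncparamone)}{w - (1+\buybackcost)}$ under these parameters. Since $\log(\expfuncparamone) = 1$, routine algebra gives $\expfuncparamtwo + 1 = \frac{e}{e-(1+\buybackcost)}$ and $\expfuncparamtwo \cdot \expfuncparamone = \frac{e(1+\buybackcost)}{e-(1+\buybackcost)}$, so the numerator collapses to $\frac{e\bigl(w - (1+\buybackcost)\bigr)}{e-(1+\buybackcost)}$. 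The key observation is that this numerator contains the factor $w - (1+\buybackcost)$, which cancels the denominator \emph{exactly}, leaving the ratio equal to the constant $\frac{e}{e-(1+\buybackcost)}$ for every $w > 1+\buybackcost$. Hence the maximum trivially equals this value, and the corollary follows immediately from Theorem~\ref{thm:competitive ratio exponential penalty function}.

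There is essentially no obstacle here beyond recognizing the algebraic structure: the specific choice of $\expfuncparamtwo$ as a function of $\buybackcost$ is engineered precisely so that the affine numerator in $w$ has its root at $w = 1+\buybackcost$, forcing the cancellation. Conceptually, this encodes that in the small-buyback regime the worst-case tight instance for the primal-dual analysis is scale-free in $w$, so any $w \geq 1+\buybackcost$ achieves the bound with equality. The assumption $\buybackcost \leq \frac{e-2}{2}$ is not used in the upper-bound derivation itself; it only determines the regime in which this particular parameter choice matches the lower bound $\optCRgen$ (cf.\ Figure~\ref{fig:CRs}), while for $\buybackcost > \frac{e-2}{2}$ a different choice of $(\expfuncparamone, \expfuncparamtwo)$ yields the tight Lambert-$W$ bound.
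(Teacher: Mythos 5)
Your proof is correct and follows essentially the same route as the paper: plug $\expfuncparamone = e$, $\expfuncparamtwo = \frac{1+\buybackcost}{e-(1+\buybackcost)}$ into Theorem~\ref{thm:competitive ratio exponential penalty function}, check the two hypotheses, and observe that the numerator factors as $\frac{e}{e-(1+\buybackcost)}\bigl(w-(1+\buybackcost)\bigr)$ so the ratio is constant in $w$. Your closing remark that the restriction $\buybackcost \leq \frac{e-2}{2}$ is not needed for this upper bound (only $\buybackcost < e-1$ is, for positivity of $\expfuncparamtwo$) and merely marks the regime where this parameter choice is tight against the lower bound is a correct and worthwhile observation that the paper leaves implicit.
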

\begin{proof}{\emph{Proof.}}
Let $\expfuncparamone = e$ and 
$\expfuncparamtwo = \frac{1+\buybackcost}{e - (1 + \buybackcost)}$.
By construction, $\expfuncparamtwo\geq\frac{1+\buybackcost}{\expfuncparamone - 1}$.
Invoking \Cref{thm:competitive ratio exponential penalty function},
the competitive ratio is at most the following bound, which finishes the proof:
\begin{align*}
    \max_{\weight \geq 1+\buybackcost}
    \frac
    {
    (\expfuncparamtwo + 1)\log(\expfuncparamone)\weight 
    -
    \expfuncparamtwo\expfuncparamone\log(\expfuncparamone)
    }
    {
    \weight - (1 + \buybackcost) 
    }
    &=
    \max_{\weight \geq 1+\buybackcost}
    \frac{
    \left(
    \frac{1+\buybackcost}{e - (1 + \buybackcost)} + 1
    \right)\weight 
    -
    \frac{1+\buybackcost}{e - (1 + \buybackcost)} e
    }{
    \weight - (1 + \buybackcost)
    }
    \\
    &=
    \max_{\weight \geq 1+\buybackcost}
    \frac{
    \frac{e}{e - (1 + \buybackcost)}
    \weight 
    -
    \frac{e}{e - (1 + \buybackcost)}(1+\buybackcost)
    }{
    \weight - (1 + \buybackcost)
    }
    =
    \frac{e}{e - (1 + \buybackcost)}.
    \qquad\halmos
    \end{align*}
\end{proof}

\begin{corollary}[Optimal competitive ratio in the large buyback cost regime.]
\label{coro:optimal competitive ratio large f}
For every buyback factor $\buybackcost\geq \fracthreshold$,
let $\thresholdweight = -(1+\buybackcost)\Lambertterm$. Then
\Cref{alg:primal dual matching}
with the generalized exponential penalty function $\pen(\calloc) = 
\frac{1}{\log(\thresholdweight)}(\thresholdweight^\calloc - 1)$
achieves a competitive ratio of at most $-\Lambertterm$.  
\end{corollary}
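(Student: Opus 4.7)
The plan is to invoke \Cref{thm:competitive ratio exponential penalty function} with the specific parameter choice $\expfuncparamone = \thresholdweight$ and $\expfuncparamtwo = 1/\log(\thresholdweight)$, which exactly reproduces the stipulated penalty function $\pen(\calloc) = \frac{1}{\log(\thresholdweight)}(\thresholdweight^\calloc - 1)$. Concretely, I need to do three things: (a) verify the precondition $\expfuncparamone \geq e$, (b) verify $\expfuncparamtwo \geq (1+\buybackcost)/(\expfuncparamone - 1)$, and (c) compute the upper-bound formula $\approxratioexp(\buybackcost)$ from the theorem and show that it simplifies to $-\Lambertterm$.

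The key enabler for all three steps is a pair of Lambert W identities. Writing $W := \Lambertterm$, the defining equation $W e^W = -1/(e(1+\buybackcost))$ immediately yields
\begin{equation*}
\thresholdweight \;=\; -(1+\buybackcost)\,W \;=\; e^{-W-1}, \qquad \log(\thresholdweight) \;=\; -W-1 .
\end{equation*}
For (a), at the regime boundary $\buybackcost = \fracthreshold$ one checks $-1/(e(1+\buybackcost)) = -2/e^{2}$, whence $W = -2$ and $\thresholdweight = e$; a monotonicity argument in $\buybackcost$ (noting that as $\buybackcost \uparrow \infty$, $W \to -\infty$ and $e^{-W-1} \to \infty$) then gives $\thresholdweight \geq e$ throughout the large-buyback regime. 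For (b), the inequality $1/\log(\thresholdweight) \geq (1+\buybackcost)/(\thresholdweight - 1)$ is equivalent to $\thresholdweight - 1 \geq (1+\buybackcost)\log(\thresholdweight)$, and substituting the identities reduces the right-hand side to $\thresholdweight - (1+\buybackcost)$, so the inequality collapses to $\buybackcost \geq 0$.

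For (c), plugging the parameters into the theorem's formula gives $(\expfuncparamtwo+1)\log(\expfuncparamone) = 1+\log(\thresholdweight)$ and $\expfuncparamtwo\expfuncparamone\log(\expfuncparamone) = \thresholdweight$, so the objective reduces to
\begin{equation*}
g(\weight) \;=\; \frac{(1+\log(\thresholdweight))\,\weight \;-\; \thresholdweight}{\weight - (1+\buybackcost)} .
\end{equation*}
A direct differentiation shows that the sign of $g'(\weight)$ is governed by $\thresholdweight - (1+\log(\thresholdweight))(1+\buybackcost)$, and substituting $\log(\thresholdweight) = -W-1$ together with $\thresholdweight = -(1+\buybackcost)W$ makes this quantity vanish identically. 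Consequently $g$ is \emph{constant} on $(1+\buybackcost, \infty)$, taking the value $1+\log(\thresholdweight) = -W$, which is exactly $-\Lambertterm$.

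The main subtlety I anticipate is organizing the Lambert W bookkeeping: the same defining relation must be used in several disguises (to express $\thresholdweight$, $\log(\thresholdweight)$, and $(1+\buybackcost)\log(\thresholdweight)$) and it is the cancellation in the derivative of $g$ that makes the proof click. Conceptually the corollary simply observes that the parameter choice $(\expfuncparamone,\expfuncparamtwo) = (\thresholdweight, 1/\log(\thresholdweight))$ is the stationary point of the family of bounds in \Cref{thm:competitive ratio exponential penalty function}, whose value coincides with the lower bound $\optCRgen$ established in the large-buyback regime.
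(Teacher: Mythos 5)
Your proposal is correct and follows essentially the same route as the paper: invoke \Cref{thm:competitive ratio exponential penalty function} with $(\expfuncparamone,\expfuncparamtwo)=(\thresholdweight,1/\log\thresholdweight)$ and simplify via the Lambert~W identity $1+\buybackcost=\thresholdweight/(\log\thresholdweight+1)$. The only cosmetic difference is in step (c): the paper rewrites $\weight-(1+\buybackcost)$ directly as $\frac{(\log\thresholdweight+1)\weight-\thresholdweight}{\log\thresholdweight+1}$ so the fraction cancels on sight, whereas you argue the ratio is constant by showing its derivative vanishes --- both observations rest on the same identity and yield $\log\thresholdweight+1=-\Lambertterm$; your write-up also spells out the verification of $\expfuncparamone\geq e$ and $\expfuncparamtwo\geq(1+\buybackcost)/(\expfuncparamone-1)$, which the paper dismisses as ``by construction.''
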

\begin{proof}{\emph{Proof.}}
By definition, we have 
$    1 + \buybackcost = \frac{\thresholdweight}{\log(\thresholdweight) + 1}$.
Let $\expfuncparamone = \thresholdweight$ and
$\expfuncparamtwo = \frac{1}{\log(\thresholdweight)}$.
By construction $\expfuncparamtwo > \frac{1+\buybackcost}{\expfuncparamone-1}$
and $\expfuncparamone \geq e$
for every $\buybackcost \geq \fracthreshold$. 
We finish the proof by invoking \Cref{thm:competitive ratio exponential penalty function}, showing
the competitive ratio is at most
\begin{align*}
    \max_{\weight \geq 1+\buybackcost}
    \frac
    {
    (\expfuncparamtwo + 1)\log(\expfuncparamone)\weight 
    -
    \expfuncparamtwo\expfuncparamone\log(\expfuncparamone)
    }
    {
    \weight - (1 + \buybackcost) 
    }
    &=
    \max_{\weight \geq 1+\buybackcost}
    \frac
    {
    \left(\frac{1}{\log(\thresholdweight)} + 1\right)
    \log(\thresholdweight)\weight 
    -
    \frac{1}{\log(\thresholdweight)}\thresholdweight\log(\thresholdweight)
    }
    {
    \weight - \frac{\thresholdweight}{\log(\thresholdweight) + 1}
    }
    \\
    &=
    \max_{\weight \geq 1+\buybackcost}
    \frac
    {(\log(\thresholdweight) + 1)\weight - \thresholdweight
    }
    {
    \frac{(\log(\thresholdweight) + 1)\weight - \thresholdweight}{\log(\thresholdweight) + 1}
    }=
    \log(\thresholdweight) + 1 
    =
    -\Lambertterm.
    \quad\halmos
\end{align*}
\end{proof}

\revcolor{
\begin{remark}
\label{rem:phase-transition}
The choice of parameters $(\expfuncparamone, \expfuncparamtwo)$ in the generalized exponential penalty function 
$\pen(\calloc) = \expfuncparamtwo(\expfuncparamone^\calloc - 1)$ reveals the phase-transition point $\hat{f} = \fracthreshold$ for the buyback factor $\buybackcost$ (\Cref{fig:CRs,fig:tauLambda}) and explains why the optimal competitive ratio in the matching environment is \emph{strictly worse} than in the single-resource setting. In the latter, it suffices to ensure $\offlinedual \geq \weight_i$ at the departure of each online node $i$, whereas the matching environment imposes $\onlineduali + \offlinedualj \geq \weightij$ even for edges $(i,j)$ receiving no allocation, forcing $\expfuncparamone \geq e$. When $\buybackcost < \fracthreshold$, this constraint becomes binding (the original value $-(1+\buybackcost)\Lambertterm$ from the single-resource case falls below $e$), thereby fundamentally limiting the achievable competitive ratio in the matching environment.
\end{remark}
}

We conclude this section with two important remarks about \Cref{alg:primal dual matching}. First, we discuss how this fractional algorithm can be effectively rounded into a randomized integral algorithm without significant loss in performance. Second, we address the scenario in which the exact value of the buyback factor $\buybackcost$ may not be known to the algorithm in advance, but either an upper bound on $\buybackcost$ is known or $\buybackcost$ is a random variable drawn from a distribution with a known expectation.

\begin{remark}
\label{remark:fractional to randomized integral}
Although \Cref{alg:primal dual matching} is originally a fractional online algorithm, it can be shown that under large capacities (i.e., when all capacities $s_i$ are at least $s_\textrm{min}$ for large enough $s_\textrm{min}$), any fractional online algorithm with competitive ratio $\Gamma$ can be converted into an integral randomized online algorithm with competitive ratio $\hat{\Gamma}(s_{\textrm{min}})$, where $\underset{s_\textrm{min}\to +\infty}{\lim}\hat{\Gamma}(s_{\textrm{min}})=\Gamma$. For more details, see \Cref{prop:rounding matching} in \Cref{apx:matching rounding}.
\end{remark}

\begin{remark}
    \Cref{alg:primal dual matching} achieves a competitive ratio of at most $\optCRgen$, even when the actual buyback factor is not equal to $\buybackcost$ but it is guaranteed to be no more than $\buybackcost$. In particular, if the exact value of $\buybackcost$ is unknown to the algorithm, using a known upper bound $\Bar{\buybackcost}$ instead ensures a competitive ratio of at most $\optCRgenfBar$. Additionally, consider the scenario of stochastic buyback costs, where the buyback parameter is drawn from a (possibly unknown) distribution each time a buyback occurs. Due to the linearity of the objective function (total weight minus buyback cost), an algorithm that sets the buyback parameter to the known mean $\expect{\buybackcost}$ achieves an expected competitive ratio of at most $\optCRgenexf$.
\end{remark}
\vspace{-3mm}
\revcolor{\subsection{Extensions: beyond matching and positive buyback cost}
\label{sec:extensions-main}
Motivated by practical considerations, we explore several extensions to our main model in \Cref{apx:extension}. First, in \Cref{sec:configuration}, we consider a general configuration allocation framework, where each online node can simultaneously match with multiple offline resources, capturing applications that involve combinatorial or resource-sharing assignments. Next, in \Cref{apx:submodular welfare maximization} we generalize this model to submodular welfare maximization, where the valuation derived from assigning an online node to a subset of offline nodes follows a submodular function, capturing diminishing returns and substitution. Additionally, in \Cref{apx:assortment}, we examine an online assortment planning model, where arriving consumers choose products from dynamically offered assortments, subject to buyback costs when changing displayed products. Finally, we extend our primal-dual analysis to allow negative buyback factors ($\buybackcost\in[-1,0)$) in \Cref{apx:negative buyback cost}, modeling scenarios where resources have secondary supply channels or overflow capacities available at a discounted value, thus generalizing the model beyond traditional buyback scenarios. Together, these extensions highlight the strength of the primal-dual approach in addressing a broad class of online allocation problems.}

\section{Optimal Competitive Deterministic Integral Algorithms}
\label{sec:deterministic}
In this section, we consider the model where allocations and buybacks are restricted to be integral. Specifically, (i) the algorithm must maintain an integral matching at every point in time, and (ii) if the algorithm decides to buy back an edge-weight, it must buy back the entire edge. In \Cref{subsec:deterministic-upper-bound}, we present the optimal deterministic integral algorithm (\Cref{alg:opt deterministic matching}). Both the algorithm and its competitive ratio analysis are natural adaptations of the primal-dual algorithm and analysis for fractional allocations presented in \Cref{sec:matching}. We show the optimality of our algorithm by establishing a tight lower bound on the optimal competitive ratio in \Cref{subsec:deterministic-lower-bound}.

\subsection{Upper bound via primal-dual}
\label{subsec:deterministic-upper-bound}
We now present the optimal competitive deterministic integral algorithm. This primal-dual algorithm is parameterized by a \emph{penalty scalar} $\penscalar \geq \revcolor{1+}\buybackcost$ and maintains the currently allocated weight $\buybackweightj$ for each offline node $j\in\offlinenodes$.\footnote{Due to the integrality of allocations, $\buybackweightj$ also represents the exact weight that must be bought back when reallocating resources.} Upon arrival of a new online node $i\in\onlinenodes$, the algorithm allocates node $i$ to the offline node $j^*$ with the largest value of $\weight_{ij^*} - \penscalar\cdot \buybackweight_{j^*}$ (i.e., $j^*=\argmax_{j\in\offlinenodes} \weightij-\penscalar\cdot \buybackweightj$), provided this value is nonnegative. If this value is negative for all offline nodes $j\in\offlinenodes$, the algorithm leaves the online node $i$ unmatched. The formal description of this algorithm is given in \Cref{alg:opt deterministic matching}.

\begin{algorithm}
\caption{Primal-dual deterministic integral online algorithm (matching)}
\label{alg:opt deterministic matching}
    \SetKwInOut{Input}{input}
    \SetKwInOut{Output}{output}
 \Input{penalty scalar $\penscalar$
 }
 
 \vspace{2mm}
 
 Initialize $\buybackweightj \gets 0$
 for every offline node $j\in\offlinenodes$.
 
 \vspace{1mm}
 
 \For{each online node $i\in\onlinenodes$}
 {
 
 \vspace{1mm}
    
 \If{there exists $j\in\offlinenodes$ s.t.\ 
 $\penscalar\cdot \buybackweightj < \weightij$}
 {
 
 \vspace{2mm}
 
 Let $j^* \gets \argmax_{j\in\offlinenodes}\ 
 \weightij - \penscalar\cdot \buybackweightj$ 
 
 \vspace{1mm}
 
 Buy back offline node $j^*$ from 
 the previously allocated online node $i'$ with $w_{i'j^*}\equiv\buybackweight_{j^*}$
 
 \vspace{1mm}
 Allocate offline node $j^*$ to online node $i$ 
 
  \vspace{1mm}
 Update $\buybackweight_{j^*}\leftarrow \weight_{ij^*}$
}
}
\end{algorithm}
\revcolor{
\begin{remark}
\label{remark:det-special-case}
\Cref{alg:opt deterministic matching} 
belongs to our primal-dual family of algorithms in \Cref{sec:matching} (\Cref{alg:primal dual matching}) when restricted to making integral allocations. 
To see this, consider an algorithm that maintains
a dual assignment $\offlinedualj = \int_0^{\infty} \pen(\calloc(\weight))\,\dd\weight$ for each offline node $j\in\offlinenodes$.
Upon the arrival of a new online node 
$i$, the algorithm assigns the arriving online node to the offline node $j^*$ that maximizes
$\weight_{ij^*} - \offlinedual_{j^*}$, provided this difference is positive. If so, the algorithm also buys back $j^*$ from an earlier online node $i'$ with $\weight_{i'j^*}=\buybackweight_{j^*}$. Importantly, by replacing continuous fractional allocations with deterministic integral allocations, the allocation quantile function of every offline node $j$ becomes a step function that  switches from $1$ to $0$ at the previously allocated weight $\buybackweightj$. Therefore, the dual assignment reduces to $\offlinedualj=\int_{0}^{\buybackweightj }\pen(1)\,\dd\weight \equiv \penscalar\cdot \buybackweightj$ if we set penalty scalar $\tau=\pen(1)$, which is exactly the ``dual assignment'' used in \Cref{alg:opt deterministic matching}.
\end{remark}}

We now analyze the competitive ratio of \Cref{alg:opt deterministic matching}
in \Cref{thm:competitive ratio deterministic integral}. The proof is a simple adaptation of our previous primal-dual proof of \Cref{thm:competitive ratio exponential penalty function} for the integral setting.

\begin{restatable}{theorem}{optdetcompetitiveratio}
\label{thm:competitive ratio deterministic integral}
For $\penscalar \geq 1 + \buybackcost$,
\Cref{alg:opt deterministic matching} with penalty scalar $\penscalar$
has competitive ratio at most $\approxratiodet(\buybackcost)$,
where
\begin{align*}
    \approxratiodet(\buybackcost) \triangleq
    \max_{\weight \geq \penscalar}
    \frac
    {
    (\penscalar + 1)\weight 
    -
    2\penscalar
    }
    {
    \weight - (1 + \buybackcost) 
    }
\end{align*}
\end{restatable}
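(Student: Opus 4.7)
{\emph{Proof proposal.}}
The plan is to adapt the primal-dual analysis of \Cref{thm:competitive ratio exponential penalty function} to the deterministic integral setting, using the same LP relaxation \eqref{eq:LP-max-weight} as the primal. The dual assignment will be constructed online, mirroring the fractional proof but with discrete updates driven by the greedy selection rule of \Cref{alg:opt deterministic matching}. Throughout, we maintain the invariant $\offlinedualj = \penscalar \cdot \buybackweightj$ for each offline node $j$, which is consistent with the remark preceding the theorem statement; initially $\buybackweightj = 0$ and $\offlinedualj = 0$. Whenever \Cref{alg:opt deterministic matching} matches online node $i$ to $j^* = \argmax_{j} (\weight_{ij} - \penscalar \cdot \buybackweightj)$, buying back the previous incumbent $i'$ with $\weight_{i'j^*} \equiv \buybackweight_{j^*}$, we set $\onlineduali = \weight_{ij^*} - \penscalar \cdot \buybackweight_{j^*}$ and update $\offlinedual_{j^*} \gets \penscalar \cdot \weight_{ij^*}$; if $i$ is left unmatched, we set $\onlineduali = 0$.

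The first step is to verify dual feasibility of $(\onlinedual, \offlinedual)$. Nonnegativity of $\onlineduali$ follows directly from the match condition $\weight_{ij^*} \geq \penscalar \cdot \buybackweight_{j^*}$. For the edge constraint $\onlineduali + \offlinedualj \geq \weightij$, I will split into three cases at the time $i$ arrives: (a) if $i$ is unmatched, then $\weightij \leq \penscalar \cdot \buybackweightj = \offlinedualj$ for every $j$, and $\offlinedualj$ is monotonically nondecreasing; (b) if $i$ is matched to $j^* = j$, then $\offlinedualj = \penscalar \cdot \weight_{ij} \geq \weightij$ since $\penscalar \geq 1+\buybackcost \geq 1$; (c) if $i$ is matched to some $j^* \neq j$, then the greedy rule gives $\weight_{ij^*} - \penscalar \cdot \buybackweight_{j^*} \geq \weightij - \penscalar \cdot \buybackweightj^{(i)}$, where $\buybackweightj^{(i)}$ denotes the value of $\buybackweightj$ at time $i$; combining this with $\offlinedualj \geq \penscalar \cdot \buybackweightj^{(i)}$ (monotonicity) yields feasibility.

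The second step compares the per-step change in the primal profit to the per-step change in the dual objective. When $i$ is matched to $j^*$ and $\weight_{i'j^*}$ is boughtback, the increment in the primal profit (net of buyback cost) is $\Delta(\texttt{Primal}) = \weight_{ij^*} - (1+\buybackcost)\weight_{i'j^*}$, while the dual increment is $\Delta(\texttt{Dual}) = \onlineduali + \Delta\offlinedual_{j^*} = (\weight_{ij^*} - \penscalar \weight_{i'j^*}) + (\penscalar\weight_{ij^*} - \penscalar\weight_{i'j^*}) = (\penscalar+1)\weight_{ij^*} - 2\penscalar\weight_{i'j^*}$. Setting $\weight = \weight_{ij^*}/\weight_{i'j^*}$ (which satisfies $\weight \geq \penscalar$ by the algorithm's condition when $\weight_{i'j^*} > 0$), the ratio $\Delta(\texttt{Dual})/\Delta(\texttt{Primal})$ is exactly of the form $\frac{(\penscalar+1)\weight - 2\penscalar}{\weight - (1+\buybackcost)}$ and hence at most $\approxratiodet(\buybackcost)$. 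The corner case $\weight_{i'j^*} = 0$ (first allocation to $j^*$) must be handled separately: here the ratio equals $\penscalar + 1$, which is bounded above by $\approxratiodet(\buybackcost)$ since $\lim_{\weight\to\infty}\frac{(\penscalar+1)\weight - 2\penscalar}{\weight - (1+\buybackcost)} = \penscalar+1$ is in the feasible region of the max.

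Summing these per-step inequalities over all allocations yields $\texttt{Dual} \leq \approxratiodet(\buybackcost)\cdot \texttt{Primal} \equiv \approxratiodet(\buybackcost)\cdot \text{profit}(\ALG)$, and weak duality $\texttt{Dual} \geq \OPT$ completes the proof. The main obstacle I anticipate is the careful bookkeeping in case (c) of dual feasibility: one must compare $\onlineduali$ (defined using values at time $i$) with $\offlinedualj$ (which may increase after $i$), and the monotonicity argument needs to be made precise. The rest is a straightforward discrete analog of the fractional primal-dual calculation, with the role of $\log(\expfuncparamone)$ in \Cref{thm:competitive ratio exponential penalty function} replaced by the simpler penalty scalar $\penscalar$, and the condition $\expfuncparamtwo \geq \frac{1+\buybackcost}{\expfuncparamone-1}$ replaced by $\penscalar \geq 1+\buybackcost$. \halmos
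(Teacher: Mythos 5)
Your proposal matches the paper's proof essentially line for line: same primal LP relaxation \eqref{eq:LP-max-weight}, same dual updates maintaining the invariant $\offlinedualj=\penscalar\cdot\buybackweightj$, the same two-step structure (dual feasibility via the greedy selection rule plus monotonicity of $\offlinedualj$, then a per-step comparison of $\Delta(\texttt{Dual})$ against $\Delta(\texttt{Primal})$), and the same concluding appeal to weak duality. The only differences are cosmetic: you split the feasibility check into three cases rather than the paper's two (which simply merges $j^*=j$ and $j^*\neq j$ into one ``$\ALG$ allocates $i$'' case), and you spell out explicitly the corner case $\weight_{i'j^*}=0$ and why $\penscalar+1\leq\approxratiodet(\buybackcost)$, which the paper leaves implicit in the $\max_{\weight\geq\penscalar}$ supremum. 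These additions are correct and arguably clarify the argument, but they are not a different proof.
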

\begin{proof}{\emph{Proof.}}
Recall the linear program of maximum edge-weighted bipartite matching in complete bipartite graph $\bipartitegraph = (\onlinenodes, \offlinenodes)$
as the primal linear program, and its dual in \ref{eq:LP-max-weight}.
We construct a dual assignment based on the allocation decisions
of \Cref{alg:opt deterministic matching} (denoted by $\ALG$ throughout the proof).
First, set $\onlineduali \gets 0$ and $\offlinedualj \gets 0$
for all $i\in\onlinenodes,j\in\offlinenodes$.
Now consider every allocation (and buyback) decision in 
$\ALG$.
Whenever $\ALG$
buys back offline node $j$ from online node $i'$
(with weight $\weightipj \equiv \buybackweightj$)
and re-allocates offline node $j$
to online node $i$, update the dual variables as follows:
\begin{align*}
    \onlineduali \gets \weightij - 
    \penscalar\cdot \buybackweightj
    \qquad\textrm{and}\qquad
    \offlinedualj \gets 
    \offlinedualj + 
    \penscalar(\weightij - \buybackweightj)
\end{align*}
By construction, the invariant $\offlinedualj \equiv 
\penscalar\cdot \buybackweightj$
holds throughout $\ALG$. Similar to our earlier primal-dual proof of \Cref{thm:competitive ratio exponential penalty function}, the rest of this proof is also done in two steps:

\noindent
[\emph{Step i}] \emph{Checking the feasibility of dual.}
We first show that the constructed dual assignment 
is feasible.
By construction, 
whenever $\ALG$ allocates the
offline node $j$ to the online node $i$, 
we know that $\weightij - \penscalar \cdot \buybackweightj \geq 0$.
Thus, $\onlineduali \geq 0$ and $\offlinedualj \geq 0$
for all $i\in\onlinenodes,j\in\offlinenodes$. Next, we show 
$\onlineduali + \offlinedualj \geq \weightij$ by considering two cases:
\begin{itemize}
    \item \underline{Case I --- $\ALG$ does not allocate online node $i$ to any offline node}: 
    By construction, we know that this happens only if
    $\offlinedual_j = \penscalar\cdot \buybackweightj \geq \weightij$.
    Thus, the dual constraint is satisfied.
    \item \underline{Case II --- $\ALG$ allocates online node $i$ to offline node $j'$}:
    By construction, $\ALG$ greedily allocates online node $i$ 
to offline node $j^*$ which maximizes
$\weight_{ij^*} - \offlinedual_{j^*}$, meaning that 
    $\weightij - \penscalar\cdot \buybackweightj \leq 
    \weight_{ij'} - \penscalar\cdot \buybackweight_{j'}$.
    Thus, the dual constraint is satisfied.
\end{itemize}

\smallskip
\noindent
[\emph{Step ii}] \emph{Comparing objective values in primal and dual.}
We show that the total profit of $\ALG$ 
is a $\approxratiodet(f)$-approximation 
of the objective value of 
the above dual assignment.
To show this, we consider every allocation (and buyback)
decision in $\ALG$ and its impact on the total profit 
and the objective value of the dual assignment. In particular, suppose $\ALG$ buys back offline node $j$ 
from online node $i'$ (with weight $\weightipj\equiv\buybackweightj$)
and then re-allocates it to online node $i$.
The change in the profit (the net change in the primal objective \emph{after} we incorporate buyback cost) is $\Delta(\text{Primal}) = 
    \weightij - (1+\buybackcost)\buybackweightj,$
and the change in the dual objective is $
    \Delta(\text{Dual}) =
    \weightij - \penscalar\cdot \buybackweightj
    +
    \penscalar(\weightij - \buybackweightj)=
    (\penscalar + 1)\weightij
    -
    2\penscalar\buybackweightj$. \revcolor{Combining with the fact that
$\weightij \geq \penscalar\cdot \buybackweightj$,
we have 
\begin{align*}
    \frac{\Delta(\text{Dual})}{\Delta(\text{Primal})} = \frac{
    (\penscalar + 1)\weightij
    -
    2\penscalar\buybackweightj}{\weightij - (1+\buybackcost)\buybackweightj}=
    \frac{
    (\penscalar + 1)
\frac{\weightij}{\buybackweightj}
    -
    2\penscalar}{
\frac{\weightij}{ \buybackweightj} - (1+\buybackcost)} \leq \approxratiodet(\buybackcost).
\end{align*}}
Again, by summing $\Delta(\text{Dual})$ and $\Delta(\text{Primal})$ over the entire horizon, we obtain:
$$
\textrm{total-profit}(\ALG)\triangleq  \text{Primal}\geq \frac{1}{ \approxratiodet(\buybackcost) }\cdot\text{Dual}
$$
Finally, by weak duality of the linear program, $\text{Dual}\geq \textrm{profit}(\OPT)$, which  finishes the proof.
\hfill\halmos
\end{proof}

Given \Cref{thm:competitive ratio deterministic integral},
we are ready to obtain the optimal competitive deterministic integral algorithm
by appropriately picking the value of $\penscalar$ in \Cref{alg:opt deterministic matching}.
Similar to the fractional matching setting in \Cref{sec:matching},
we use two different value assignment of $\penscalar$,
depending on the buyback factor~$\buybackcost$.

\revcolor{\begin{corollary}[Competitive ratio $\boldsymbol{\optCRdet}$ in the small buyback cost regime]
\label{coro:optimal deterministic competitive ratio small f}
For every buyback factor $0\leq\buybackcost < 1$,
\Cref{alg:opt deterministic matching}
with $\penscalar = \frac{1 + \buybackcost}{1 - \buybackcost}$
has competitive ratio at most $\frac{2}{1 - \buybackcost}$.
\end{corollary}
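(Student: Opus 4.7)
The plan is a direct application of \Cref{thm:competitive ratio deterministic integral} with the prescribed value $\penscalar = \tfrac{1+f}{1-f}$, followed by an algebraic simplification of the resulting max expression. Conceptually, the statement is a one-line corollary of the master primal-dual bound; the only real work is to verify that the specific choice of $\penscalar$ is admissible and that, by a small miracle of cancellation, the bound becomes independent of $w$.

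First I would check admissibility: the hypothesis of \Cref{thm:competitive ratio deterministic integral} is $\penscalar \geq 1+\buybackcost$. With $\penscalar = \tfrac{1+\buybackcost}{1-\buybackcost}$, this is equivalent to $\tfrac{1}{1-\buybackcost} \geq 1$, which holds whenever $\buybackcost < 1$ and in particular for every $\buybackcost \leq \tfrac{1}{3}$. So the theorem applies and yields the competitive ratio bound
\begin{align*}
    \approxratiodet(\buybackcost) =
    \max_{\weight \geq \penscalar}
    \frac{(\penscalar+1)\weight - 2\penscalar}{\weight - (1+\buybackcost)}.
\end{align*}

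Next I would simplify. Compute $\penscalar + 1 = \tfrac{1+\buybackcost}{1-\buybackcost} + 1 = \tfrac{2}{1-\buybackcost}$ and $2\penscalar = \tfrac{2(1+\buybackcost)}{1-\buybackcost}$. Substituting into the numerator,
\begin{align*}
    (\penscalar+1)\weight - 2\penscalar
    = \frac{2\weight}{1-\buybackcost} - \frac{2(1+\buybackcost)}{1-\buybackcost}
    = \frac{2}{1-\buybackcost}\bigl(\weight - (1+\buybackcost)\bigr).
\end{align*}
Hence for every $\weight \geq \penscalar > 1+\buybackcost$ the ratio collapses to the constant $\tfrac{2}{1-\buybackcost}$, and the maximum is trivially attained with the same value.

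There is no genuine obstacle here: once the master theorem is in hand, the choice $\penscalar = \tfrac{1+\buybackcost}{1-\buybackcost}$ is precisely the one that makes the numerator of the ratio \emph{proportional} to the denominator, thereby eliminating the dependence on $\weight$. The only thing to flag is that this choice is in fact optimal within the regime $\buybackcost \leq \tfrac{1}{3}$, which is consistent with the phase transition in $\optCRdet$ and will be matched by the lower bound derived in \Cref{subsec:deterministic-lower-bound}.
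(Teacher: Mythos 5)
Your proof is correct and follows exactly the same route as the paper: invoke Theorem~\ref{thm:competitive ratio deterministic integral} after checking $\penscalar \geq 1+\buybackcost$, then observe that the substitution $\penscalar = \tfrac{1+\buybackcost}{1-\buybackcost}$ makes the numerator a constant multiple of the denominator so the $\max$ over $\weight$ is trivial. The only cosmetic difference is that you spell out the admissibility inequality and the cancellation slightly more explicitly than the paper does.
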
}
\begin{proof}{\emph{Proof.}}
By construction, $\penscalar\geq {1+\buybackcost}$.
We invoke \Cref{thm:competitive ratio deterministic integral}. Therefore, the competitive ratio is at most 
\begin{align*}
    \max_{\weight \geq \penscalar}
    \frac
    {
    (\penscalar + 1)\weight 
    -
    2\penscalar
    }
    {
    \weight - (1 + \buybackcost) 
    }
    =
    \max_{\weight \geq \penscalar}
    \frac{
    \left(
    \frac{1+\buybackcost}{1 - \buybackcost} + 1
    \right)\weight 
    -
    2\cdot \frac{1+\buybackcost}{1 - \buybackcost} 
    }{
    \weight - (1 + \buybackcost)
    }
    =
    \max_{\weight \geq \penscalar}
    \frac{
    \frac{2}{1 - \buybackcost}
    \weight 
    -
    \frac{2}{1- \buybackcost}(1+\buybackcost)
    }{
    \weight - (1 + \buybackcost)
    }
    =
    \frac{2}{1- \buybackcost}~~.\quad\qquad\qquad
    \halmos
\end{align*}
\end{proof}

\begin{corollary}[Competitive ratio $\boldsymbol{\optCRdet}$ in the large buyback cost regime]
\label{coro:optimal deterministic competitive ratio large f}
For every buyback factor $\buybackcost\geq \detthreshold$,
\Cref{alg:opt deterministic matching}
with $\penscalar = 1 + \buybackcost +
\sqrt{\buybackcost(1+\buybackcost)}$
has competitive ratio at most $
1 + 2\buybackcost + 
2\sqrt{\buybackcost(1+\buybackcost)}$.
\end{corollary}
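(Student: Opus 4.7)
The plan is to apply \Cref{thm:competitive ratio deterministic integral}, which reduces the problem to analyzing a one-dimensional rational function. Specifically, with the choice $\penscalar = 1 + \buybackcost + \sqrt{\buybackcost(1+\buybackcost)}$, we first check the hypothesis $\penscalar \geq 1+\buybackcost$, which holds trivially because $\sqrt{\buybackcost(1+\buybackcost)} \geq 0$. Then the theorem gives us
\begin{align*}
\approxratiodet(\buybackcost) \leq \max_{\weight \geq \penscalar} g(\weight), \qquad g(\weight) \triangleq \frac{(\penscalar+1)\weight - 2\penscalar}{\weight - (1+\buybackcost)},
\end{align*}
so the task becomes to evaluate this maximum in closed form.

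Next I would analyze the monotonicity of $g$ on $[\penscalar, \infty)$ by computing the sign of its derivative. A direct calculation of $g'(\weight)$ shows the sign is determined by the constant $\penscalar(1-\buybackcost)-(1+\buybackcost)$, which factors as $\sqrt{\buybackcost(1+\buybackcost)}\bigl[(1-\buybackcost) - \sqrt{\buybackcost(1+\buybackcost)}\bigr]$ once we substitute the chosen value of $\penscalar$. The key observation is that for $\buybackcost \geq \tfrac{1}{3}$, one has $(1-\buybackcost)^2 \leq \buybackcost(1+\buybackcost)$ (this can be checked by rearranging to $1 - 3\buybackcost \leq 0$), so the bracketed quantity is non-positive and $g$ is non-increasing on $[\penscalar,\infty)$. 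Hence the maximum is attained at $\weight = \penscalar$.

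It then remains to simplify $g(\penscalar)$. Using $\penscalar - (1+\buybackcost) = \sqrt{\buybackcost(1+\buybackcost)}$ and $\penscalar - 1 = \buybackcost + \sqrt{\buybackcost(1+\buybackcost)}$, I get
\begin{align*}
g(\penscalar) = \frac{\penscalar(\penscalar-1)}{\penscalar - (1+\buybackcost)} = \frac{\bigl(1+\buybackcost + \sqrt{\buybackcost(1+\buybackcost)}\bigr)\bigl(\buybackcost + \sqrt{\buybackcost(1+\buybackcost)}\bigr)}{\sqrt{\buybackcost(1+\buybackcost)}},
\end{align*}
and expanding the numerator yields $2\buybackcost(1+\buybackcost) + (1+2\buybackcost)\sqrt{\buybackcost(1+\buybackcost)}$. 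Dividing by $\sqrt{\buybackcost(1+\buybackcost)}$ gives exactly $1 + 2\buybackcost + 2\sqrt{\buybackcost(1+\buybackcost)}$, matching the claim.

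The bulk of this argument is routine algebra; the only genuinely delicate step is verifying the monotonicity direction, which is precisely where the threshold $\buybackcost \geq \tfrac{1}{3}$ enters. This is not a real obstacle but rather an important sanity check, since it also explains why a different tuning of $\penscalar$ is required in the small-$\buybackcost$ regime (\Cref{coro:optimal deterministic competitive ratio small f}): below the phase transition threshold $\buybackcost = \tfrac{1}{3}$, the function $g$ would instead be increasing in $\weight$, so the supremum would be governed by the behavior as $\weight \to \infty$ (namely $\penscalar+1$) and the value $\penscalar$ of the present corollary would no longer be optimal. Minimizing the resulting competitive ratio over $\penscalar$ in that regime leads to the choice $\penscalar = \tfrac{1+\buybackcost}{1-\buybackcost}$ and ratio $\tfrac{2}{1-\buybackcost}$, and one can verify that the two formulas coincide at $\buybackcost = \tfrac{1}{3}$, as depicted in \Cref{fig:CRs}.
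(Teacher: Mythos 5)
Your proof is correct and takes essentially the same approach as the paper: both invoke \Cref{thm:competitive ratio deterministic integral} and then verify that the resulting one-variable maximum equals the claimed value. The paper verifies the bound by cross-multiplying and factoring the inequality $g(\weight)\leq 1+2\buybackcost+2\sqrt{\buybackcost(1+\buybackcost)}$ directly, whereas you first determine the sign of $g'$ to show $g$ is non-increasing on $[\penscalar,\infty)$ and then evaluate at the endpoint $\weight=\penscalar$; the two manipulations are algebraically equivalent, but your route has the additional benefit of making transparent why the threshold $\buybackcost=\tfrac{1}{3}$ is the phase-transition point (it is exactly where the monotonicity of $g$ flips).
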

\begin{proof}{\emph{Proof.}}
By construction $\penscalar > 1+\buybackcost$.
Invoking \Cref{thm:competitive ratio deterministic integral},
the competitive ratio is at most 
\begin{align*}
    \max_{\weight \geq \penscalar}
    \frac
    {
    (\penscalar + 1)\weight 
    -
    2\penscalar
    }
    {
    \weight - (1 + \buybackcost) 
    }
    &=
    \max_{\weight \geq \penscalar}
    \frac
    {
    \left(
    2  + \buybackcost +
\sqrt{\buybackcost(1+\buybackcost)}
    \right)\weight 
    -
    2\left( 1 + \buybackcost +
\sqrt{\buybackcost(1+\buybackcost)}\right)
    }
    {
    \weight - (1 + \buybackcost)
    }
\end{align*}
Thus, it is sufficient to show that 
for every $\weight \geq  1  + \buybackcost +
\sqrt{\buybackcost(1+\buybackcost)}$,
\begin{align*}
     \left(
    2  + \buybackcost +
\sqrt{\buybackcost(1+\buybackcost)}
    \right)\weight 
    -
    2\left( 1 + \buybackcost +
\sqrt{\buybackcost(1+\buybackcost)}\right)
\geq 
\left(
1 + 2\buybackcost + 
2\sqrt{\buybackcost(1+\buybackcost)}
\right)
\left(\weight - (1 + \buybackcost)\right)~.
\end{align*}
This is simplified to $\left(
    \buybackcost - 1 + 
    \sqrt{\buybackcost(1 + \buybackcost)}\right)
    \left(
    \weight - 
    \left(1  + \buybackcost +
\sqrt{\buybackcost(1+\buybackcost)}
    \right)\right)
    \geq 0$,
which holds as $\buybackcost\geq \detthreshold$. \hfill\halmos
\end{proof}
\revcolor{Comparing the competitive ratio bound in \Cref{coro:optimal deterministic competitive ratio small f,coro:optimal deterministic competitive ratio large f}, we remark that for $\buybackcost<1$,
$\frac{2}{1-\buybackcost} \geq 1 + 2\buybackcost + 
2\sqrt{\buybackcost(1+\buybackcost)}$ is equivalent to:
\begin{align*}
    1+2\buybackcost^2-\buybackcost\geq 2(1-\buybackcost)\sqrt{\buybackcost(1+\buybackcost)} \iff \left(\buybackcost-\frac{1}{3}\right)^2 \geq 0.
\end{align*}
Therefore, the bound in \Cref{coro:optimal deterministic competitive ratio large f} always provides a stronger result when applicable, i.e., for $\buybackcost\geq \frac{1}{3}$, whereas the bound in \Cref{coro:optimal deterministic competitive ratio small f} is tight for $\buybackcost \leq \frac{1}{3}$.}

\revcolor{
\begin{remark}\label{remark:tau_greater_1+f}
    Note that for positive values of $f$, the optimal $\tau$ in both \Cref{coro:optimal deterministic competitive ratio small f} and \Cref{coro:optimal deterministic competitive ratio large f} is strictly greater than $1+f$. This arises because although selecting a node with a higher weight yields a greater payoff at the end of the time horizon, it also incurs a higher cost if the algorithm later cancels that node. Consequently, for the algorithm to benefit from switching to a higher-weight node, it must observe strictly positive marginal gains; therefore, the algorithm employs an inflated buyback cost to guide its decision-making.
\end{remark}
}

\vspace{-2mm}
\section{Competitive Ratio Lower Bounds}
\label{sec:lower-bounds-apx}

In this section, we present worst-case instances and establish tight competitive ratio lower bounds for all values of the buyback parameter $f$. We consider the general setting in \Cref{sec:lower-bound}, when fractional allocations and randomization are allowed, and the deterministic integral setting in \Cref{subsec:deterministic-lower-bound}.
\vspace{-3mm}
\subsection{General setting}
\label{sec:lower-bound}
As alluded to earlier in \Cref{sec:intro}, there are two sources of uncertainty at play in our problem: (i)~weight-wise uncertainty, which is the uncertainty in the future arriving weights of the same offline node, and it was already present in single-resource environment (\Cref{sec:single-resource}); and (ii)~edge-wise uncertainty, which is the uncertainty related to the heterogeneity among offline nodes in terms of their connections and weights and it was only present in the matching environment (\Cref{sec:matching}). We use these two sources quite differently in designing worst-case lower bound instances.
\revcolor{We first present a hard instance for the small buyback cost regime (i.e., $\buybackcost\leq \fracthreshold$) that exploits both edge-wise uncertainty and weight-wise uncertainty in the matching environment. We then present a hard instance for the large buyback cost regime (i.e., $\buybackcost\geq \fracthreshold$) that solely utilizes weight-wise uncertainty in the single-resource environment. By combining them, we obtain the following theorem.}

\revcolor{
\begin{restatable}{theorem}{lowerboundmatching}
\label{thm:lower bound all f}
In the matching environment, the optimal competitive ratio $\optCRgen$ 
is at least 
$\frac{e}{e - (1 + \buybackcost)}$ for every buyback factor $\buybackcost\leq \fracthreshold$,
and $-\Lambertterm$ for every buyback factor $\buybackcost\geq \fracthreshold$.\footnote{
As a sanity check, $\frac{e}{e - (1 + \buybackcost)}\geq -\Lambertterm$
for every $\buybackcost \leq \fracthreshold$, and the equality holds 
when $\buybackcost = \fracthreshold$.}
\end{restatable}}

\revcolor{\subsubsection{Hard instance under small buyback factor}
\label{subsec:lower-bound-matching}
We construct a hard instance in \Cref{example:lower bound matching}, which uses both edge-wise and  weight-wise uncertainties to derive the tight lower bound of the competitive ratio in the small buyback cost regime, that is, when $\buybackcost\leq \fracthreshold$. 
The formal analysis of this instance is deferred to \Cref{sec:apx-missing lower bound matching small f}. Below, we outline the key intuition behind our analysis, focusing on the characterizations of the optimal offline benchmark and optimal online algorithm.}

\begin{example}
\label{example:lower bound matching}
Fix an arbitrary $\largenumber\in\naturals$. 
Let $\permu:[\largenumber]\rightarrow[\largenumber]$ be a uniform random permutation over $[\largenumber]$.
Consider a randomized instance with bipartite graph $\bipartitegraph = (\onlinenodes,\offlinenodes,\Edge)$ 
where $\onlinenodes= [\largenumber]$, $\offlinenodes =[\largenumber]$, $\Edge = [\largenumber]\times[\largenumber]$, and edge-weights $\{\weightij\}_{i,j\in[\largenumber]}$ are as follows:
for every edge $(i,j)\in[\largenumber]\times[\largenumber]$,
$\weightij = \frac{1}{\largenumber - i + 1}
    \cdot 
    \indicator{\permu(j) \geq i}$.
\end{example}

In \Cref{example:lower bound matching},
conditioning on a realized permutation $\permu$,
each online node $i\in[\largenumber]$ 
has edge-weights $\frac{1}{\largenumber-i + 1}$
for $(K - i + 1)$ offline nodes,
and zero edge-weight for the remaining offline nodes. 
Furthermore, for every online nodes $i$ and $i + 1$,
the set of offline nodes with non-zero edge-weights are nested,
i.e., $\{j\in[\largenumber]:\weightij > 0\} \equiv \{j\in[\largenumber]:\weight_{i+1,j}>0\}
\cup\{\permu^{-1}(i)\}$.\footnote{Here $\permu^{-1}(\cdot)$ 
is the inverse of permutation 
$\permu$, i.e., $\pi(i)=j \leftrightarrow i=\pi^{-1}(j)$}
By construction, it is straightforward to verify that the optimum offline benchmark matches 
each online node $i$ to offline node $\permu^{-1}(i)$ and
collects total profit $\sum_{i\in[\largenumber]}\sfrac{1}{i}$.


Next, we characterize the optimal online algorithm 
in \Cref{example:lower bound matching}.
We start with a simple and intuitive observation that when  $\buybackcost = 0$
(i.e., no buyback cost),
since the (non-zero) edge-weights are weakly increasing in \Cref{example:lower bound matching}, it is always optimal to allocate the online node $i$ in full to offline nodes $j$ with non-zero edge-weights.
As the main technical ingredient of this lower bound result,
we prove that this observation for $\buybackcost = 0$ approximately holds when
$\buybackcost\leq \fracthreshold$. Specifically, we show that 
for any such buyback factor $\buybackcost$, there exists $\fullallocthreshold\in\naturals$ (independent of $\largenumber$)
such that for every online node $i\leq \largenumber - \fullallocthreshold$, 
the optimal online algorithm allocates the online node $i$ in full
to offline nodes $j$ with non-zero edge-weights.
Additionally, due to the ex-ante symmetry (over the randomness in $\permu$), we also show that the optimal online algorithm
allocates equal fractions of a given online node to each offline node with non-zero edge-weights. \revcolor{See the formal statement of the characterization in \Cref{lem:lower bound matching optimum online}.}


\subsubsection{Hard instance under large buyback factor}
\label{subsec:lower-bound-single}
\revcolor{We construct a hard instance in \Cref{example:lower bound single resource}, which uses weight-wise uncertainty only to derive the tight lower bound of the competitive ratio in the large buyback cost regime, that is, when $\buybackcost\geq \fracthreshold$.
In fact, this single-resource hard instance are indeed the truncated continuum instances adapted from \citet{AK-09} and studied in \Cref{sec:single-resource}. 
} 
\begin{example}
\label{example:lower bound single resource}
Fix an arbitrary $\totaltime_0\in\reals_+$.
Consider a randomized truncated weight continuum instance $\instance_\totaltime
\in
\continstances$ where $\totaltime$
is sampled from density $\sfrac{1}{\totaltime^2}$
in $\totaltime\in[0,\totaltime_0]$,
and with remaining probability $\sfrac{1}{\totaltime_0}$
is equal to $\totaltime_0$.
\end{example}
\revcolor{
In this randomized single-resource instance, it is clear that the optimum offline benchmark matches the last online node with weight $T$ to the offline node. \citet{AK-09} characterize the expected performance of the optimal randomized integral online algorithms. Invoking our lossless randomized rounding procedure (\Cref{alg:algorithm-randomized-reduction-single-resource}) in \Cref{sec:single-lossless-rounding}, we can extend the same performance guarantee to the optimal algorithms among \emph{all} online algorithms, fractional or randomized integral. This completes the analysis of the large buyback cost regime, which can be found in \Cref{sec:apx-missing lower bound single resource}.

Remarkably, \Cref{example:lower bound single resource}, as a randomized single-resource instance, suggests that in the large buyback cost regime, weight-wise uncertainty becomes the dominant factor, and edge-wise uncertainty is no longer necessary for constructing the lower bound. Another notable observation is that, unlike the optimal online algorithm in \Cref{example:lower bound matching} (which applies to the small buyback cost regime and allocates almost fully to each online node), the optimal online algorithm in \Cref{example:lower bound single resource} strategically hedges against future weights. It deliberately refrains from fully allocating to some online nodes, even when such allocations yield strictly positive net rewards.}

\vspace{-3mm}
\subsection{Deterministic integral setting}
\label{subsec:deterministic-lower-bound}
\revcolor{We now present tight lower bounds for deterministic integral algorithms. Notably, for such algorithms, it suffices to construct an (oblivious) \emph{adversary} that generates hard instances based on the algorithm's execution. 
Similar to \Cref{sec:lower-bound}, we explore the roles of edge-wise and weight-wise uncertainty in designing the adversary. 
In \Cref{subsec:deterministic-lower-bound-matching}, we present the adversary for the small buyback cost regime (i.e., $\buybackcost\leq \detthreshold$) that exploits both edge-wise uncertainty and weight-wise uncertainty in the matching environment. In \Cref{subsec:deterministic-lower-bound-single}, we present the adversary for the large buyback cost regime (i.e., $\buybackcost\geq \detthreshold$) that solely utilizes weight-wise uncertainty in the single-resource environment. By combining them, we obtain the following theorem.}
\revcolor{
\begin{restatable}{theorem}{lowerboundmatchingdet}
\label{thm:lower bound deterministic all f}
In the matching environment, 
the optimal competitive ratio of deterministic integral online algorithms $\optCRdet$ is at least $\frac{2}{1-\buybackcost}$ for every  $\buybackcost\leq \detthreshold$, and $1 + 2\buybackcost + 
2\sqrt{\buybackcost(1+\buybackcost)}$ for every $\buybackcost\geq\detthreshold$.\footnote{\revcolor{As a sanity check, $\frac{2}{1-\buybackcost}\geq 1 + 2\buybackcost + 
2\sqrt{\buybackcost(1+\buybackcost)}$
for every $\buybackcost \leq \detthreshold$, and the equality holds 
when $\buybackcost = \detthreshold$.}}
\end{restatable}
}

\subsubsection{Adversary under small buyback factor}
\label{subsec:deterministic-lower-bound-matching}
\revcolor{We sketch our constructed adversary in \Cref{example:lower bound Deterministic matching} and outline the main analysis idea, which leverages both edge-wise and weight-wise uncertainty to derive a tight lower bound on the competitive ratio in the small buyback cost regime ($\buybackcost \leq \fracthreshold$). The formal analysis of this instance is deferred to \Cref{sec:apx-missing matching lower determinstic}.}

\begin{example}
\label{example:lower bound Deterministic matching}
\revcolor{Fix an arbitrary online deterministic integral algorithm $\ALG$. The adversary generates an instance with $n = 2$ offline nodes. Without loss of generality, assume that $\ALG$ allocates the first online node to the offline node number $1$ if $\weight_{11} = \weight_{12}\equiv \weight_0$ (if $\ALG$ decides not to allocate, the competitive ratio will be unbounded, as the adversary can just terminate the instance). Let $(x,y)$ denote an arrival where the weight to offline nodes 1 and 2 are $x$ and $y$, respectively. Let the instance be $\{(\weight_0,\weight_0),(\weight_1,0),...,(\weight_{i},0), \dots\}$, where $\{\weight_i\}_{i\in\naturals}$ is an increasing sequence such that $\weight_i$ is the infimum weight $\weight$ $\ALG$ accepts, i.e., the minimum weight at which 
$\ALG$ allocates the offline node 1 to the newly arriving online node and buys back the previously allocated one, given the previous arrivals.}
\end{example}
\revcolor{
To see the high-level idea behind our analysis,
 fix an arbitrary online deterministic integral algorithm $\ALG$ with competitive ratio $\approxratio$. When facing the adversary in \Cref{example:lower bound Deterministic matching}, by construction, the following holds:
\begin{align*}
    \forall k\in\naturals,\forall \varepsilon\in\reals_{>0}:\qquad \frac{\weight_0+\weight_k - \varepsilon}{\weight_{k - 1} - \buybackcost\cdot(\weight_0 + \dots + \weight_{k - 2})} \leq \approxratio
\end{align*}
The rationale behind this inequality is as follows. Consider the instance with arrival weights $\{(\weight_0,\weight_0),(\weight_1,0),...,(\weight_{i-1},0),(\weight_i-\varepsilon,0)\}$ as specified in \Cref{example:lower bound Deterministic matching}. By the definition of the sequence $\{\weight_i\}_{i \in \naturals}$, the optimal offline benchmark obtains $\weight_0 + \weight_k - \varepsilon$, whereas $\ALG$ obtains $\weight_{k - 1} - \buybackcost \cdot (\weight_0 + \dots + \weight_{k - 2})$. Since $\ALG$ has competitive ratio $\approxratio$, it must achieve at least an $\approxratio$ fraction of the offline benchmark on this instance as well, yielding the stated inequality. Now suppose the above inequality holds with equality for all $k \in \naturals$ and $\varepsilon = 0$. (In our formal analysis, we show that the instance can be adjusted to ensure this.) Under this assumption, we obtain the following recursive equation for the sequence $\{\weight_i\}_{i\in\naturals}$: 
\begin{align*}
    \weight_i = (\approxratio+1)\weight_{i - 1} - \approxratio(1 + \buybackcost)\weight_{i - 2}
\end{align*}
By definition, $\weight_i$ must remain weakly positive for all $i \in \naturals$. This condition holds if and only if the competitive ratio $\approxratio$ is at least $\frac{2}{1-\buybackcost}$, which follows from the recursive equation above and a straightforward algebraic analysis.

\subsubsection{Adversary under large buyback factor}
\label{subsec:deterministic-lower-bound-single}
We sketch our constructed adversary in \Cref{example:lower bound single resource}, which uses weight-wise uncertainty only to derive the tight lower bound of the competitive ratio in the large buyback cost regime, that is, when $\buybackcost\geq \detthreshold$.
This single-resource hard instance are adapted from \citet{BHK-09}. The formal analysis of this instance is deferred to \Cref{sec:apx-missing single lower determinstic}.}


\begin{example}
\label{example:lower bound Deterministic single-resource}
\revcolor{Fix an arbitrary online deterministic integral algorithm $\ALG$. The adversary generates an instance with $n = 1$ offline node. Fix an arbitrary positive number $\weight_0$. Let $\{\weight_i\}_{i\in\naturals}$ be the increasing sequence such that $\weight_i$ is the infimum weight $\weight$ the algorithm sells to given that the weight of previous arrivals are $\{\weight_0,\weight_1,...,\weight_{i}\}$. Let the instance be a sequence of online nodes with weights $\weight_0, \weight_1,\dots$, respectively.}
\end{example}
\revcolor{The analysis parallels \Cref{subsec:deterministic-lower-bound-matching}: we show that the weight sequence ${\weight_i}_{i\in\naturals}$ obeys a similar recursive equation, and that it remains positive iff the competitive ratio is at least $1 + 2\buybackcost + 2\sqrt{\buybackcost(1+\buybackcost)}$.}


\section{Conclusion and Future Directions}
\label{sec:conclusion}

We studied the online allocation of multiple resources allowing for over-allocation and costly cancellations (or equivalently, online discarding), where cancellation costs are linear with a fixed slope $\buybackcost$ in the previously assigned weight. Our primary contribution is a complete characterization of the optimal competitive ratio, including both upper and lower bounds, in the adversarial model across all parameter regimes of the buyback factor~$\buybackcost$. We established a fundamental connection between the celebrated primal-dual framework and the buyback problem, showing how this connection can be leveraged to derive optimally competitive algorithms.

\smallskip
\noindent\textbf{Future directions.} Several promising directions exist for future research. First, further exploration of the primal-dual connection could extend its applicability to other problems involving costly cancellations. Another fruitful avenue is to investigate more sophisticated methods for augmenting current matching approaches by incorporating appropriate costs. Additionally, analyzing alternative forms of cancellation costs may yield deeper insights and richer models. Another intriguing direction is focusing on the small buyback regime to determine whether randomized integral online algorithms can achieve competitive ratios better than 2 for any $\buybackcost > 0$---analogous to the scenario explored by \cite{FHTZ-20} for the special case of $\buybackcost = 0$. As another future direction, one can think about the Bayesian version of the problem where the arrivals are drawn from known distributions. \revcolor{The work of \cite{ENNV-24} initiated the study of prophet inequalities under the buyback model, and it would be interesting to see whether that study could be extended to matching and other combinatorial problems.} Lastly, studying a broader class of resource allocation problems, where the reward for each offline resource depends on a high-dimensional function of its allocation distribution, poses a compelling and challenging open question. Determining whether such generalized models can admit constant-competitive algorithms remains an important research direction.


\newcommand{\newblock}{}
\setlength{\bibsep}{0.0pt}
\bibliographystyle{plainnat}
\OneAndAHalfSpacedXI
{\footnotesize
\bibliography{refs}}


\renewcommand{\theHchapter}{A\arabic{chapter}}
\renewcommand{\theHsection}{A\arabic{section}}
\newpage
\ECSwitch
\ECDisclaimer
\renewcommand{\theHchapter}{A\arabic{chapter}}
\renewcommand{\theHsection}{A\arabic{section}}

\section{Further Related Work}
\label{sec:further-related-work}

\paragraph{Single-resource online allocation with buyback.}
As mentioned earlier, \citet{BHK-09} studied the single-resource allocation problem with costly cancellations, a special case of our model. They identified the optimal deterministic integral algorithm and extended their results to matroid environments and knapsack environments (with an inflated capacity constraint). In follow-up works, \citet{AK-09} identified the optimal randomized algorithm for the single-resource setting and generalized it to matroid environments. Additionally, \citet{ash-11} identified the optimal deterministic algorithm for the intersection of general matroid environments, while \citet{HKM-14} identified the optimal deterministic algorithm for knapsack environments with identical weights.
In a parallel line of work, \citet{CFMP-09} studied online resource allocation with buyback under transversal matroids. They further assumed that the online demand requests are strategic, focusing on incentive-compatible and individually rational mechanisms. \revcolor{More recently, the single-resource buyback problem has also been studied in the prophet inequality setting~\citep{ENNV-24}, where the weights arrive independently from known distributions.}

\paragraph{Online matching with free disposal.}
\citet{FKMMP-09} and \citet{DHKMY-16} studied online edge-weighted matching with free disposal—a special case of our model where $\buybackcost = 0$—under the large capacity regime (or equivalently, fractional allocations) and identified the optimal competitive algorithm. For small capacities, the breakthrough work of \citet{FHTZ-20} introduced a randomized integral algorithm whose competitive ratio strictly exceeds the $2$ ratio achievable by the greedy algorithm. Another related work is \citet{DJ-12}, which studied online matching with concave returns. In their model, offline nodes do not have fixed capacities but instead are associated with a concave function that maps the total reward collected by the offline node to its utility. Our model differs from this because the profit of each offline node depends not only on the total reward but also on the entire allocation history, which determines the buyback costs. Lastly, \citet{BFS-14} studied online submodular maximization with free disposal, which, again, does not capture the concept of buyback costs.

\paragraph{Primal-dual framework in online matching.} The Primal-dual framework is studied extensively for designing and analyzing online resource allocation algorithms, e.g., see \cite{MSVV-07}, \cite{BN-09}, \cite{DJK-13}. More recently, this technique has found more exciting applications in both computer science and operations research, e.g., Adwords under small capacities~\citep{HZZ-20}, Adwords with unknown budgets~\citep{Udw-24,Vaz-23}, online allocation of reusable resource~\citep{GGISUW-22,DFNSU-24,GIU-20,FNS-19},  online assignment of jobs~\citep{EFKN-25}, online advance scheduling with overtime~\citep{KSV-21}, online matching with multi-channel traffic~\citep{MRSS-24}, online matching with stochastic rewards~\citep{MP-12,GU-23}, and multi-stage online matching~\citep{FNS-24,FN-25}.

\paragraph{Robust revenue management.}
Our model makes no assumptions about the sequence of arriving customers (e.g., no need for stochastic knowledge, identical rewards, etc.). Similar adversarial arrival models have been considered in various robust revenue management problems, including online booking \citep[e.g.,][]{BQ-09}, assortment \citep[e.g.,][]{GNR-14,GGISUW-22}, allocation \citep[e.g.,][]{BLM-22}, dynamic staffing with predictions\citep[e.g.,][]{feng2025robust}, and pricing\citep[e.g.,][]{LLV-18}.

\paragraph{Other related models in dynamic revenue management.}

Our work is also conceptually related to dynamic revenue management, where offline supplies have soft inventory constraints, or dynamic decisions can be altered later. Examples include dynamic revenue management with cancellation (on the demand side) and overbooking~\citep[e.g.,][]{rot-71,ET-10,ABFN-13,DKX-19,FKH-22,FZ-21,FJT-25}, as well as dynamic inventory control, matching, and staffing problems~\citep[e.g.,][]{LRS-07,mos-10,BM-13,CS-19,AJ-22,QSW-22,bansak2024dynamic,feng2025robust}.

Another related problem is the revenue management with callable products, introduced by \citet{GKP-08} and motivated by applications in the airline industry. They studied a two-period model in which low-fare consumers arrive in the first period, granting the capacity provider an option to ``call'' the capacity back at a specified recall price (which is essentially what we refer to as a buyback in our model).
In the second period, high-fare consumers arrive, and the capacity provider decides how many callable products to exercise. In two follow-up works, \citet{GL-18} extended the model to include multi-fare consumers and developed heuristic policies. Later, \citet{GL-20} studied online assortment with callable products, characterizing the optimal policy for both the general choice models and the multinomial-logit model.


\revcolor{
\section{Discussion on Practical Applications}
\label{app:practical}


In the following, we present a list of diverse applications highlighting the relevance and practical importance of explicitly modeling costly cancellations. Although the exact mathematical model presented in this paper may not directly capture all aspects of these applications, our study of matching models involving revocations of previous allocations at a cost is partly inspired by the trade-off between platform's gain and externalities due to cancellations in these real-world scenarios.

\smallskip
\paragraph{Cloud computing spot markets.} Cloud computing providers, such as Amazon EC2, offer spot instances at significantly discounted rates but reserve the right to reclaim these resources with minimal notice. For example, Amazon EC2 Spot Instances offer access to unused capacity at up to 90\% lower cost than on-demand rates, yet AWS may reclaim these instances with as little as a two-minute warning~\citep{AWS1,AWS2}. While this practice optimizes resource utilization and can increase the revenue of the platform, it introduces uncertainty for users. Providers typically employ compensation schemes\footnote{See \hyperlink{https://docs.aws.amazon.com/AWSEC2/latest/UserGuide/billing-for-interrupted-spot-instances.html}{https://docs.aws.amazon.com/AWSEC2/latest/UserGuide/billing-for-interrupted-spot-instances.html} for more details.} or redundancy provisioning to manage interruptions, effectively modeling cancellations as costs. The buyback framework naturally captures this trade-off, enabling providers to balance extra utilization and revenue gains against the risk and penalties of cancellations.

\smallskip
\paragraph{Overbooking in hotel and airline revenue management.} Overbooking is common practice in the hotels and airlines industries to protect against revenue losses due to no-shows and committing resources to low-valued but early customers~\citep{LY-78,rot-71}. In this practice, companies accept reservations beyond their actual capacity, knowing that cancellations (or adjustments) might be necessary at the end. Overbooking with cancellations can happen for the main products or even upgrades (e.g., in hotel revenue management, up-sell products can be overbooked and later assigned to the best requests). These cancellations involve operational costs, compensation, and potential customer dissatisfaction. By modeling per-unit-value cancellation costs explicitly through a buyback factor, businesses can systematically manage overbooking strategies, optimizing profits while maintaining customer happiness. We should also note that physical compensation fees in the form of percentages of the value of the booking request are common in the airline industry. For example, U.S.\ Department of Transportation regulations mandate guaranteed compensation when passengers are involuntarily bumped—typically 200-400\% of the one-way fare, capped at \$1,075-\$2,150 depending on delay duration and route~\citep{USDOT2011}. Similar compensation rules are occasionally used in the overbooking of upgrade products in hotel revenue management~\citep{GBU2023personal,NorOne,Booking.com}. 

\smallskip
\paragraph{Display advertising.} Various forms of (costly) cancellations happen in digital advertising. For example, in display ads, an ad-exchange platform (such as Google) assigns arriving inventories of impressions at different publishers' websites to its advertisers, who are bidding for these inventories. These advertisers typically have daily contracts with the platform that requests a certain number of impressions at maximum, or in other words, they commit to purchasing at most a certain number of impressions and are obligated to pay only for those that they purchase. Under such agreements, the allocation algorithm may initially assign impressions beyond an advertiser's committed capacity, ultimately receiving payment only for the highest-valued impressions within that limit. This situation is often modeled via the \emph{free-disposal} assumption~\citep{FKMMP-09,DHKMY-16} in old practices in this industry, which allows the online algorithm to dispose of previously assigned bids for free, to make room for new larger bids. However, excess impressions could be potentially used in other advertising channels, and also create bad long-term incentives for advertisers to bid lower (as they might think there are excess supply on the seller side, i.e., the publishers inventory of impressions). Incorporating buyback costs allows both publishers to quantify and control cancellation  explicitly, thus balancing immediate revenue maximization with negative externalities and adverse effects of excess allocated impressions. 

\smallskip
\paragraph{Selling banner ads.} In certain display advertising scenarios, companies sell banner ad placements to advertisers in advance~\citep{MSN}. Advertisers request digital banner slots (like a billboard on a website) targeting specific demographics, timeframes, and web locations, each with varying bids. When requests arrive sequentially, ad sellers must decide immediately whether to accept or reject them without knowing future requests. Accepted requests can later be canceled (or "bought back") at a cost proportional to their value, representing contract cancellation penalties. This buyback mechanism allows sellers to revise earlier commitments to accommodate more valuable later requests, optimizing revenue. \cite{BHK-09} studied algorithms for a simplified version of this problem where there is only a single slot for banner ads. However, it is often the case that there are multiple webpages that can accommodate banner ads, with heterologous features and compatibilities with the advertisers' targeted ad. In addition, ad exchange platforms often work with multiple publishers that sell their banner ads. Our framework generalizes the setting in \cite{BHK-09}: it supports multiple heterogeneous banners across publishers, each with distinct compatibilities to advertiser bids. The mechanism we propose can serve as a backend module that optimally matches arriving bids to multiple ad slots while accounting for the cost of revoking earlier allocations. 

\smallskip
\paragraph{Many-to-one notifications in ride-sharing platforms.} Traditional matching in ride-sharing involves matching a ride to a single driver. However, in order to increase the efficiency of the matching, companies such as Uber and Lyft use matching systems where a ride is matched to multiple drivers at first (in fact, several drivers are being non-exclusively notified to take this ride). Subsequently, the platform finalizes the assignment by selecting the best available driver and canceling the others. These emerging products---termed as Ride-Finder (and more recently Non-Exclusive Notifications) at Lyft~\footnote{\hyperlink{https://help.lyft.com/hc/en-us/driver/articles/3202901162-Ride-Finder}{https://help.lyft.com/hc/en-us/driver/articles/3202901162-Ride-Finder}} and Trip Radar at Uber~\footnote{\hyperlink{https://www.uber.com/en-AU/blog/introducing-trip-radar/}{https://www.uber.com/en-AU/blog/introducing-trip-radar/}}---are projected to help immensely with the immediate matching performance~\citep{Lyft24personal}. However, frequent cancellations negatively impact driver satisfaction and retention. Thinking about riders as the online nodes and drivers as the offline nodes, this applications partially fits our framework. Explicitly modeling these cancellations through a buyback cost framework enables platforms to balance improved match quality with the associated dissatisfaction of canceled drivers.

\paragraph{Two-Sided matching and assortment planning.} Two-sided platforms, such as online dating services or professional service marketplaces, frequently assign multiple potential matches to users of one side of the market (e.g., several males matched to the same female profile), mostly to improve the efficiency of the matching~\citep{TRJTLJW-14}. At the end of the process, the best match is retained---or equivalently, the offline node automatically chooses its best match---and the rest are canceled. Similarly, in two-sided assortment planning~~\citep{AS-23}, the platform dynamically displays multiple product options to arriving customers, allowing same product to be selected by multiple customers at first. Then later the platform revokes lower-value assignments in favor of higher-value ones. In such settings, cancellations are inevitable but not free. Ultimately, only one match is confirmed per offline node, and all other provisional matches may be revoked, potentially causing disappointment or friction. By incorporating a cancellation cost into our matching framework, we quantify this trade-off, allowing platforms to strike a balance between offering flexibility through some sort of ``overbooking'' and minimizing the negative impact of revocations. This approach improves the overall quality of the matching while managing user satisfaction.

\paragraph{Negative buyback factors and secondary supply channels.} In certain contexts, resources may have secondary supply channels or overflow capacities available at discounted rates. For instance, previously allocated bandwidth or reserved cloud resources might be resold or reassigned at discounted values. This scenario corresponds to negative buyback factors ($\buybackcost \in [-1,0]$); in such a case, when an online node's initial assignment to a resource is `revoked', it basically means that it is now assigned to the secondary supply provided for that resource, which changes the reward of this assignment from $\weight$ to $\buybackcost\cdot \weight \in[0,\weight]$. We study this setting in \Cref{apx:negative buyback cost} and show how to extend our results to this setting. In a nutshell, our model naturally extends to these scenarios, enabling decision-makers to optimally manage and quantify these beneficial reallocations.

\paragraph{Other applications.}
Certain industries, such as franchise fitness centers, often elicit booking requests for a service (e.g., a fitness class) during a certain period. To accommodate these requests, they employ priority waitlists, reassigning allocated spots as higher-priority or better-suited participants arrive. These reassignments, which often involves assigning a customer to a lower-tier service (e.g., downgrading from double-floor to treadmill\&floor, as reported by one of the authors of this paper), cause user dissatisfaction. In another application, live-streaming platforms dynamically allocate and reallocate bandwidth among streams, sometimes lowering resolution or bandwidth to certain streams to prioritize others. Such adjustments can degrade user experience, effectively representing cancellations. Lastly, in volunteer matching platforms, initial assignments of volunteers to tasks frequently evolve as more suitable volunteers become available, necessitating controlled cancellations to optimize match quality. Our framework explicitly models these reallocations, enabling platforms to precisely control and balance the quality of service against user dissatisfaction and cancellation penalties.

}

\section{Remaining Technical Details in Section~\ref{sec:single-resource}}
\label{sec: apxsec3}

In this section, we include all the technical details missing in \Cref{sec:single-resource}.

\subsection{Proof of Proposition~\ref{prop:algorithm reduction single resource}}
\label{app:proof-reduction}
\thmfractionalreduction*
\begin{proof}{\emph{Proof of \Cref{prop:algorithm reduction single resource}.}}
As discussed in the earlier proof sketch, our proof is a two-step argument.

\smallskip
\noindent\emph{\underline{Step 1} - Reduction from general instances in the discounted-allocation model.} Consider the following variant of edge-weighted online matching with buyback.
\begin{definition}[Discounted-Allocation Model]
\label{def:lemon juice model}
In the \emph{discounted-allocation model}, once an online node $i\in\onlinenodes$ with edge weights $\{\weightij\}_{j\in\offlinenodes}$ arrives, for each $dx$ amount of offline node~$j$ allocated to online node $i$, the decision maker also immediately and irrevocably specifies a discounted per-unit price $\price$ which is weakly smaller than edge weight $\weightij$, and collects $\price\cdot  dx$ amount of reward. The buyback cost depends linearly on the allocated per-unit prices. In particular, when the decision maker buys back $dx$ amount of allocation with previously specified per-unit discounted price $\price'$, she loses the reward $\price'\cdot dx$ and pays an extra cost of $\buybackcost\cdot \price'\cdot dx$ due to buybacks.
\end{definition}

It is straightforward to see that the optimum offline benchmark is the same for the base model and the discounted-allocation model, since the optimum offline never buys back and always sets discounted per-unit prices equal to edge weights. Similarly, every online algorithm in the base model is well-defined in the discounted-allocation model (by setting discounted per-unit prices equal to the edge weights). 

Now we present the main claim in the first step of our argument.
\begin{lemma}
\label{claim:reduction from discounted allocation model}
In the single-resource environment, for any online algorithm $\ALG$ with the competitive ratio $\approxratio$
within the truncated weight continuum instances $\continstances$ in the base model,
there exists an online algorithm $\ALG\primed$ that has a competitive ratio of  $\approxratio$ within the class of all instances $\instances$ in the discounted-allocation model.
\end{lemma}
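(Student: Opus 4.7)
The plan is to construct $\ALG\primed$ as an explicit simulation of $\ALG$ on a growing truncated weight continuum, where the discounted-per-unit-price feature of the new model lets $\ALG\primed$ faithfully mirror $\ALG$'s behavior on proxy weights that need not coincide with any arriving edge weight. Concretely, $\ALG\primed$ maintains the running maximum $\weightmax$ of the arrived edge weights together with a simulated execution of $\ALG$ on the truncated weight continuum up to $\weightmax$. Upon arrival of an online node $i$ with weight $\weighti$, if $\weighti \leq \weightmax$ then $\ALG\primed$ allocates nothing; otherwise, $\ALG\primed$ sequentially feeds to the simulated $\ALG$ the continuum of proxy online nodes with weights $w \in (\weightmax, \weighti]$. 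Each infinitesimal allocation $dx$ that $\ALG$ makes to a proxy weight $w$ is mirrored by $\ALG\primed$ as an allocation of amount $dx$ to the \emph{real} online node $i$ at discounted per-unit price $p = w$; this is admissible precisely because $w \leq \weighti$. Each buyback $\ALG$ performs at weight $w''$ is mirrored by $\ALG\primed$ as a buyback of a previously made allocation whose recorded per-unit price equals $w''$. Finally, $\weightmax$ is updated to $\weighti$.

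Next, I would invoke a direct coupling argument. By construction, after $\ALG\primed$ finishes processing the entire instance $\instance \in \instances$, letting $T^\star \triangleq \max_{i \in \onlinenodes} \weighti$, the simulated $\ALG$ has executed precisely on the truncated weight continuum instance $\instance_{T^\star} \in \continstances$. Since the profit in both models is a linear functional depending only on the per-unit prices at which allocations and buybacks occur --- and these prices coincide one-to-one between the two executions by construction --- I obtain the profit equality $\ALG\primed(\instance) = \ALG(\instance_{T^\star})$. Moreover, the optimum offline benchmark is identical across the two instances, because it simply collects the single largest weight: $\OPT(\instance) = T^\star = \OPT(\instance_{T^\star})$. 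Combining with the $\approxratio$-competitive guarantee of $\ALG$ on $\continstances$ yields
\begin{align*}
\frac{\OPT(\instance)}{\ALG\primed(\instance)} \;=\; \frac{\OPT(\instance_{T^\star})}{\ALG(\instance_{T^\star})} \;\leq\; \approxratio,
\end{align*}
which establishes the claimed competitive ratio.

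The main obstacle, though conceptually modest, is verifying the legality of the mirroring under the continuum framing. The key invariants to check inductively along the execution of $\ALG\primed$ are: (i) every allocation $\ALG\primed$ makes is to an online node currently being processed, at a discounted price no larger than that node's edge weight (immediate from $w \leq \weighti$); and (ii) every buyback $\ALG\primed$ performs corresponds bijectively to an earlier allocation it made at exactly the matching per-unit price (immediate since $\ALG\primed$'s allocation records and $\ALG$'s buyback targets are both indexed by the same simulated proxy weights, and the capacity/feasibility of the buyback in $\ALG\primed$ follows from the well-definedness of $\ALG$ on $\continstances$). Once these invariants are in place, the profit equality collapses to a straightforward bookkeeping identity, and the lemma follows.
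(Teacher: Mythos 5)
Your proof is correct and takes essentially the same approach as the paper: a direct simulation of $\ALG$ on a growing truncated weight continuum, with proxy weights mirrored as discounted per-unit prices, followed by the observation that both the profit and the optimum offline benchmark coincide (the paper handles non-monotone weight sequences via a footnote on taking running maxima, while you handle it inline by ignoring arrivals below $\weightmax$, which is equivalent).
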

\begin{proof}{\emph{Proof of \Cref{claim:reduction from discounted allocation model}.}}
\revcolor{The key idea is a simulation argument. Fix any online algorithm $\ALG$ for the base model on instances from $\continstances$, and consider an arbitrary instance $\instance \in \instances$ with online weights $\weight_1 \leq \weight_2 \leq \dots \leq \weight_\totaltime$.\footnote{If the weights are not sorted, we can define a non-decreasing sequence $\weight_i' = \max_{\ell \leq i} \weight_\ell$ and apply the same construction.} For each online node $i \in [\totaltime]$, we define a weight-continuum segment over the interval $[\weight_{i-1}, \weight_i]$ (with $\weight_0 := 0$). We simulate the behavior of $\ALG$ on this truncated instance, denoted by $\instance_{\weight_i}$, and interpret its decisions as guidance for the algorithm $\ALG\primed$. Let $\alloc(\weight)$ be the allocation density function used by $\ALG$ over the continuum $[\weight_{i-1}, \weight_i]$. That is, for each infinitesimal weight $w$ in this interval, $\ALG$ allocates $\alloc(\weight)\,\dd\weight$ units of the resource at per-unit price $\weight$. In the discounted-allocation model, the algorithm $\ALG\primed$ emulates this behavior by choosing the same allocation density using discounted per-unit prices over the interval $[\weight_{i-1}, \weight_i]$. In particular $\ALG\primed$ sells $\alloc(\weight)\,\dd \weight$ at price $\weight \in [\weight_{i-1}, \weight_i]$. Since this model allows allocating any fraction of the resource at a per-unit price strictly less than the online node's weight, such simulation is feasible. The buyback decisions of $\ALG$ can also be mimicked by discounting previous allocations accordingly.

By construction, the total profit realized by $\ALG\primed$ on the original instance $\instance$ is identical to the profit earned by $\ALG$ on the truncated instance $\instance_{\weight_\totaltime}$. Moreover, the offline benchmarks for both models coincide under this construction, so the competitive ratio is preserved which completes the proof.}
\hfill\halmos
\end{proof}

\smallskip
\noindent\emph{\underline{Step 2} - Reduction from general instances in the base model.} Recall that every online algorithm in the base model is also well-defined in the discounted-allocation model (by setting discounted per-unit prices equal to the edge weights). In this step, we argue that the reverse also holds, and thus two models are essentially equivalent.

\begin{lemma}
\label{prop:lemon juice}
For any online algorithm $\ALG$ in the base (resp.\ discounted-allocation) model,
there exists an online algorithm $\ALG\primed$ in the 
discounted-allocation (resp.\ base) model
such that for any instance $\instance\in\instances$,
the expected profit in $\ALG$
is the same as the expected profit in $\ALG\primed$,
i.e., $\ALG(\instance) = \ALG\primed(\instance)$.
\end{lemma}
\begin{proof}{\emph{Proof of \Cref{prop:lemon juice}.}}
Fix any online algorithm $\ALG$ in the base mode; 
note that 
it is by definition a valid online algorithm in the discounted-allocation model.
Fix any online algorithm $\ALG$ in the discounted-allocation model. We present the following construction of online algorithm $\ALG\primed$ in the base model. For each online node $i\in\onlinenodes$ and offline node $j\in\offlinenodes$, let $P_{ij}$ be the set of discounted prices used in algorithm $\ALG$ for this pair of online node $i$ and offline node $j$. To simplify the notation, we assume $P_{ij}$ is a finite set.\footnote{The construction and analysis can be extended straightforwardly when $P_{ij}$ is an infinite set.}
Suppose $z_{ij\price}$ fractional units is allocated from offline node $j$ to online node $i$ at discounted price $\price\in P_{ij}$ in algorithm $\ALG$. We then let the constructed algorithm $\ALG\primed$ allocate $\frac{\price}{\weightij}\cdot z_{ij\price}$ fractional units of offline node $j$ to online node $i$. Similarly, suppose algorithm $\ALG$ buys back $\zeta_{ij\price}$ fractional units of offline node $j$ from online node $i$ at discounted price $\price\in P_{ij}$. We then let the constructed algorithm $\ALG\primed$ buys back $\frac{\price}{\weightij}\cdot \zeta_{ij\price}$ fractional units of offline node $j$ from online node $i$.

Since algorithm $\ALG$ is valid, it must satisfy that: after the departure of each online node $i'\in\onlinenodes$, for each offline node $j\in\offlinenodes$, the total allocations  minus buybacks is at most one in algorithm $\ALG$, i.e., 
\begin{align*}
    \sum_{i\in\onlinenodes:i\leq i'}
    \sum_{\price\in P_{ij}}
    z_{ij\price} - \zeta_{ij\price}
    \leq 1
\end{align*}
and for each online node $i\leq i'$, offline node $j\in\offlinenodes$, and discounted price $\price\in P_{ij}$, the amount of buybacks is at most the amount of allocation in algorithm $\ALG$, i.e.,
\begin{align*}
    \zeta_{ij\price} \leq z_{ij\price}
\end{align*}
Note that by construction, after the departure of each online node $i'\in\onlinenodes$, for each offline node $j\in\offlinenodes$, the total allocations minus buybacks in algorithm $\ALG\primed$ is
\begin{align*}
    \sum_{i\in\onlinenodes:i\leq i'}
    \sum_{\price\in P_{ij}}
    \frac{\price}{\weightij}z_{ij\price} - \frac{\price}{\weightij}\zeta_{ij\price}
    \leq 
    \sum_{i\in\onlinenodes:i\leq i'}
    \sum_{\price\in P_{ij}}
    z_{ij\price} - \zeta_{ij\price}
    \leq 1
\end{align*}
where the first inequality holds since $\price \leq \weightij$ for every $\price\in P_{ij}$ and $z_{ij\price} \geq \zeta_{ij\price}$ argued above. Similarly, it can be straightforwardly verified that for each online node $i\leq i'$ and offline node $j\in\offlinenodes$, the amount of buybacks is at most the amount of buybacks in algorithm $\ALG\primed$, since
\begin{align*}
    \max_{i\in\onlinenodes:i\leq i'}\max_{\price \in P_{ij}}
    \frac{\frac{\price}{\weightij}\zeta_{ij\price}}{\frac{\price}{\weightij}z_{ij\price}}
    \leq 1
     \quad \quad
    \Longrightarrow
     \quad \quad
    \sum_{i\in\onlinenodes:i\leq i'}
    \sum_{\price\in P_{ij}}
    \frac{\price}{\weightij}\zeta_{ij\price}
    \leq
    \sum_{i\in\onlinenodes:i\leq i'}
    \sum_{\price\in P_{ij}}
    \frac{\price}{\weightij}z_{ij\price} 
\end{align*}
where the first inequality holds since $z_{ij\price} \geq \zeta_{ij\price}$ argued above. Therefore, the constructed algorithm $\ALG\primed$ is also valid.

Finally, we verify that both algorithms have the same profit. Note that
\begin{align*}
    \ALG(\instance) &= \sum_{i\in\onlinenodes}
    \sum_{j\in\offlinenodes}
    \sum_{\price\in P_{ij}}
    \price \cdot z_{ij\price} - (1 + \buybackcost)\price\cdot \zeta_{ij\price}
    = \sum_{i\in\onlinenodes}
    \sum_{j\in\offlinenodes}
    \sum_{\price\in P_{ij}}
    \price \cdot (z_{ij\price} - (1 + \buybackcost) \zeta_{ij\price})
    \\
    \ALG\primed(\instance) &=
    \sum_{i\in\onlinenodes}
    \sum_{j\in\offlinenodes}
    \sum_{\price\in P_{ij}}
    \weightij\cdot \frac{\price}{\weightij}z_{ij\price} - (1+\buybackcost)\weightij\cdot \frac{\price}{\weightij}\zeta_{ij\price}
    =\sum_{i\in\onlinenodes}
    \sum_{j\in\offlinenodes}
    \sum_{\price\in P_{ij}}
    \price \cdot (z_{ij\price} - (1 + \buybackcost) \zeta_{ij\price})
\end{align*}
which concludes the proof as desired.
\hfill
\halmos
\end{proof}

Combining \Cref{claim:reduction from discounted allocation model,prop:lemon juice}, we finish the proof of \Cref{prop:algorithm reduction single resource}. 
Furthermore, it can be verified that \Cref{alg:algorithm reduction single resource} is in fact the algorithmic construction in the proof of \Cref{claim:reduction from discounted allocation model,prop:lemon juice}.
\hfill
\halmos
\end{proof}

\revcolor{
\subsection{Proof of Proposition~\ref{prop:primal-dual-properties}}
\label{app:proof-properties}
\propproperties*

\begin{proof}{\emph{Proof of \Cref{prop:primal-dual-properties}.}}
As stated in the proof sketch, we only need to show that \Cref{alg:primal dual single resource} satisfies the scale invariance property~\eqref{eq:allocation invariant property}. Fix $\weight',\weightprimed\in \reals_{+}$, and consider two separate execution runs of \Cref{alg:primal dual single resource}, where in the first run the input instance is the truncated continuum weight instance $[0,\weightprimed]$ and in the second the input instance is the truncated continuum weight instance $[0,\frac{\weightprimed}{\weight'}]$. Let $\tilde{w}$ be a variable that keeps track of the weight that is currently arriving in the first run. We couple the two runs by letting the currently arriving weight of the second run be $\frac{\tilde{w}}{\weight'}$. We use the notation $y_1^{(\tilde{\weight})}(\weight)$ and $x_1^{(\tilde{\weight})}(\weight)$ to track the allocation quantile function and the allocation density function of the first run, respectively, as the weight of the current arrival $\tilde{\weight}$ changes from $0$ to $\weight$, and. Similarly, we use $y_2^{(\tfrac{\tilde{\weight}}{\weight'})}(\weight)$ and $x_2^{(\tfrac{\tilde{\weight}}{\weight'})}(\weight)$ to track the same quantities for the second run as $\tilde{\weight}$ changes. To show the scale invariance property~\eqref{eq:allocation invariant property}, it is enough to show that
\begin{equation}
\label{eq:x-scale}  
\forall \weight \in \reals_{+}:~x_2^{(\tfrac{\weightprimed}{\weight'})}(\weight)=\weight'x_1^{(\weightprimed)}(\weight'\cdot\weight)~,
\end{equation}
because we can then integrate both sides over $\weight$ and conclude that
$$
\forall \weight \in \reals_{+}:~y_2^{(\tfrac{\weightprimed}{\weight'})}(\weight)=y_1^{(\weightprimed)}(\weight'\cdot\weight).
$$
By a change of variable from $w$ to $\frac{w}{\weight'}$ and replacing $\weight'$ with $\weightprimed$, we obtain the statement in the scale invariance property~\eqref{eq:allocation invariant property}, as desired. 

To show condition~\eqref{eq:x-scale}, we inductively show that it will be preserved during the two coupled runs of \Cref{alg:primal dual single resource}. For the base case,  before we start the execution of the algorithm, we have $\tilde{\weight}=0$. Without loss of generality, we can assume that the algorithm has allocated $1$ unit of weight $0$ at this moment. Therefore, $x_1^{(0)}(\weight)=\delta_0(\weight)$ and $x_2^{(0)}(\weight)=\delta_0(\weight)$, where $\delta_a(b)$ is a Dirac delta function centered at $a$ and evaluated at $b$. Using the scaling property of the Dirac delta function, we have:
$$
x_2^{(0)}(\weight)=\delta_0(\weight)=\weight' \delta_0(\weight'\cdot\weight)=\weight' x_1^{(0)}(\weight'\cdot \weight),
$$
as desired. Now, suppose that we are at some point during the execution of these two coupled runs of the algorithm, and the currently arriving weight that the algorithm is allocating in the first run is $\tilde{\weight}$ (and therefore, the currently allocating weight in the second run is $\frac{\tilde{\weight}}{\weight'}$). Let $\tilde{\weight}_b$ denote the weight that is bought back at this point in the first run. According to the inductive hypothesis, the weight that is bought back in the second run is equal to $\frac{\tilde{\weight}_b}{\weight'}$. Moreover, for the current allocation density functions $x_{1,\textrm{old}}^{(\tilde{\weight})}(\cdot)$ and $x_{2,\textrm{old}}^{(\tfrac{\tilde{\weight}}{\weight'})}(\cdot)$ before the possible allocation at this moment, the following holds due to the inductive hypothesis:
\begin{equation}
\forall \weight \in \reals_{+}:~~~x_{2,\textrm{old}}^{(\tfrac{\weightprimed}{\weight'})}(\weight)=\weight'x_{1,\textrm{old}}^{(\weightprimed)}(\weight'\cdot\weight)~.
\end{equation}
We now consider two cases to finish the induction:

\underline{\emph{Case I}}: the algorithm has terminated its continuous allocation upon arrival of weight $\tilde{\weight}$ in the first run. Let $\beta_1$ and $\beta_2$ be the dual variables constructed during the first and second runs, respectively, at this moment. Because the algorithm has terminated in the first run, we have
$$
\tilde{\weight}=\beta_1=\int_0^{\infty}\pen\left(y_1^{(\tilde{\weight})}(\weight)\right)\,\dd\weight\overset{(a)}{=}\weight' \int_0^{\infty}\pen\left(y_1^{\tilde{\weight}}\left(z\cdot \weight'\right)\right)\,\dd z\overset{(b)}{=}\weight' \int_0^{\infty}\pen\left(y_2^{(\tfrac{\tilde{\weight}}{\weight'})}\left(z\right)\right)\,\dd z=\weight' \beta_2,
$$
where equality~(a) holds by a change of variable in the integral, and equality~(b) holds as a result of the inductive hypothesis. Therefore, $\frac{\tilde{\weight}}{\weight'}=\beta_2$, and therefore the algorithm will also terminate in the second run and allocation density functions remain unchanged in both runs. Hence the statement is true by inductive hypothesis. 

\underline{\emph{Case II}}: the algorithm is still in the middle of its continuous allocation upon arrival of weight $\tilde{\weight}$ in the first run. Again, let $\beta_1$ and $\beta_2$ be the dual variables constructed during the first and second runs, respectively, at this moment. Note that $\tilde{\weight}>\beta_1$. Similar to Case~I, we have:
$$
\tilde{\weight}>\beta_1=\int_0^{\infty}\pen\left(y_1^{(\tilde{\weight})}(\weight)\right)\,\dd\weight=\weight' \int_0^{\infty}\pen\left(y_1^{\tilde{\weight}}\left(z\cdot \weight'\right)\right)\,\dd z=\weight' \int_0^{\infty}\pen\left(y_2^{(\tfrac{\tilde{\weight}}{\weight'})}\left(z\right)\right)\,\dd z=\weight' \beta_2,
$$
and therefore $\frac{\tilde{\weight}}{\weight'}>\beta_2$. As a result, the algorithm allocates $\dd x$ amount at weight $\tilde{\weight}$ (and buys back $\dd x$ amount at weight $\tilde{\weight}_b$) in the first run, and also allocates $dx$ amount at weight $\frac{\tilde{\weight}}{\weight'}$ (and buys back $\dd x$ amount at weight $\frac{\tilde{\weight}_b}{\weight'}$) in the second run. Let $\Delta x_1^{(\tilde{\weight})}(\cdot)$ and $\Delta x_2^{(\tfrac{\tilde{\weight}}{\weight'})}(\cdot)$ denote the change in the allocation density functions in the first and second runs, receptively. Then we have:
\begin{align*}
\Delta x_2^{(\tfrac{\tilde{\weight}}{\weight'})}(\weight)&\overset{(c)}{=}\dd x \cdot \left(\delta_{\frac{\tilde{\weight}}{\weight'}}(\weight)-\delta_{\frac{\tilde{\weight}_b}{\weight'}}(\weight)\right)\\
&=\dd x \cdot \left(\delta_{0}(\weight-\frac{\tilde{\weight}}{\weight'})-\delta_{0}(\weight-\frac{\tilde{\weight}_b}{\weight'})\right)\\
&\overset{(d)}{=}\weight'\cdot \dd x \cdot\left(\delta_{0}\left(\weight'\weight-\tilde{\weight}\right)-\delta_{0}\left(\weight'\weight-\tilde{\weight}_b\right)\right)\\
&=\weight'\cdot \dd x \cdot\left(\delta_{{\tilde{\weight}}}\left(\weight'\cdot \weight\right)-\delta_{\tilde{\weight}_b}\left(\weight'\cdot\weight\right)\right)\\
&\overset{(e)}=\weight' \cdot \Delta x_1^{(\tilde{\weight})}(\weight'\cdot\weight)~,
\end{align*}
where equalities~(c) and (e) hold due to the allocations and buybacks of the algorithm in these two runs as described earlier, and equality~(d) holds due to the scaling property of the Dirac delta function. Putting the above equalities and the inductive hypothesis together, after the algorithm finishes allocating, we have:
\begin{equation*}
    x_2^{(\tfrac{\tilde{\weight}}{\weight'})}(\weight)=x_{2,\textrm{old}}^{(\tfrac{\tilde{\weight}}{\weight'})}(\weight)+\Delta x_2^{(\tfrac{\tilde{\weight}}{\weight'})}(\weight)=\weight'x_{1,\textrm{old}}^{(\weightprimed)}(\weight'\cdot\weight)+\weight' \cdot \Delta x_1^{(\tilde{\weight})}(\weight'\cdot\weight)=\weight' \cdot  x_1^{(\tilde{\weight})}(\weight'\cdot\weight)~,
\end{equation*}
which finishes the proof of the inductive statement, as desired.
\hfill
\halmos
\end{proof}
}

\revcolor{
\subsection{An alternative optimal competitive fractional online algorithm:  direct primal-dual}
\label{app:primal-dual-direct-single}
In this subsection, we consider directly running an adaptation of \Cref{alg:primal dual single resource} on general instances, formalized in \Cref{alg:primal-dual-direct-single}. By following similar lines of analysis as in \Cref{sec:single resource primal dual}, we show this algorithm is an optimal competitive online fractional algorithm within the class of all instances. 
\begin{algorithm}[htb]
\caption{Primal-dual fractional online algorithm for general instances (single-resource)}
\label{alg:primal-dual-direct-single}
    \SetKwInOut{Input}{input}
    \SetKwInOut{Output}{output}
 \Input{Penalty function $\pen$
 }
 
 \vspace{2mm}
 
 Initialize $\offlinedual \gets 0$, and for all $\weight \in \reals_+$, set $
    \alloc(\weight) \gets \diracdeltafunction_0(\weight)$ and $\calloc(\weight) \gets \indicator{\weight =0}$.
 
 \vspace{1mm}
 
 {\color{royalazure}\tcc{$\diracdeltafunction_{\weight'}(\cdot)$
 is the Dirac delta function centered at $\weight'$.}}
 
 \vspace{1mm}
 
 \For{each online node $i\in U$}
 {
 
 \vspace{1mm}
 
 \While{$\offlinedual < \weight_i $}
 {
 \vspace{1mm}
 Buy back infinitesimal fraction $dx$ from the smallest allocated weight $\weight_{i'}=\buybackweight$ for some $i'<i$, i.e., $\alloc(\weight) \gets
 \alloc(\weight) - dx \cdot\delta_{\weight_{i'}}(\weight)$~~{\color{royalazure}\tcc{Formally, $\buybackweight = \inf\{\weight\in\reals_+:
 \calloc(\weight) < 1\}$}}
 
 \vspace{1mm}
 Allocate fraction $dx$ to the current online node $i$, i.e., $\alloc(\weight) \gets \alloc(\weight) + dx \cdot\delta_{\weight_i}(\weight)$
 
 \vspace{1mm}
 Update the allocation quantile function $\forall \weight\in[\weight_{i'},\weight_i]:\displaystyle\calloc(\weight) \gets \int_{\weight}^{\infty}
\alloc(\weight')\,d\weight'$

 \vspace{1mm}
Update the dual assignment $\displaystyle\offlinedual\gets 
\int_{0}^{\infty}
\pen(\calloc(\weight))\,d\weight$}
 }
\end{algorithm}
\begin{proposition}
\label{prop:optimal-direct-single-resource-fractional}
In the single-resource environment, for any buyback factor $f\geq 0$, consider \Cref{alg:primal-dual-direct-single} with the penalty function $\pen(\calloc)=
\frac{1}{\log(\thresholdweight)}(\thresholdweight^\calloc - 1)$, where $\thresholdweight = -(1 + \buybackcost) \Lambertterm$. This algorithm achieves the competitive ratio of $-\Lambertterm$ within the class of all instances $\instances$.
\end{proposition}
\begin{proof}{\emph{Proof.}}
The proof follows similar lines as in the proof of \Cref{prop:optimal-single-resource-fractional} in \Cref{sec:single resource primal dual}. Consider the following (trivial) LP formulation of the problem given a general instance with weights $\{\weight_i\}_{i\in U}$, where $U$ is the set of online nodes:\begin{align*}
\tag{$\mathcal{P}_{\texttt{OPT-single-gen}}$}
\label{eq:LP-max-weight-WARMUP}
\arraycolsep=1.4pt\def\arraystretch{1}
\begin{array}{lllllll}
\max~~~~&\displaystyle\sum_{i\in\onlinenodes}
\edgealloci \weighti &~~\text{s.t.}&
& \quad\quad\quad\quad \text{min}~~~~&\offlinedual &~~~~~\text{s.t.} \\[1.4em]
 &\displaystyle\sum_{i\in \onlinenodes}{\edgealloci}\leq 1~, &\qquad\;\;\, &
& &\offlinedual\geq \weighti &~~~~~~i\in \onlinenodes, \\
 &\edgealloci \geq 0~~~&i\in\onlinenodes~.
 \qquad \qquad &
& &\offlinedual\geq 0~.  &
\end{array}
\end{align*}
We now construct a feasible dual solution whose objective value is no more than $-\Lambertterm$ times the primal objective achieved by the algorithm. This, combined with LP weak duality, completes the proof.

As suggested in \Cref{alg:primal-dual-direct-single}, we use the following dual assignment with $\pen(\calloc)=
\frac{1}{\log(\thresholdweight)}(\thresholdweight^\calloc - 1)$:
$$\hat{\beta} = \int_{0}^{\infty} \pen(\calloc(w))dw~,$$
where $\calloc(w)$ is the final allocation quantile function of \Cref{alg:primal dual single resource}, after all online nodes arrive.

First, we show  feasibility of this dual assignment. For our choice of the penalty function $\pen$, we have
$$
 \pen(1) = \frac{\thresholdweight -1}{\log( \thresholdweight)} \overset{(a)}{\geq} \frac{\thresholdweight}{\log(\thresholdweight) +1} \overset{(b)}{=} 1+f \geq 1,
$$
where the inequality~(a) holds because for positive values of $\thresholdweight$ we have $\thresholdweight-1\geq \log(\thresholdweight)$, and equality~(b) holds as $\thresholdweight = -(1 + \buybackcost) \Lambertterm$. 
This means at the end of allocating online node $i$, $\beta = \weighti$ for the dual variable $\beta$ maintained by the algorithm (as otherwise $\beta<\weighti$ after fully allocating online node $i$, but we know $\int_{0}^{\weighti}\pen(1)\,d\weight \geq \weighti$, a contradiction). After this point, $\beta$ can only increase; thus, for any $i\in U$, we have $\hat{\beta} \geq \weight_i$, as desired.

Next, we compare the change in the primal objective, denoted by $\Delta(\text{Primal})$, and the dual objective, denoted by $\Delta(\text{Dual})$, as the algorithm allocates an infinitesimal fraction $dx$ of online node $i$ and buying back the same fraction from online node $i'<i$. We have:
\begin{align*}
    \Delta(\text{Primal}) &= 
    \left(\weighti - (1+\buybackcost)\weightip\right)dx~.
\end{align*}
Furthermore, let $\calloc(\weight)$ be the quantile allocation function of \Cref{alg:primal-dual-direct-single} before processing this infinitesimal allocation of online node $i$. Then the change in the dual objective is upper-bounded as follows:
\begin{align*}
    \Delta(\text{Dual}) &= \int_{\weightip}^{\weighti} 
    \penderivative(\calloc(\weight))\,d\weight\overset{(c)}{\leq}
    \left(
    \log(\thresholdweight)
    \left(
    \weighti -
    \displaystyle\int_0^\infty \pen(\calloc(\weight))\,d\weight
    \right)
    +
    \int_{\weightip}^{\weighti} 
    \penderivative(\calloc(\weight))\,d\weight
    \right)dx
    \\
    &\overset{(d)}{\leq}
    \left(
    \log(\thresholdweight)
    \left(
    \weighti -
    \displaystyle\int_0^{\weightip} \pen(1)\,d\weight
    -
    \displaystyle\int_{\weightip}^{\weighti} \pen(\calloc(\weight))\,d\weight
    \right)
    +
    \int_{\weightip}^{\weighti} 
    \penderivative(\calloc(\weight))\,d\weight
    \right)dx
    \\
    &=
    \left(
    \log(\thresholdweight)
    \weighti -
    \left(\thresholdweight - 1\right)\weightip
    -\displaystyle\int_{\weightip}^{\weighti} 
    \left(
    \thresholdweight^{\calloc(\weight)} - 1
    \right)\,d\weight
    +
    \int_{\weightip}^{\weighti} 
    \thresholdweight^{\calloc(\weight)}
    \,d\weight
    \right)dx
    \\
    &=
    \left((\log(\thresholdweight) + 1)\weighti
    -\thresholdweight
    \cdot\weightip \right)dx \overset{(e)}{=} \frac{\thresholdweight}{1+f}\left(\weighti -(1+f)\weight_{i'}\right)dx =-\Lambertterm \Delta(\text{Primal})~,
\end{align*}
where inequality~(c) holds as $\int_{0}^{\infty} \pen(\calloc(w))dw=\beta\leq \weighti$ during the allocation of online node $i$ ($\beta$ is the dual maintained by the algorithm, and not $\hat{\beta}$), and inequality~(d) holds due to the greedy buyback property~\eqref{eq:greedily allocation property}. Lastly, equality~(e) holds as $\thresholdweight = -(1 + \buybackcost) \Lambertterm$. Summing over all allocations, we conclude that the dual objective is at most $-\Lambertterm$ times the primal objective, which finishes the proof.\hfill\halmos
\end{proof}

}

\subsection{Characterizing the optimal allocation distribution}
\label{app:single-resource-characterization}
Consider an online algorithm that satisfies the two properties introduced in \Cref{sec:single-properties}---namely the scale invariance property~\eqref{eq:allocation invariant property} and the greedy buyback property~\eqref{eq:greedily allocation property}. Next, we run this algorithm on a truncated weight continuum instance $I_T\in\continstances$ for some $T\geq 1$. For simplicity, we further assume that the allocation quantile functions $\calloc(\cdot)$ are continuous on the interval $(0, T]$ for all such instances. Under these assumptions, we provide a closed-form characterization of the resulting allocation distributions, parameterized by $\thresholdweight$, as well as a characterization of the competitive ratio as a function of this parameter.

\begin{figure}[htb]
     \centering
     \subfigure[Allocation density function]{
         \centering
         \begin{tikzpicture}[scale=0.66, transform shape]
\begin{axis}[
axis line style=gray,
axis lines=middle,
xtick style={draw=none},
ytick style={draw=none},
xticklabels=\empty,
yticklabels=\empty,
xmin=-0.425,xmax=3,ymin=-0.45,ymax=4,
width=0.8\textwidth,
height=0.45\textwidth,
samples=500]

\addplot[domain=0:3, gray!40!white, dashed, line width=1mm] (x, {1/x});

\addplot[domain=0:1, black, line width=0.5mm] (x, {0});
\addplot[domain=2.71828182845:3, black, line width=0.5mm] (x, {0});
\addplot[domain=1:2.71828182845, black, line width=0.5mm] (x, {1/x});

\addplot[mark=*,only marks, fill=white] coordinates {(1,0)} node[above, pos=1]{};
\addplot[mark=*,only marks, fill=white] coordinates {(2.71828182845,0)} node[above, pos=1]{};

\addplot[gray, dotted] coordinates {(1,0)
(1,1)};
\addplot[gray, dotted] coordinates {(2.71828182845,0)
(2.71828182845,0.3678794411726664)};

\addplot[] coordinates {(1,0.)} node[below, pos=1]{$\buybackweightwpsup=\sfrac{\weightprimed}{\thresholdweight}$};

\addplot[] coordinates {(-0.22,4)} node[below, pos=1]{$x^{(\weightprimed)}(\weight)$};

\addplot[] coordinates {(0.8,2)} node[below, pos=1]{$\hat{x}(w)$};

\addplot[] coordinates {(2.71828182845,0.)} node[below, pos=1]{$\weightprimed$};

\addplot[gray, thick] coordinates {(0.33333333,0.01)
(0.33333333,-0.01)};

\tikzset{
        hatch distance/.store in=\hatchdistance,
        hatch distance=8pt,
        hatch thickness/.store in=\hatchthickness,
        hatch thickness=1.5pt
    }

    \makeatletter
    \pgfdeclarepatternformonly[\hatchdistance,\hatchthickness]{flexible hatch}
    {\pgfqpoint{0pt}{0pt}}
    {\pgfqpoint{\hatchdistance}{\hatchdistance}}
    {\pgfpoint{\hatchdistance-1pt}{\hatchdistance-1pt}}%
    {
        \pgfsetcolor{\tikz@pattern@color}
        \pgfsetlinewidth{\hatchthickness}
        \pgfpathmoveto{\pgfqpoint{0pt}{0pt}}
        \pgfpathlineto{\pgfqpoint{\hatchdistance}{\hatchdistance}}
        \pgfusepath{stroke}
    }
    \makeatother
\addplot+[mark=none,
        domain=1:2.71828182845,
        samples=100,
        pattern=flexible hatch,
        area legend,
        draw=gray!70!white,
        pattern color=gray!70!white]{1/x)} \closedcycle;
\end{axis}
\end{tikzpicture}
   \label{fig:canonical-density}
      }
        \centering
     \subfigure[Allocation quantile function]{
         \centering
         \begin{tikzpicture}[scale=0.65, transform shape]
\begin{axis}[
axis line style=gray,
axis lines=middle,
xtick style={draw=none},
ytick style={draw=none},
xticklabels=\empty,
yticklabels=\empty,
xmin=-0.425,xmax=3,ymin=-0.45,ymax=4,
width=0.8\textwidth,
height=0.45\textwidth,
samples=500]


\addplot[domain=0:1, black, line width=0.5mm] (x, {1});

\addplot[domain=2.71828182845:3, black, line width=0.5mm] (x, {0});
\addplot[domain=1:2.71828182845, black, line width=0.5mm] (x, {1-ln(x)});

\addplot[mark=*,only marks, fill=white] coordinates {(1,0)} node[above, pos=1]{};
\addplot[mark=*,only marks, fill=white] coordinates {(2.71828182845,0)} node[above, pos=1]{};

\addplot[gray, dotted] coordinates {(1,0)
(1,1)};
\addplot[gray, dotted] coordinates {(2.71828182845,0)
(2.71828182845,0.3678794411726664)};

\addplot[] coordinates {(1,0.)} node[below, pos=1]{$\buybackweightwpsup=\sfrac{\weightprimed}{\thresholdweight}$};

\addplot[] coordinates {(2.71828182845,0.)} node[below, pos=1]{$\weightprimed$};

\addplot[] coordinates {(-0.22,4)} node[below, pos=1]{$y^{(\weightprimed)}(w)$};
\addplot[gray, thick] coordinates {(0.33333333,0.01)
(0.33333333,-0.01)};

\tikzset{
        hatch distance/.store in=\hatchdistance,
        hatch distance=8pt,
        hatch thickness/.store in=\hatchthickness,
        hatch thickness=1.5pt
    }

   \makeatletter
    \pgfdeclarepatternformonly[\hatchdistance,\hatchthickness]{thick hatch}
    {\pgfqpoint{0pt}{0pt}}
    {\pgfqpoint{\hatchdistance-1pt}{\hatchdistance-1pt}}
    {\pgfpoint{\hatchdistance-3pt}{\hatchdistance-3pt}}%
    {
        \pgfsetcolor{\tikz@pattern@color}
        \pgfsetlinewidth{\hatchthickness}
        \pgfpathmoveto{\pgfqpoint{0pt}{0pt}}
        \pgfpathlineto{\pgfqpoint{\hatchdistance}{\hatchdistance}}
        \pgfusepath{stroke}
    }
    \makeatother

\addplot+[mark=none,
        domain=0:1,
        samples=100,
        pattern=thick hatch,
        area legend,
        draw=gray!70!white,
        pattern color=gray!70!white]{(1)} \closedcycle;  

\addplot+[mark=none,
        domain=1:2.71828182845,
        samples=100,
        pattern=thick hatch,
        area legend,
        draw=gray!70!white,
        pattern color=gray!70!white]{(1-ln(x))} \closedcycle;

\end{axis}

\end{tikzpicture}
    \label{fig:canonical-quantile}      }
           \caption{Characterization of the optimal allocation distribution: Black solid curves are (a)~the allocation density function $\boldsymbol{x^{(\weightprimed)}(w)}$ after processing online node $\boldsymbol{\weightprimed}$ and (b)~the allocation quantile function $\boldsymbol{y^{(\weightprimed)}(w)}$ after processing online node $\boldsymbol{\weightprimed}$. Gray dashed curve in (a) is the 
     allocation density $\boldsymbol{\canalloc(\cdot)}$ in \Cref{prop:canonical allocation}.}
    \label{fig:canonical}
\end{figure}
\begin{restatable}{proposition}{canallocprop}
\label{prop:canonical allocation}
In the single-resource environment, for any online algorithm $\ALG$ that satisfies both the scale invariance property~\eqref{eq:allocation invariant property} and the greedy buyback property~\eqref{eq:greedily allocation property}, there exists a choice of parameter $\thresholdweight \geq 1$ such that when the online algorithm $\ALG$ is executed on instances in $\continstances$:
\begin{enumerate}[label=(\roman*)]
    \item It always allocates a fraction $\canalloc(\weight)$ to the online node with weight $\weight$ upon its arrival, defined as
    \begin{align*}
       \canalloc(\weight) \triangleq \frac{1}{\weight}\frac{1}{\log(\thresholdweight)}~~~~\left(\textrm{if}~\thresholdweight = 1:~~\canalloc(\weight) \triangleq \delta_0(\weight)\right)~,
    \end{align*}
    where $\delta_0(\cdot)$ is the Dirac delta function centered at $0$.
    Moreover, the allocation density function $x^{(\weightprimed)}(\cdot)$ and the quantile function $y^{(\weightprimed)}(\cdot)$ of $\ALG$ after processing the online node $\weightprimed$ are characterized as follows (\Cref{fig:canonical}):
    \begin{align*}
        x^{(\weightprimed)}(\weight) =\left\{
        \begin{array}{ll}
         \frac{1}{\weight}\frac{1}{\log(\thresholdweight)} & \weight\in[\frac{\weightprimed}{\thresholdweight}, \weightprimed] \\
    0 & \textrm{o.w.}
        \end{array}
        \right.~~~,~~~ y^{(\weightprimed)}(w) =
    \left\{
    \begin{array}{ll}
      1    & \weight \in [0, \frac{\weightprimed}{\thresholdweight}]  \\
      1 - \frac{\log\left(\weight\cdot\frac{\thresholdweight}{\weightprimed}\right)}{\log(\thresholdweight)}
      \qquad\qquad
      & \weight\in[\frac{\weightprimed}{\thresholdweight}, \weightprimed]\\
      0   & \weight\in[\weightprimed, \infty)
    \end{array}
    \right.
    \end{align*}
    \item It achieves a competitive ratio of 
     $   \frac{\thresholdweight\log(\thresholdweight)}{
    \thresholdweight - 1 - \buybackcost}$
    within the instances in the class $\continstances$.
    \end{enumerate}
\end{restatable}
\revcolor{\begin{remark}\label{rmk:monotone canonical}
We observe a somewhat counterintuitive behavior: $\canalloc(\weight)$ decreases as $\weight$ increases. This occurs because, despite the decreasing density $\canalloc(\weight)$, the interval of weights over which the algorithm retains allocations, that is $[\weight^\dagger/\hat{\weight},\weight^\dagger]$, expands linearly as $\weight^{\dagger}$ progresses, as illustrated in \Cref{fig:canonical-density}.
\end{remark}}

\begin{proof}{\emph{Proof of part~(i) of \Cref{prop:canonical allocation}.}}
First, we consider the case where $\buybackweightonesup = 1$.
In this case, by definition $\calloconesup(\weight) =
\indicator{\weight \leq 1}$. Invoking the scale invariance property~\eqref{eq:allocation invariant property},
$\callocwpsup(\weight) =
\indicator{\weight \leq \weightprimed}$, 
and thus the algorithm follows $\canalloc(.)$ allocation 
with $\thresholdweight = 1$.

Now, suppose $\buybackweightonesup < 1$. The greedy buyback property~\eqref{eq:greedily allocation property}
ensures that 
for every $\weightprimed\in[\buybackweightonesup, 1]$,
\begin{align*}
    \calloconesup(\weightprimed)
    =
    \callocwpsup(\weightprimed) 
    +
    1 
    -
    \callocwpsup(\buybackweightonesup)~.
\end{align*}
Invoking the scale invariance property~\eqref{eq:allocation invariant property},
the equation 
\begin{align*}
    \calloconesup(\weightprimed)
    =
    \calloconesup\left(
    \frac{\weightprimed}{\weightprimed}
    \right) 
    +
    1 
    -
    \calloconesup
    \left(\frac{\buybackweightonesup}
    {\weightprimed}\right)
\end{align*}
holds for every $\weightprimed\in[\buybackweightonesup, 1]$.
Thus, taking derivatives of both sides with respect to 
$\weightprimed$, we obtain the equation
\begin{align*}
    \alloconesup(\weightprimed) = 
    \frac{\buybackweightonesup}{(\weightprimed)^2}
    \cdot 
    \alloconesup\left(\frac{\buybackweightonesup}
    {\weightprimed}\right)~,
\end{align*}
which holds for every $\weightprimed\in[\buybackweightonesup,1]$.
This equation admits a unique solution (up to a constant $\largeconstant$) of the from
\begin{align*}
    \forall \weight \in [\buybackweightonesup, 1]: \qquad \alloconesup(\weight) = \frac{\largeconstant}{\weight}~.
\end{align*}
Plugging the boundary condition where $\calloconesup(\buybackweightonesup) = 1$ and $\calloconesup(1)= 0$, we have $\largeconstant = \frac{1}{-\log(\buybackweightonesup)} = \frac{1}{\log(\thresholdweight)}$.
Finally, invoking the scale invariance property, we have $\allocwpsup(\weightprimed) = \canalloc(\weightprimed)$ with $\thresholdweight = \frac{1}{\buybackweightonesup}$
for every $\weightprimed\in\reals_+$. Also $\callocwpsup(\weight)$ can be written as 
\begin{align*}
    \callocwpsup(\weight) = 1 - \int_0^w \canalloc(w') dw' = \left\{
    \begin{array}{ll}
      1    & \weight \in [0, \frac{\weightprimed}{\thresholdweight}]  \\
      1 - \frac{\log\left(\weight\cdot\frac{\thresholdweight}{\weightprimed}\right)}{\log(\thresholdweight)}
      \qquad\qquad
      & \weight\in[\frac{\weightprimed}{\thresholdweight}, \weightprimed]\\
      0   & \weight\in[\weightprimed, \infty)
    \end{array}
    \right.
\end{align*}
which finishes the proof.
\hfill\halmos
\end{proof}

\begin{proof}{\emph{Proof of part~(ii) of \Cref{prop:canonical allocation}.}}
Consider an arbitrary truncated weight continuum instance $\instance_\totaltime\in\continstances$.
Due to part~(i) of \Cref{prop:canonical allocation},
online algorithm allocates $\canalloc(\weight)$ density 
of the offline node to each weight $\weight\in[0,\totaltime]$.
Furthermore, every weight $\weight\in [0, \sfrac{\totaltime}{\thresholdweight}]$
has been boughtback. Therefore,
the total profit of $\ALG$ (i.e., total weight minus cancellation costs) is 
\begin{align*}
    \ALG(\instance_\totaltime) &= 
    \displaystyle\int_{0}^\totaltime
    \weight\cdot \canalloc(\weight)\,d\weight 
    -
    (1 + \buybackcost)
    \displaystyle\int_{0}^{\frac{\totaltime}{\thresholdweight}}
    \weight\cdot \canalloc(\weight)\,d\weight
    =
    \frac{\thresholdweight - 1 - \buybackcost}{\thresholdweight\log(\thresholdweight)}\cdot \totaltime
\end{align*}
On the other side, the optimum offline benchmark is $\OPT(\instance_\totaltime) = \totaltime$, finishing the proof.
\hfill
\halmos
\end{proof}
Now, given the parametric characterization in \Cref{prop:canonical allocation}, we can find the parameter $\thresholdweight$ that yields the best competitive ratio, thus identifying the optimal allocation distribution in this class. This provides an online (fractional) algorithm that meets both the greedy buyback property~\eqref{eq:greedily allocation property} and the scale invariance property~\eqref{eq:allocation invariant property}, and has the best competitive ratio within the class $\continstances$. We can then directly compare this competitive ratio to the lower bound established in \Cref{prop:lower bound single resource}.

\revcolor{
Applying the first-order condition and using straightforward calculations, we can show that the competitive ratio formula  $\frac{\thresholdweight\log(\thresholdweight)}
{\thresholdweight - 1 - \buybackcost}$ attains its minimum $\min_{\thresholdweight \geq 1}\frac{\thresholdweight\log(\thresholdweight)}
{\thresholdweight - 1 - \buybackcost}
= -\Lambertterm$ 
at $\thresholdweight = -(1+\buybackcost)
\Lambertterm$,
which \emph{exactly} matches the lower bound of the competitive ratio 
for the single-resource environment within \emph{all} problem instances (\Cref{prop:lower bound single resource}). As noted earlier, the hard instance for this lower bound belongs to the class of truncated weight continuum instances $\continstances$; therefore, if we have any online algorithm whose induced allocation distribution is one characterized in \Cref{prop:canonical allocation} with the parameter $\thresholdweight = -(1+\buybackcost)
\Lambertterm$, it will be optimally competitive within the class $\continstances$. One approach is to directly use the ``canonical allocation distribution'' in \Cref{prop:canonical allocation} (and \Cref{fig:canonical}) to make fractional allocation decisions (and greedily buy back as necessary). The resulting algorithm, by setting $\thresholdweight = -(1+\buybackcost)\Lambertterm$, will be optimal competitive within the class of truncated weight continuum instances $\continstances$. Another approach, as we show next, is to use \Cref{alg:primal dual single resource} with a general penalty function $\pen$.

\medskip
\noindent\textbf{An alternative indirect proof of \Cref{prop:optimal-single-resource-fractional}.} By combining \Cref{prop:primal-dual-properties} and \Cref{prop:canonical allocation}, we conclude that the allocation distribution of \Cref{alg:primal dual single resource} belongs to the class described in part~(ii) of \Cref{prop:canonical allocation} for some parameter $\thresholdweight \geq 1$, and hence obtains the competitive in part~(i) of \Cref{prop:canonical allocation} within the class of truncated weight continuum instances $\continstances$. If we manage to pick a penalty function $\pen$ (a monotone increasing function from $[0,1]$ to $\reals_{+}$) such that the algorithm's induced canonical allocation function results in the correct value for $\thresholdweight$, that is  $\thresholdweight = -(1+\buybackcost)\Lambertterm$, then \Cref{alg:primal dual single resource} will be optimally competitive within the class of instances in $\continstances$.

Note that, for any function $\pen$, we can establish the relationship between $\pen$ and $\thresholdweight$ using the following equation (which captures the termination condition after processing the online node $\thresholdweight$):
\begin{align*}
    \thresholdweight = \displaystyle\int_{0}^{\infty}\pen\left(\calloc^{(\thresholdweight)}(\weight)\right)\,d\weight~.
\end{align*}
This relationship can be further simplified by substituting the formula for $\calloc^{(\thresholdweight)}(\weight)$:
\begin{align}
\label{eq:parameter}
    \thresholdweight = 
    \pen(1) + 
    \displaystyle\int_1^{\thresholdweight}
    \pen\left(
    1 - \frac{\log(\weight)}{\log(\thresholdweight)}
    \right)\,
    d\weight
~.
\end{align}
The closed-form identity in \cref{eq:parameter} essentially shows how to scale the penalty function $\pen$ in \Cref{alg:primal dual single resource} appropriately given any target $\thresholdweight$ and any specific form for this monotone increasing function $\pen$ (satisfying $\pen(0) = 0$). For instance, if $\thresholdweight = 1$, we obtain $\pen(1) = 1$ and the penalty function can have any arbitrary form. In this case, the resulting algorithm degenerates into the greedy algorithm, which ignores the buyback cost. As another example, for any $\thresholdweight > 1$, straightforward calculations based on the identity in \cref{eq:parameter} show that $\pen(\calloc) =\frac{\thresholdweight \log(\thresholdweight)}{\thresholdweight - 1} \calloc$ for the linear penalty function, and $\pen(\calloc)=\frac{\thresholdweight \log(\thresholdweight) - \thresholdweight}{\thresholdweight - e} \left(e^{y}-1\right)$ for the exponential penalty function. Lastly, for a function of the form $\pen(y)=\tau (\thresholdweight^y-1)$ for a scalar parameter $\tau\geq 0$, by plugging the function in \cref{eq:parameter}, we have $\tau=\frac{1}{\log(\thresholdweight)}$, and therefore $\pen(y)=\frac{\thresholdweight^y-1}{\log(\thresholdweight)}$. This penalty function is \emph{exactly} the same penalty function used in \Cref{prop:optimal-single-resource-fractional}. Now, setting $\thresholdweight$ optimally as described above (and in the statement of \Cref{prop:optimal-single-resource-fractional}) makes the resulting \Cref{alg:primal dual single resource} optimally competitive---establishing also an alternative proof for \Cref{prop:optimal-single-resource-fractional}.\hfill\halmos
}

\subsection{Proof of Proposition~\ref{prop:integral-reduction-single-resource}}
\label{app:proof-integral-to-fractional-single}
\thmintegralreduction*
\begin{proof}{\emph{Proof of \Cref{prop:integral-reduction-single-resource}.}}
Fix an arbitrary online algorithm $\ALG$ and an instance $\instance$. Let $\allocTilde(i)$
be the allocated density/fraction
when online node $i$ arrives 
in algorithm $\ALG$. We construct a randomized integral algorithm $\ALG\primed$ based on $\allocTilde(i)$ (see formal details in \Cref{alg:algorithm-randomized-reduction-single-resource}). The algorithm samples $\randseed$ from $[0, 1]$ uniformly at random and
defines function $t_\randseed: \naturals\rightarrow \reals_+$
where for every $\ell\in\naturals$,
$t_\randseed(\ell)$
is the unique solution $t$ to $\sum_{i=1}^{t-1}\allocTilde(i) \leq \ell + \randseed-1 < \sum_{i=1}^{t}\allocTilde(i)$. If there exists no such $t$ we let $t_\randseed(\ell)=\infty$. The algorithm allocates the offline node 
to each online node $t \in \{t_\randseed(\ell):\ell\in\naturals\}$ and when allocating to $t_\randseed(\ell)$ it buys back $t_\randseed(\ell-1)$.\\
By construction, $\ALG\primed$
is a randomized integral algorithm.
$\ALG\primed$ allocates to an online node $t$ if and only if there exists an $\ell \in \naturals$ with $\sum_{i=1}^{t-1}\allocTilde(i) \leq \ell + \randseed-1 < \sum_{i=1}^{t}\allocTilde(i)$. Since $\eta$ is uniformly sampled, this happens with probability $\sum_{i=1}^{t}\allocTilde(i) - \sum_{i=1}^{t-1}\allocTilde(i) =  \allocTilde(t)$.

The only thing left to prove is that the probability of buyback is also the same. 
\begin{enumerate}
    \item if $\sum_{i=t}^{T}\allocTilde(i) < 1$ none of $\ALG$ and $\ALG\primed$ will buy back online node $t$.
    \item When $\sum_{i=t+1}^{T}\allocTilde(i) > 1$ that online node is fully bought back in both cases. 
    \item In case $\sum_{i=t}^{T}\allocTilde(i) \geq 1 \geq \sum_{i=t+1}^{T}\allocTilde(i)$, $\ALG$ will buy back $\sum_{i=t}^{T}\allocTilde(i) - 1$ fraction of online node $t$ and $\ALG\primed$ will buy back with probability $\frac{\sum_{i=t}^{T}\allocTilde(i) - 1}{\allocTilde(t)}$.
\end{enumerate} 

Thus, the competitive ratio of the constructed randomized integral online algorithm $\ALG\primed$
is the same as algorithm $\ALG$ as desired.
\hfill\halmos
\end{proof}

\revcolor{
\subsection{Direct randomized algorithm of \texorpdfstring{\citet{AK-09}}{Badanidiyuru and Kleinberg (2009)} for single-resource environment}
\label{app:BK implementation}

In this section, we explore the relationship between the randomized integral online algorithm proposed by \cite{AK-09} and \Cref{alg:primal dual single resource}. As explained in \Cref{app:single-resource-characterization}, we can characterize the allocation distribution induced by \Cref{alg:primal dual single resource} under truncated weight continuum instances for different choices of the penalty function. This gives rise to the parametric canonical allocation distribution function $\canalloc(\cdot)$ defined in \Cref{prop:canonical allocation} (with parameter $\thresholdweight$). We now show how this (fractional) allocation distribution can be \emph{directly transformed} into a randomized integral algorithm whose performance is at least as good as that of the fractional algorithm inducing this allocation distribution. The resulting randomized algorithm algorithm is \emph{almost} identical to the algorithm proposed in \citealp{AK-09}.\footnote{As it will be clear soon, the only difference is that our algorithm, that is, \Cref{alg: randomized algorithm}, only buys back when there is a demand.}

Without loss of generality, assume the smallest arriving weight is greater than~$1$. Pick a uniform random number $\randseed \sim \mathcal{U}[0,1]$, and for each $\ell \in \naturals$, define the random variable $\weightc_\randseed(\ell)$ as the solution to:
\[
\int_1^{\weightc_\randseed(\ell)} \canalloc(\weight)\,d\weight = \randseed + \ell - 2.
\]
Given that $\canalloc(\weight) = \frac{1}{\weight \log \thresholdweight}$ (by \Cref{prop:canonical allocation}), we explicitly have:
\[
\weightc_\randseed(\ell) = \thresholdweight^{\randseed + \ell - 2}.
\]
Using these definitions, we present our algorithm below.
\begin{algorithm}
\caption{Direct primal-dual-based randomized integral online algorithm for general instances (single-resource)}
\label{alg: randomized algorithm}
    \SetKwInOut{Input}{Input}
    \SetKwInOut{Output}{Output}
    \Input{Random seed $\randseed$}

    \vspace{1mm}

    Initialize $\ell_{\text{max}} \gets 0$.

    \vspace{1mm}

    \For{each online node $i \in \onlinenodes$}{
        \vspace{1mm}

        Observe arriving weight $\weight_i$.

        \vspace{1mm}

        Find the largest integer $\ell \in \naturals$ such that $\weightc_\eta(\ell) \leq \weight_i$. 
        
        {\color{royalazure}\tcc{By definition, $\weightc_\eta(\ell) = \thresholdweight^{\eta + \ell - 2}$.}}

        \vspace{1mm}

        \If{$\ell > \ell_{\text{max}}$}{
            \vspace{1mm}

            Sample $x_i$ randomly as follows:
                $x_i = \begin{cases}
                    1 & \text{with probability } \frac{\weightc_\eta(\ell)}{\weight_i},\\[2pt]
                    0 & \text{otherwise}.
                \end{cases}$

            \vspace{1mm}

            \If{$x_i = 1$}{
                \vspace{1mm}

                Allocate integrally to online node $i$, and buying back the previously allocated node if necessary.
            }

            \vspace{1mm}

            Update $\ell_{\text{max}} \gets \ell$.
        }
    }
\end{algorithm}


\begin{lemma}
\label{lem: randomized arbitrary}
\Cref{alg: randomized algorithm} achieves a competitive ratio within general instances $\instances$ at least as good as that of the fractional online algorithm following the allocation $\canalloc(\cdot)$ (defined in \Cref{prop:canonical allocation}, part~(i)) within truncated weight continuum instances $\continstances$.
\end{lemma}

\begin{proof}{\emph{Proof.}} Consider a general adversarial instance with arriving weights $\weight_1, \weight_2, \dots$ and let $\weightmax \triangleq \max_{i} \weight_i$. Define $\ell^*$ as the largest integer satisfying $\weightc_\eta(\ell^*) \leq \weightmax$, and let $\hat{i}_\eta$ be the index of the first online node whose weight exceeds $\weightc_\eta(\ell^*)$ (which need not be the node with maximum weight). Conditioning on the random seed $\eta$—and thus fixing both $\weightc_\eta(\ell^*)$ and $\hat{i}_\eta$—the expected gain from that final node selected by \Cref{alg: randomized algorithm} is exactly $\frac{\weightc_\eta(\ell^*)}{\weight_{\hat{i}_\eta}} \cdot \weight_{\hat{i}_\eta} = \weightc_\eta(\ell^*)$. 
Note that as $\eta$ is uniformly distributed, $\log\left(\weightc_\eta(\ell)\right)$ is uniformly distributed in $\left[\log\left(\weightmax\right)- 1, \log\left(\weightmax\right)\right]$. Hence, for $\weight \in \left[\frac{\weightmax}{\thresholdweight},\weightmax\right]$, $\prob{\weightmax = \weight} = \canalloc(\weight)d\weight$ and the unconditional expected gain from the final allocation by the algorithm is at least
\[
    \int_{\frac{\weightmax}{\thresholdweight}}^{\weightmax} \weight \canalloc(\weight)\,d\weight,
\]
as this integral captures the expectation of the random variable $\weightc_\eta(\ell^*)$. This is exactly equal the total weight of the fractional allocations made by any algorithm that follows the allocation density $\canalloc(\cdot)$ (e.g., \Cref{alg:primal dual single resource}).

On the other hand, the expected buyback cost incurred by \Cref{alg: randomized algorithm} is at most
\[
    f \int_{1}^{\frac{\weightmax}{\thresholdweight}} \weight \canalloc(\weight)\,d\weight,
\]
since the algorithm only buys back previously selected nodes whose weights lie strictly below $\frac{\weightmax}{\thresholdweight}$. Again, this integral is equal to the total buyback cost of any algorithm that follows the allocation density $\canalloc(\cdot)$.

Combining these observations, we conclude that
\begin{align*}
    \EX[\text{net reward of \Cref{alg: randomized algorithm}}] 
    &\geq \int_{\frac{\weightmax}{\thresholdweight}}^{\weightmax} \weight \canalloc(\weight)\,d\weight 
        - f\int_{0}^{\frac{\weightmax}{\thresholdweight}} \weight \canalloc(\weight)\,d\weight \\
    &= \text{Net reward of \Cref{alg:primal dual single resource} on instance $I_{\weightmax}\in\continstances$}\\
    &\geq \Gamma \weightmax,
\end{align*}
where $\Gamma$ is the competitive ratio achieved by the fractional algorithm following allocation density $\canalloc(\cdot)$ on truncated weight continuum instances.
\hfill\halmos
\end{proof}

\subsection{Non-uniform demand and supply in the single-resource environment}
\label{app:non-uniform demand}
In \Cref{sec:single-resource}, we assumed the offline node had capacity $\capacity = 1$, and each online node had unit demand ($\demandi = 1$ for all $i$). In this subsection, we show how our results extend naturally to the general case where $\capacity$ and $\demandi$ can be arbitrary positive numbers. Without loss of generality, we normalize the problem by scaling so that $\capacity = 1$ and $\demandi \leq 1$ for all online nodes $i \in \onlinenodes$.


Let $\largenumber$ be a large positive integer. Moreover, let $\ALG$ be an algorithm for the (integral) single-resource problem with uniform demand. We now propose \Cref{alg:algorithm non-uniform single resource} that uses $\ALG$ in a blackbox fashion and only improves the competitive ratio.  At a high-level the algorithm breaks the offline node into $\largenumber$ small mini nodes with capacity $\frac{1}{\largenumber}$. Then, upon each arrival of an online node, it runs $\ALG$ for $\lfloor \largenumber \demandi \rfloor$ of these mini offline nodes.

\begin{algorithm}
\caption{
Reduction from non-uniform to 
uniform demand in single-resource environment}
\label{alg:algorithm non-uniform single resource}
    \SetKwInOut{Input}{input}
    \SetKwInOut{Output}{output}
 \Input{Integer $\largenumber$, 
 algorithm $\ALG$ for single-resource environment with uniform demand
 }
 
 \vspace{1mm}
 
 \For{each $j \in [\largenumber]$}{
 \vspace{1mm}
 Initialize $\psi_j \gets 0$
 
 }
 
 \vspace{0mm}
 
 \For{each online node $i\in\onlinenodes$}
 {

 \vspace{1mm}
 \For{each $\lfloor \largenumber \demandi  \rfloor$ mini nodes with smallest   $\psi$ like $j$}
 {
 
 \vspace{1mm}
Let $\allocTilde_i$ denote the fraction of capacity that algorithm $\ALG$ allocates to online node $i$ when node $j$ is the offline node whose capacity and allocations are scaled by $K$.
    
    \vspace{1mm}
    
Allocate $\frac{\allocTilde_i}{K}$ fraction of mini node $j$ to online node $i$.

 \vspace{1mm}
Let $\psi_j \gets \max(\weight_i,\psi_j)$.

}
 }
\end{algorithm}

\begin{proposition}
Suppose there exists a positive constant $\epsilon$ such that $d_i > \epsilon$ for all online nodes $i$. Then \Cref{alg:algorithm non-uniform single resource} achieves a competitive ratio at least as good as that of algorithm $\ALG$ on general instances of the single-resource environment.
\end{proposition}

\begin{proof}{\emph{Proof.}}
Assume the optimal offline solution is characterized by an allocation density function \( x(w) \). Then,
\begin{align*}
    \OPT = \sum_{w: x(w) > 0} x(w) w.
\end{align*}

Let \( K > \frac{1}{\epsilon} \). Then, each online node with \( x(w) > 0 \) were assigned to at least \( \lfloor \largenumber x(w) \rfloor \) mini nodes. From each such mini node, the algorithm earns a profit of \( \frac{1}{\largenumber} \cdot \frac{w}{\CR} \), since \(\ALG\) obtains \( \frac{w}{\CR} \) in each scaled instance. 

Due to the priority queue structure of the allocation (see line~4 of Algorithm~\ref{alg:algorithm non-uniform single resource}) and the constraint \( \sum_{w: x(w) > 0} x(w) \leq 1 \), the mini nodes assigned to the weights \( \{ w : x(w) > 0 \} \) are all distinct. Therefore,
\begin{align*}
    \ALG &\geq \frac{1}{\largenumber} \sum_{w: x(w) > 0} \lfloor \largenumber x(w) \rfloor \cdot \frac{w}{\CR} \\
    &= \frac{\OPT}{\CR} - \sum_{w: x(w) > 0} \frac{\{ \largenumber x(w) \}}{\largenumber} \cdot \frac{w}{\CR} \\
    &\geq \frac{\OPT}{\CR} - \sum_{w: x(w) > 0} \frac{1}{\largenumber} \cdot \frac{w}{\CR}.
\end{align*}

The second term vanishes as \( \largenumber \rightarrow \infty \), completing the proof.
\hfill\halmos
\end{proof}

}

\section{Missing Figure}
\revcolor{\Cref{fig:tauLambda} shows the the assignment of correct values for $(\lambda,\tau)$ for each $\buybackcost$.
\begin{figure}[ht]
    \centering
    \begin{tikzpicture}[scale=0.8, transform shape]
\begin{axis}[
axis line style=gray,
axis lines=middle,
xtick style={draw=none},
ytick style={draw=none},
xticklabels=\empty,
yticklabels=\empty,
xmin=-0.0325,xmax=0.5,ymin=-0.45,ymax=4,
width=0.8\textwidth,
height=0.45\textwidth,
samples=50]

\addplot[domain=0:e/2-1, line width=0.5mm] (x, e);


\addplot[line width=0.5mm] coordinates {(0.36 , 2.721718171997723)
(0.365 , 2.7417183148638458)
(0.37 , 2.761719074133534)
(0.375 , 2.7817209530358844)
(0.38 , 2.8017244334445803)
(0.385 , 2.821729976941276)
(0.39 , 2.841738025814615)
(0.395 , 2.8617490039995177)
(0.4 , 2.8817633179609548)
(0.405 , 2.901781357526122)
(0.41000000000000003 , 2.921803496668576)
(0.41500000000000004 , 2.9418300942476328)
(0.42 , 2.9618614947060444)
(0.425 , 2.9818980287287595)
(0.43 , 3.001940013865324)
(0.435 , 3.0219877551183187)
(0.44 , 3.0420415455000134)
(0.445 , 3.062101666559288)
(0.45 , 3.082168388880685)
(0.455 , 3.1022419725573527)
(0.46 , 3.1223226676394833)
(0.465 , 3.1424107145597664)
(0.47000000000000003 , 3.1625063445372246)
(0.47500000000000003 , 3.1826097799607678)
(0.48 , 3.2027212347536236)
(0.485 , 3.222840914719823)
(0.49 , 3.2429690178737443)
(0.495 , 3.263105734753704)
(0.5 , 3.2832512487205294)
};

\addplot[gray!70!white, dashed, line width=0.5mm] coordinates {
(0.0 , 0.5819767068693265)
(0.005 , 0.5865935091974412)
(0.01 , 0.5912373375247281)
(0.015 , 0.5959084298564189)
(0.02 , 0.6006070270006412)
(0.025 , 0.6053333726097984)
(0.03 , 0.6100877132226896)
(0.035 , 0.6148702983073769)
(0.04 , 0.6196813803048211)
(0.045 , 0.624521214673298)
(0.05 , 0.6293900599336155)
(0.055 , 0.6342881777151437)
(0.06 , 0.6392158328026804)
(0.065 , 0.6441732931841644)
(0.07 , 0.6491608300992602)
(0.075 , 0.6541787180888259)
(0.08 , 0.659227235045291)
(0.085 , 0.6643066622639563)
(0.09 , 0.6694172844952412)
(0.095 , 0.6745593899978943)
(0.1 , 0.6797332705931936)
(0.105 , 0.6849392217201506)
(0.11 , 0.6901775424917489)
(0.115 , 0.6954485357522294)
(0.12 , 0.7007525081354571)
(0.125 , 0.7060897701243807)
(0.13 , 0.7114606361116205)
(0.135 , 0.7168654244612012)
(0.14 , 0.7223044575714584)
(0.145 , 0.7277780619391462)
(0.15 , 0.7332865682247696)
(0.155 , 0.7388303113191715)
(0.16 , 0.744409630411401)
(0.165 , 0.7500248690578931)
(0.17 , 0.755676375252988)
(0.17500000000000002 , 0.7613645015008234)
(0.18 , 0.7670896048886245)
(0.185 , 0.7728520471614343)
(0.19 , 0.7786521947983035)
(0.195 , 0.7844904190899884)
(0.2 , 0.7903670962181771)
(0.20500000000000002 , 0.7962826073362922)
(0.21 , 0.8022373386518961)
(0.215 , 0.8082316815107442)
(0.22 , 0.8142660324825185)
(0.225 , 0.8203407934482859)
(0.23 , 0.8264563716897169)
(0.23500000000000001 , 0.8326131799801118)
(0.24 , 0.8388116366772708)
(0.245 , 0.8450521658182586)
(0.25 , 0.851335197216102)
(0.255 , 0.8576611665584729)
(0.26 , 0.8640305155083994)
(0.265 , 0.8704436918070563)
(0.27 , 0.8769011493786849)
(0.275 , 0.8834033484376953)
(0.28 , 0.8899507555979999)
(0.28500000000000003 , 0.8965438439846363)
(0.29 , 0.9031830933477355)
(0.295 , 0.9098689901788931)
(0.3 , 0.9166020278299994)
(0.305 , 0.9233827066345932)
(0.31 , 0.9302115340318026)
(0.315 , 0.9370890246929313)
(0.32 , 0.9440157006507663)
(0.325 , 0.9509920914316637)
(0.33 , 0.9580187341904949)
(0.335 , 0.9650961738485133)
(0.34 , 0.9722249632342284)
(0.34500000000000003 , 0.9794056632273507)
(0.35000000000000003 , 0.9866388429058991)
(0.355 , 0.9939250796965391)
(0.36 , 0.9987382323060267)
(0.365 , 0.9914882262135047)
(0.37 , 0.984394067353675)
(0.375 , 0.9774504296908163)
(0.38 , 0.9706522357898221)
(0.385 , 0.9639946421585732)
(0.39 , 0.9574730256307973)
(0.395 , 0.9510829707033219)
(0.4 , 0.9448202577497538)
(0.405 , 0.9386808520398608)
(0.41000000000000003 , 0.9326608935004375)
(0.41500000000000004 , 0.9267566871592622)
(0.42 , 0.9209646942189983)
(0.425 , 0.9152815237125985)
(0.43 , 0.9097039246960191)
(0.435 , 0.9042287789378772)
(0.44 , 0.8988530940691439)
(0.445 , 0.8935739971590846)
(0.45 , 0.8883887286864972)
(0.455 , 0.88329463687785)
(0.46 , 0.8782891723862588)
(0.465 , 0.873369883287342)
(0.47000000000000003 , 0.8685344103699274)
(0.47500000000000003 , 0.8637804827013161)
(0.48 , 0.8591059134484188)
(0.485 , 0.8545085959375132)
(0.49 , 0.8499864999367124)
(0.495 , 0.8455376681464428)
(0.5 , 0.8411602128843254)
};

\addplot[mark=*,only marks, fill=white] coordinates {(0.3591409142295226176801437356763312488786235468499797874834838138,e)} node[above, pos=1]{};

\addplot[mark=*,only marks, fill=white] coordinates {(0.3591409142295226176801437356763312488786235468499797874834838138,1)} node[above, pos=1]{};

\addplot[gray, dotted] coordinates {(0.3591409142295226176801437356763312488786235468499797874834838138,e)
(0.3591409142295226176801437356763312488786235468499797874834838138,0.)};

\addplot[gray, thick] coordinates {(0.3591409142295226176801437356763312488786235468499797874834838138,0.01)
(0.3591409142295226176801437356763312488786235468499797874834838138,-0.01)};

\addplot[] coordinates {(0.3591409142295226176801437356763312488786235468499797874834838138,0.)} node[below, pos=1]{$\frac{e-2}{2}$};

\addplot[gray, thick] coordinates {(-0.0025, 1) (0.0025, 1)};
\addplot[] coordinates {(-0.007,1)} node[left, pos=1]{$1$};

\addplot[gray, thick] coordinates {(-0.0025, e) (0.0025, e)};
\addplot[] coordinates {(-0.007,e)}  node[left, pos=1]{$e$};

\addplot[] coordinates {(0.09,3.5)}  node[left, pos=1]{$\lambda(f),\tau(f)$};

\addplot[] coordinates {(0.48,0.)} node[below, pos=1]{$f$};
\addplot[gray, dotted] coordinates {(0,1)
(0.3591409142295226176801437356763312488786235468499797874834838138,1)};

\end{axis}

\end{tikzpicture}
    \caption{\revcolor{The choice of $\lambda$ (black solid line) and $\tau$ (gray dashed line) for different $\buybackcost$. Assignment $\lambda(\buybackcost)$ is fixed at $e$ for $f \leq \frac{e-2}{2}$ and increasing (and strictly convex) after that. Assignment $\tau(\buybackcost)$ on the other hand is increasing before the change in the problem regime and then it decreases until it reaches $0$ at infinity.}
    } 
    \label{fig:tauLambda}
\end{figure}
}

\section{Missing Proofs}
\label{sec:apx-missing}

\subsection{Proof of Theorem~\ref{thm:lower bound all f} for the small buyback regime
\texorpdfstring{{($\boldsymbol{\buybackcost \leq \fracthreshold}$)}}{}}
\label{sec:apx-missing lower bound matching small f}

\lowerboundmatching*

\revcolor{
Before presenting the proof of \Cref{thm:lower bound all f}, we first develop the following characterization of the optimal online algorithm. The proof of this characterization can be found at the end of this subsection.}

\begin{restatable}{lemma}{lowerboundmatchingoptmumonline}
\label{lem:lower bound matching optimum online}
In \Cref{example:lower bound matching},
for any buyback factor $\buybackcost\leq \fracthreshold$,
the optimal online algorithm satisfies
\begin{enumerate}
    \item \underline{Symmetric-allocation}:
    for each online node $i\in[\largenumber]$,
    the algorithm 
    allocates equal fractions of online node $i$ 
    to all offline nodes
    with non-zero edge-weights.
    \item \underline{Fully-allocation}:
    there exists $\fullallocthreshold\in\naturals$ such that
    for each online node $i\in [\largenumber - \fullallocthreshold]$,
    the algorithm 
    allocates
    the whole unit of online node $i$
    to offline nodes with non-zero edge-weights.
\end{enumerate}
\end{restatable}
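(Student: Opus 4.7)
The plan is to prove the two properties in sequence: the symmetric-allocation property via a standard exchangeability argument, and the fully-allocation property via a marginal analysis of the resulting one-dimensional optimization problem.

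For the symmetric allocation, I would exploit that $\permu$ is uniform, so the offline nodes are fully exchangeable. Formally, for any online algorithm $\ALG$ and any permutation $\sigma$ over $[K]$, the ``relabeled'' algorithm $\ALG_\sigma$ has the same expected profit as $\ALG$. For any fixed realization of $\permu$, the profit (reward minus buyback cost under the greedy buyback rule, which is optimal by the offline-buyback equivalence noted in \Cref{sec:prelim}) is a concave function of the allocation vector: the reward is linear, and the minimum buyback cost is the optimal value of a linear program with allocation-dependent upper bounds on the buyback variables, hence convex and piecewise-linear. Applying Jensen's inequality to the uniform mixture over $\sigma$ then shows that the symmetric allocation (each active offline node receiving the same amount at each time) achieves expected profit at least as large as any asymmetric one.

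Under symmetric allocation, the decision at time $i$ reduces to a single scalar $x_i \in [0,1]$, representing the total fraction of online node $i$ that is allocated (split evenly across the $K-i+1$ active offline nodes at weight $1/(K-i+1)$ each). Since $\permu$ is a bijection, summing per-offline-node profit over all $j$ equals summing over $t \in [K]$ of the profit of the node with $\permu(j) = t$, so the expected total profit is a \emph{deterministic} concave function $G(x_1,\ldots,x_K)$. To prove $x_i^\star = 1$ for $i \leq K - \fullallocthreshold$, I would argue by contradiction via first-order conditions: if $x_i^\star < 1$ at the optimum, then $\partial G / \partial x_i \le 0$. I would then decompose $\partial G / \partial x_i = \sum_{t \geq i} \partial P(t,x)/\partial x_i$, where $P(t,x)$ is the profit contribution of the node with $\permu(j) = t$, and do casework based on whether that node experiences buyback: a no-buyback $t$ contributes $+1/(K-i+1)^2$, while a buyback $t$ contributes $-f/(K-i+1)^2$ whenever the marginal time-$i$ allocation sits inside the boughtback portion. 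Comparing the count of ``no-buyback'' $t$'s (i.e., $t$ with $\sum_{i'=1}^t x_{i'}/(K-i'+1) \leq 1$) against the count of ``buyback'' $t$'s then yields positivity for $i \le K - \fullallocthreshold$ whenever $f \le (e-2)/2$, with $\fullallocthreshold$ chosen large enough to absorb boundary effects near $t \approx K$.

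The main obstacle is the coupling of the $x_i$'s through the greedy-buyback rule: the buyback threshold index $i^\star(t)$ at the offline node with $\permu(j) = t$ depends on the entire allocation vector, so the derivative analysis cannot be localized to any single $t$. In particular, the borderline case $i = i^\star(t)$ (where a marginal change moves allocation across the buyback boundary) requires separate treatment before the contributions are aggregated over $t \geq i$. The choice of $\fullallocthreshold$, and the reason this construction becomes tight exactly at the phase transition $f = (e-2)/2$, both trace to when this aggregate derivative ceases to be positive: at the critical $f$, the positive ``no-buyback'' contributions and the negative ``buyback'' contributions balance asymptotically, mirroring the corresponding phase transition in the upper bound achieved by \Cref{alg:primal dual matching}.
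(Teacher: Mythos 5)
Your symmetric-allocation argument matches the paper's in substance (exchangeability of $\permu$ plus concavity of profit in the realized allocation vector so that averaging the decisions of the relabeled algorithms $\ALG_\sigma$ can only help); the paper states this more tersely, and your Jensen-based justification is the right way to fill in the detail.

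The fully-allocation argument, however, has a genuine gap in the marginal accounting, and it is precisely the term that produces the phase-transition threshold $\fracthreshold$. You tally the per-offline-node contribution of the time-$i$ increment as $+1/(K-i+1)^2$ for ``no-buyback'' nodes and $-f/(K-i+1)^2$ for ``buyback'' nodes. Summing these gives the condition $k - \pre(k) \geq f\,\pre(k)$ with $k = K-i+1$, which in the limit $k\to\infty$ (using $\pre(k)/k \to 1/e$) yields $f \leq e-1$. That threshold is too weak: it would prove the lower bound $\frac{e}{e-(1+f)}$ for all $f < e-1$, contradicting the paper's matching upper bound $-\Lambertterm$ for $f > \fracthreshold$. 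What you are missing is a displacement/opportunity cost at the \emph{no-overflow} nodes. For a node $j$ with ``reverse index'' $\ell = K-j+1 \in (\pre(k), k]$, the extra $dx$ allocated at time $i$ at weight $1/k$ survives, but it occupies capacity that would otherwise have held lower-weight mass from earlier online nodes, forcing an additional buyback of $dx$ at weight roughly $1/\suc(\ell)$ (the weight at which node $j$'s cumulative quantile reaches the capacity). The paper's per-node marginal gain for such $\ell$ is therefore $\left(\frac{1}{k} - \frac{1+\buybackcost}{\suc(\ell)}\right)dx$, not $\frac{1}{k}\,dx$. Including this term, the non-negativity condition becomes
\begin{align*}
1 \;\geq\; (1+\buybackcost)\left(\frac{\pre(k)}{k} + \sum_{\ell=\pre(k)+1}^{k}\frac{1}{\suc(\ell)}\right),
\end{align*}
and since both $\pre(k)/k$ and the sum $\sum_{\ell=\pre(k)+1}^k 1/\suc(\ell)$ tend to $1/e$ (this is exactly the auxiliary \Cref{lem:pre suc fact}), the asymptotic requirement is $1 \geq (1+\buybackcost)\cdot \frac{2}{e}$, i.e.\ $\buybackcost \leq \fracthreshold$. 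Without the $\sum 1/\suc(\ell)$ term your casework cannot recover the correct threshold, and no choice of $\fullallocthreshold$ will repair it.

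Your concluding remark that the obstacle is ``the coupling of the $x_i$'s through the greedy-buyback rule'' correctly flags where the difficulty lives, but the fix is not to treat the borderline index more carefully; it is to recognize that even the strictly interior no-overflow nodes carry a nonzero marginal cost via displaced earlier allocations. Once that is accounted for, the derivative sign check reduces cleanly to the asymptotics of $\pre$ and $\suc$.
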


\revcolor{
Equipped with the characterization of the optimal online algorithm, we are ready to prove \Cref{thm:lower bound all f} for the small buyback regime using \Cref{example:lower bound matching}.}

\begin{proof}{\emph{Proof for the small buyback regime ($\buybackcost \leq  \fracthreshold$).}}
Consider the optimum offline benchmark and 
the optimal competitive online algorithm in \Cref{example:lower bound matching}.
By construction, it is straightforward to verify that the optimum offline matches 
each online node $i$ to offline node $\permu^{-1}(i)$ and
collects total profit $\sum_{i\in[\largenumber]}\sfrac{1}{i}$, i.e.,
\begin{align*}
    \OPT(\text{\Cref{example:lower bound matching}})
    = 
    \sum_{\ell\in[\largenumber]}\frac{1}{\ell}
\end{align*}
By \Cref{lem:lower bound matching optimum online},
in the optimal online algorithm,
for every online node $i\in[\largenumber-\fullallocthreshold]$,
the algorithms allocates $\frac{1}{\largenumber - i + 1}$
to each of $\largenumber - i + 1$ offline nodes with non-zero
edge weights.
Therefore, the total profit
induced by every online node $i\in[\largenumber - \nxt(\fullallocthreshold)]$ is at most 
$\frac{1}{\largenumber - i + 1}\left(1 - (1+\buybackcost)\frac{\pre(\largenumber - i + 1)}{\largenumber - i + 1}\right)$.\footnote{
The weight of online node $i$ is $\frac{1}{\largenumber - i + 1}$.
There are $\pre(\largenumber - i + 1)$ offline nodes $j$,
which the algorithm will buyback its allocation from online node $i$ to offline node $j$ in the future.}
The total profit induced by every online node $i\in[\largenumber - \nxt(\fullallocthreshold) + 1:\largenumber]$ is at 
most $\frac{1}{\largenumber - i + 1}$.
Putting two pieces together,
the expected total profit of the 
optimal online algorithm $\ALG^*$ 
is 
at most
\begin{align*}
    \ALG^*(\text{\Cref{example:lower bound matching}})
    &\leq
    \sum_{i\in[\largenumber - \nxt(\fullallocthreshold)]}
    \frac{1}{\largenumber - i + 1}\left(1 - (1+\buybackcost)\frac{\pre(\largenumber - i + 1)}{\largenumber - i + 1}\right)
    +
    \sum_{i\in[\largenumber - \nxt(\fullallocthreshold) + 1:\largenumber]}\frac{1}{\largenumber - i + 1}
    \\
    &=
    \sum_{\ell =1}^{\nxt(\fullallocthreshold)}
    \frac{1}{\ell} 
    +
    \sum_{\ell = \nxt(\fullallocthreshold)}^{\largenumber}
    \frac{1}{\ell}\left(
    1 - (1 + \buybackcost)\frac{\pre(\ell)}{\ell}\right)
    \\
    &\overset{}{\leq} 
    \sum_{\ell =1}^{\nxt(\largenumber_1(\buybackcost,\varepsilon))}
    \frac{1}{\ell} 
    +
    \sum_{\ell = \nxt(\largenumber_1(\buybackcost,\varepsilon))}^{\largenumber}
    \frac{1}{\ell}\left(
    1 - (1 + \buybackcost)\left(\frac{1}{e} - \varepsilon\right)\right)
\end{align*}
where $\varepsilon>0$ is an arbitrary positive constant,
and $\largenumber_1(\buybackcost,\varepsilon) \geq \fullallocthreshold$
is a constant such that $\frac{\pre(\ell)}{\ell} \geq 
\frac{1}{e}-\varepsilon$ for every $\ell \geq \largenumber_1(\buybackcost,\varepsilon)$.
The existence of constant $\largenumber_1(\buybackcost,\varepsilon)$ 
is guaranteed by \Cref{lem:pre suc fact}. 
Notably,
constant $\largenumber_1(\buybackcost,\varepsilon)$
is independent of $\largenumber$.

Finally, to lowerbound the optimal competitive ratio $\optCRgen$,
note that 
\begin{align*}
    \frac{1}{\optCRgen} &\leq 
    \frac{\ALG^*(\text{\Cref{example:lower bound matching}})}
    {\OPT(\text{\Cref{example:lower bound matching}})}
    \\
    &\leq \displaystyle
    \frac{
    \sum_{\ell =1}^{\nxt(\largenumber_1(\buybackcost,\varepsilon))}
    \frac{1}{\ell} 
    +
    \sum_{\ell = \nxt(\largenumber_1(\buybackcost,\varepsilon))}^{\largenumber}
    \frac{1}{\ell}\left(
    1 - (1 + \buybackcost)\left(\frac{1}{e} - \varepsilon\right)\right)
    }{
    \sum_{\ell\in[\largenumber]}
    \frac{1}{\ell}}
    \\
    &\overset{}{\leq} 
    o(1) + 1 - (1 + \buybackcost)\left(\frac{1}{e} - \varepsilon\right)
\end{align*}
where the last inequality holds since 
$\displaystyle\frac{\sum_{\ell = \nxt(\largenumber_1(\buybackcost,\varepsilon))}^{\largenumber}
    \frac{1}{\ell}}{
     \sum_{\ell\in[\largenumber]}\frac{1}{\ell}} = o(1)$ when 
     we let $\largenumber$ go to infinite
     and hold $\largenumber_1(\buybackcost,\varepsilon)$ as constant.
     \hfill\halmos
\end{proof}

We conclude this subsection by proving \Cref{lem:lower bound matching optimum online}.
We first introduce two auxiliary functions 
and prove one related technical lemma 
which will be used 
in the final analysis.
Define function $\pre:\naturals\rightarrow\naturals$ and 
function $\nxt:\naturals\rightarrow\naturals$ where:
\begin{align*}
    \forall i\in\naturals:
    \qquad
    \pre(i) = \max
    \left\{i'\in\naturals:\sum_{\ell=i'}^i \frac{1}{\ell} > 1
    \right\}
    \qquad \textrm{and} \qquad
    \nxt(i) = \min
    \left\{i'\in\naturals:\sum_{\ell=i}^{i'} \frac{1}{\ell} > 1~
    \right\}~~.
\end{align*}
As a sanity check, note that $\pre$ is the inverse function of $\nxt$ by definition. Now we have the following lemma regarding the asymptotic behaviours of these two functions. 
\begin{lemma}
\label{lem:pre suc fact}
$\lim\limits_{i\rightarrow\infty}
\frac{\pre(i)}{i} = \frac{1}{e}$,
$\lim\limits_{i\rightarrow\infty}
\frac{\nxt(i)}{i} = e$,
and 
$\lim\limits_{i\rightarrow \infty}
\sum\nolimits_{i'=\pre(i) + 1}^{i} \frac{1}{\nxt(i')} \leq \frac{1}{e}$~.
\end{lemma}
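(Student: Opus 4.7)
{\emph{Proof plan.}}
My plan is to reduce all three claims to the standard asymptotic for harmonic numbers, $H_m \triangleq \sum_{\ell=1}^m \tfrac{1}{\ell} = \log m + \gamma + O(1/m)$, which yields
\[
\sum_{\ell=a+1}^{b} \tfrac{1}{\ell} \;=\; \log\!\tfrac{b}{a} + O\!\bigl(\tfrac{1}{a}\bigr)
\]
uniformly as $a \to \infty$. Both $\pre(i)$ and $\nxt(i)$ tend to infinity with $i$ (since each one contains $\Theta(i)$ indices worth of summands of size $\Theta(1/i)$), so this estimate will be applicable throughout.

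For the first limit, the maximality of $\pre(i)$ implies
\[
\sum_{\ell=\pre(i)}^{i} \tfrac{1}{\ell} > 1 \;\geq\; \sum_{\ell=\pre(i)+1}^{i} \tfrac{1}{\ell}.
\]
Substituting the harmonic approximation (noting the two sums differ by only $O(1/\pre(i)) = o(1)$), I get $\log(i/\pre(i)) = 1 + o(1)$, hence $\pre(i)/i \to 1/e$. The second limit follows by the symmetric argument using the minimality of $\nxt(i)$ in the bounds $\sum_{\ell=i}^{\nxt(i)} 1/\ell > 1 \geq \sum_{\ell=i}^{\nxt(i)-1} 1/\ell$, yielding $\nxt(i)/i \to e$.

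For the third claim I will use the second limit to replace $\nxt(i')$ by roughly $e\cdot i'$ inside the sum. Concretely, fix any $\varepsilon > 0$; then there exists $i_0$ such that $\nxt(i') \geq (e-\varepsilon)\,i'$ for all $i' \geq i_0$. Since $\pre(i)\to\infty$, for $i$ large enough every summation index $i'\in[\pre(i)+1, i]$ satisfies $i' \geq i_0$, and therefore
\[
\sum_{i'=\pre(i)+1}^{i} \tfrac{1}{\nxt(i')} \;\leq\; \tfrac{1}{e-\varepsilon}\sum_{i'=\pre(i)+1}^{i} \tfrac{1}{i'} \;=\; \tfrac{1}{e-\varepsilon}\bigl(\log(i/\pre(i)) + o(1)\bigr) \;\longrightarrow\; \tfrac{1}{e-\varepsilon},
\]
where I used $\sum_{\ell=\pre(i)+1}^i 1/\ell \leq 1$ (by definition of $\pre(i)$) in the last step. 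Letting $\varepsilon \downarrow 0$ delivers $\limsup_{i\to\infty} \sum_{i'=\pre(i)+1}^i 1/\nxt(i') \leq 1/e$.

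The main technical care is in tracking that the $o(1)$ error terms from the harmonic approximation remain uniform over the shifting summation window $[\pre(i)+1, i]$, and that the asymptotic $\nxt(i')/i' \to e$ is applied only once all indices in the window exceed the threshold $i_0$ -- both of which are straightforward because $\pre(i) \sim i/e$ by the first part.
\hfill\halmos
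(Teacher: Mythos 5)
Your proposal is correct and follows essentially the same route as the paper: bracketing the defining sums from both sides, applying a $\log$-asymptotic to conclude $\log(i/\pre(i)) \to 1$ (the paper uses direct integral comparisons $\int_{a}^{b} dt/t$, you use harmonic-number asymptotics — the same estimate in different packaging), and then, for the third claim, replacing $1/\nxt(i')$ by $(1/e + \varepsilon)/i'$ once all indices in the window $[\pre(i)+1,i]$ exceed the threshold and invoking $\sum_{\ell=\pre(i)+1}^i 1/\ell \leq 1$. One small cosmetic difference: for the second limit the paper simply cites that $\nxt$ is the inverse of $\pre$, whereas you rerun the symmetric bracketing argument directly from the definition of $\nxt$, which is arguably cleaner since $\pre$ and $\nxt$ are only approximate inverses.
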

\begin{proof}{\emph{Proof.}}
We first show $\lim\limits_{i\rightarrow\infty}
\frac{\pre(i)}{i} = \frac{1}{e}$,
which is equivalent to $\lim\limits_{i\rightarrow\infty}
\frac{i}{\pre(i)} = {e}$.
On one hand, 
\begin{align*}
    \log\left(
    \frac{i}{\pre(i) - 1}
    \right)
    =
    \displaystyle\int_{\pre(i) - 1}^i \frac{1}{\ell} \,d\ell
    \geq 
    \displaystyle\sum_{\ell=\pre(i)}^{i} \frac{1}{\ell} 
    \geq 1~~.
\end{align*}
Similarly, 
\begin{align*}
     \log\left(
    \frac{i+1}{\pre(i)}
    \right)
    =
    \displaystyle\int_{\pre(i)}^{i+1} \frac{1}{\ell} \,d\ell
    \leq 
    \displaystyle\sum_{\ell=\pre(i) + 1}^{i} \frac{1}{\ell} 
    < 1~~.
\end{align*}
Combining the above inequalities with the fact that 
\begin{align*}
    \lim\limits_{i\rightarrow \infty}
    \frac{i}{\pre(i)}
    =
    \lim\limits_{i\rightarrow \infty}
    \frac{i+1}{\pre(i)}
    =
    \lim\limits_{i\rightarrow \infty}
    \frac{i}{\pre(i)-1}~,
\end{align*}
we prove $\lim\limits_{i\rightarrow\infty}
\frac{\pre(i)}{i} = \frac{1}{e}$ by the sandwich theorem, as desired.
Since $\suc$ is the inverse function of $\pre$,
we have $\lim\limits_{i\rightarrow\infty}
\frac{\nxt(i)}{i} = e$ as well.

Finally, we show $\lim\limits_{i\rightarrow \infty}
\sum\nolimits_{i'=\pre(i) + 1}^{i} \frac{1}{\nxt(i')} \leq \frac{1}{e}$.
As we shown above, for every $\varepsilon>0$,
there exists $N\in\naturals$ such that 
$\frac{\nxt(i)}{i} \geq \frac{1}{\frac{1}{e} + \varepsilon}$
for every $i\geq N$.
Therefore, 
\begin{align*}
    \lim\limits_{i\rightarrow \infty}
\sum\nolimits_{i'=\pre(i) + 1}^{i} \frac{1}{\nxt(i')}
\leq 
\lim\limits_{i\rightarrow \infty}
\sum\nolimits_{i'=\pre(i) + 1}^{i} 
\left(\frac{1}{e}+\varepsilon\right)\frac{1}{i}
\leq 
\frac{1}{e}+\varepsilon
\end{align*}
Letting $\varepsilon$ goes to zero finishes the proof.
\hfill\halmos
\end{proof}
We are ready to prove \Cref{lem:lower bound matching optimum online}.

\begin{proof}{\emph{Proof of \Cref{lem:lower bound matching optimum online}.}}
We first show the symmetric-allocation property.
In particular, we show how to
convert an arbitrary online algorithm $\ALG$
to an online algorithm $\ALG\primed$ that satisfies the symmetric-allocation
property and achieves the same profit:
Fix an arbitrary online algorithm $\ALG$.
Let $\edgeallocijp$ be the expected allocation
between online node $i$ and offline node $j$ under 
permutation $\permu$ in $\ALG$.
Note that for every $i,j,j'\in[\largenumber]$
\begin{align*}
    \expect[\permu]{\edgeallocijp\condition \weightij > 0}
    =
    \expect[\permu]{\edgealloc_{ij'}^\permu\condition \weight_{ij'}> 0}
\end{align*}
which holds due to the symmetry of
our bipartite graph construction, i.e., conditioned on the event 
$[\weightij > 0\land \weight_{ij'}> 0]$,
it is an automorphism
that exchanges $j$ and $j'$.
Therefore, we construct algorithm $\ALG\primed$
by simulating algorithm $\ALG$.
For each online node $i$, 
let $\ALG\primed$ allocates 
 $\expect[\permu]{\edgeallocijp\condition \weightij > 0}$
 to each offline node $j$ with non-zero edge weight,
 and allocates nothing to each offline node $j$ with zero edge weight.
 By construction, it is straightforward to verify that
 $\ALG\primed$ satisfies the symmetric-allocation property
 and achieves the same profit as $\ALG$.
 
Next, we show the fully-allocation property.
Fix an arbitrary online node $i\in[\largenumber]$.
Let $k\triangleq \largenumber - i + 1$.
Note that there are $k$ offline nodes $j$ with non-zero edge weight $\weightij > 0$.
Suppose the optimal online algorithm allocates $k\cdot dx$ amount of online node $i$ to offline nodes. 
By the symmetric-allocation property,
each offline node $j$ with non-zero edge weight
receives $dx$ amount.
Given the definition of function 
$\pre(\cdot)$ and the fact that the optimal
algorithm always greedily buys back the smallest allocated weight,
there are $\pre (k)$ offline nodes with non-zero edge weight,
each of which has total allocation exceeding one in the future.\footnote{Namely, suppose the algorithm 
fully allocates each future online node $i'> i$ 
with equal fractions to offline nodes with non-zero edge weight,
then
the total amount allocated to offline node $j>\largeconstant-\pre(k)$ exceeds one.}
Thus, these $\pre (k)$ offline nodes 
are the ones 
which the algorithm may buyback their $dx$ amount of online node $i$ in the future. 
As a consequence, by allocating this $k\cdot dx$ amount of online node $i$,
the algorithm incurs an additional buyback cost at most 
$
\pre(k)\cdot \left(\frac{\buybackcost}{k}\right)dx$.\footnote{There are $\pre(k)$ offline nodes which may be buyback in the future.
Each offline node receives $dx$ amount of online node with edge weight $\frac{1}{k}$.}
 On the other hand,
 each of
 the remaining $k - \pre(k)$
 offline nodes with non-zero edge weight
 has total allocation less than one in the future.
 Consider the allocated weights of these offline nodes 
 in the end, the algorithm makes a marginal profit of at least
$\sum_{\ell = \pre(k) + 1}^{k}\left(\frac{1}{k} 
- \frac{1+\buybackcost}{\nxt(\ell)}\right)dx$.
Putting all pieces together, the increase of the final profit
by allocating extra $k\cdot dx$ amount of online node $i$ is 
at least
\begin{align*}
    -\pre(k)\cdot \left(\frac{\buybackcost}{k}\right)dx
    +
    \sum_{\ell = \pre(k) + 1}^{k}\left(\frac{1}{k} 
- \frac{1+\buybackcost}{\nxt(\ell)}\right)dx
\end{align*}
By algebra, this increase of the final profit is non-negative if 
\begin{align*}
    1 - (1 + \buybackcost)\left(
    \frac{\pre(k)}{k}
    +
     \sum_{\ell = \pre(k) + 1}^{k}
   \frac{1}{\nxt(\ell)}
    \right)
\geq 0
\end{align*}
Invoking \Cref{lem:pre suc fact},
for every $\varepsilon > 0$,
there exists $N(\varepsilon)\in\naturals$ such that 
    for every $k\geq N(\varepsilon)$,
$\left(
    \frac{\pre(k)}{k}
    +
     \sum_{\ell = \pre(k) + 1}^{k}
   \frac{1}{\nxt(\ell)}
    \right) \leq \frac{2+\varepsilon}{e}$.
Thus, for every buyback factor $\buybackcost < \fracthreshold$,
let $\varepsilon(\buybackcost)$ be the number such that $\buybackcost \leq \frac{e - 2 - \varepsilon(\buybackcost)}{2 + \varepsilon(\buybackcost)}$.
Setting $\fullallocthreshold \triangleq N(\varepsilon(\buybackcost))$
finishes the proof.
\hfill\halmos
\end{proof}

\subsection{Proof of Proposition~\ref{prop:lower bound single resource}; Theorem~\ref{thm:lower bound all f} for the large buyback regime
\texorpdfstring{($\boldsymbol{\buybackcost \geq \fracthreshold}$)}{}}
\label{sec:apx-missing lower bound single resource}
\lowerboundsingleresource*

To prove \Cref{prop:lower bound single resource}, we first use a lemma in \cite{AK-09} to establish the same lower bound on competitive ratio of integral algorithms using a randomized
truncated weight continuum instance, defined in \Cref{example:lower bound single resource} below. We then show that we can extend it to online fractional algorithms by combining it with the fractional-to-integral rounding for truncated weight continuum instances.

\begin{lemma}[\citealp{AK-09}]
\label{lem:lower bound single resource integral}
In \Cref{example:lower bound single resource},
the expected profit in
the optimum offline benchmark is
$\log(\totaltime_0) + 1$,
and the expected profit in 
the optimal integral algorithm is 
$\max_{\weight\geq 1} 1 + (k(\weight) - 1)\cdot 
\frac{\weight - (1 + \buybackcost)}{\weight}$
where $k(\weight)= \max\{\ell\in\naturals:\weight^{\ell-1} \leq \totaltime_0\}$.
\end{lemma}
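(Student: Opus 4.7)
My plan is to compute $\mathbb{E}[\OPT]$ directly and to reduce the design of the optimal integral algorithm to a finite-dimensional optimization problem that is solved cleanly by AM-GM, yielding exactly the stated formula.

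For the optimum offline on a truncated weight continuum $\instance_\totaltime$, the omniscient optimum matches the single offline node to the heaviest online node $\totaltime$ (with no buyback needed), earning profit $\totaltime$. Integrating against the distribution of $\totaltime$---whose density $\totaltime^{-2}$ must be supported on $[1,\totaltime_0]$ for integrability, plus the atom of mass $\totaltime_0^{-1}$ at $\totaltime_0$---I obtain
\begin{align*}
    \mathbb{E}[\OPT] \;=\; \int_{1}^{\totaltime_0} \totaltime \cdot \totaltime^{-2}\,d\totaltime + \totaltime_0\cdot \totaltime_0^{-1} \;=\; \log(\totaltime_0)+1.
\end{align*}

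Since the continuum weights arrive in strictly increasing order, every deterministic integral algorithm is fully described by an increasing acceptance schedule $v_1 < v_2 < \dots < v_K \leq \totaltime_0$: the algorithm buybacks its current holding and accepts exactly when the arriving weight hits some $v_k$. A telescoping computation shows that on $\instance_\totaltime$ with $v_k \leq \totaltime < v_{k+1}$ (setting $v_0 := 0$ and $v_{K+1} := +\infty$), the algorithm's profit equals $v_k - \buybackcost(v_1 + \dots + v_{k-1})$; randomized integral algorithms are convex combinations of deterministic ones and hence do not improve the maximum. Using the tail $\mathbb{P}[\totaltime \geq x] = 1/x$ on $[1,\totaltime_0]$ and decomposing by ``upgrade'' events, the expected profit is
\begin{align*}
    \mathbb{E}[\ALG] \;=\; \sum_{k=1}^{K} \mathbb{P}[\totaltime \geq v_k]\,\big(v_k - (1+\buybackcost) v_{k-1}\big) \;=\; K - (1+\buybackcost)\sum_{k=2}^{K} \frac{v_{k-1}}{v_k},
\end{align*}
where $v_1 = 1$ is without loss of generality (if $v_1 < 1$ one collects $v_1 < 1$ in expectation versus $1$, and if $v_1 > 1$ then shrinking $v_1$ toward $1$ only decreases the subsequent ratio $v_1/v_2$).

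For each fixed length $K$, I will minimize $\sum_{k=2}^{K} v_{k-1}/v_k$ via AM-GM on the positive ratios $r_k := v_{k-1}/v_k$ with $\prod_{k=2}^{K} r_k = v_1/v_K \geq 1/\totaltime_0$; the minimum equals $(K-1)\,\totaltime_0^{-1/(K-1)}$, achieved by an exact geometric schedule $v_k = w^{k-1}$ with $v_K = \totaltime_0$, pinning down $w = \totaltime_0^{1/(K-1)}$. This yields the length-$K$ optimum $1 + (K-1)(w - (1+\buybackcost))/w$. Reparameterizing the outer maximum over integer $K$ as a maximum over real $w \geq 1$: within each level block where $k(w) = K$ is constant, the function $1 + (k(w) - 1)(w - (1+\buybackcost))/w$ is monotone increasing in $w$ (since $(w - (1+\buybackcost))/w = 1 - (1+\buybackcost)/w$ is increasing), so its block maximum is attained at the right endpoint $w = \totaltime_0^{1/(K-1)}$, where $k(w) = K$ by definition. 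Matching these block maxima reproduces $\max_{w \geq 1}\,[1 + (k(w)-1)(w-(1+\buybackcost))/w]$ exactly. The main subtleties I anticipate are (i) verifying that $v_K = \totaltime_0$ is optimal in the AM-GM step (rather than any $v_K < \totaltime_0$), and (ii) handling the reparameterization between integer $K$ and real $w \geq 1$ with care around the jump points of $k(\cdot)$; both are routine but deserve attention.
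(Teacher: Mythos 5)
The paper states this lemma as a citation from \citet{AK-09} and does not supply its own proof, so there is no in-paper argument to compare against; the assessment below therefore concerns the correctness and completeness of your argument on its own terms.

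Your overall route is sound and recovers the stated formulas. The computation of $\mathbb{E}[\OPT]$ is correct once the density $T^{-2}$ is interpreted as supported on $[1,T_0]$ (as it must be for the total mass to equal one with the atom of $T_0^{-1}$ at $T_0$). The reduction to a deterministic increasing acceptance schedule, the Abel-summation/``upgrade-event'' rewriting $\mathbb{E}[\ALG]=\sum_k \Pr[T\geq v_k]\bigl(v_k-(1+\buybackcost)v_{k-1}\bigr)$, the AM--GM step proving the geometric schedule $v_k = T_0^{(k-1)/(K-1)}$ minimizes $\sum_{k\geq 2}v_{k-1}/v_k$ subject to $v_1=1$, $v_K\leq T_0$, and the reparameterization of the outer maximum from integer $K$ to continuous $w\geq 1$ by noting that $1+(k(w)-1)(1-(1+\buybackcost)/w)$ is increasing on each constancy block of $k(\cdot)$ and attains its block maximum at $w=T_0^{1/(K-1)}$, are all correct. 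The convex-combination argument that randomization cannot help is also standard and correct.

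The one place that needs shoring up is the claim that $v_1=1$ is without loss of generality, specifically the $v_1<1$ side. Raising $v_1$ from some $v_1<1$ to $1$ increases the first term from $\Pr[T\geq v_1]\cdot v_1=v_1$ to $1$, but it also increases the subtracted ratio from $v_1/v_2$ to $1/v_2$; the net change in expected profit is $(1-v_1)\bigl(1-(1+\buybackcost)/v_2\bigr)$, which is nonnegative only when $v_2\geq 1+\buybackcost$. Your sentence only addresses the gain in the first term. To close the gap, one should separately argue that any schedule with $v_2<1+\buybackcost$ is dominated: in that case the marginal term at $k=2$, namely $1-(1+\buybackcost)v_1/v_2$, is at most what dropping $v_2$ altogether (and jumping directly from $v_1$ to $v_3$) achieves, so one may first normalize the schedule so that all accepted weights are $\geq 1$ and each consecutive ratio $v_{k}/v_{k-1}$ is at least $1+\buybackcost$, after which $v_1=1$ follows cleanly from the monotonicity you invoke. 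This is a routine but non-trivial case analysis that deserves to be written out rather than asserted in one clause, especially since it is the only place where the argument interacts with the atypical tail behavior of $T$ below weight $1$.
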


\begin{lemma}
\label{lem:lower bound reduction main body1}
In the single-resource environment, for any online algorithm $\ALG$ with the competitive ratio $\approxratio$
within the truncated weight continuum instances $\continstances$,
there exists a randomized integral online algorithm $\ALG\primed$ with the same competitive ratio $\approxratio$
within the truncated weight continuum instances $\continstances$.
\end{lemma}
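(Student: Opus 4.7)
The plan is to construct $\ALG\primed$ as a coupled randomized rounding of $\ALG$ driven by a single uniform seed. At time zero we sample $\randseed \sim \mathrm{Uniform}[0,1]$ and simulate $\ALG$ internally, maintaining online its allocation quantile function $\calloc^{(t)}$; the integral algorithm $\ALG\primed$ is prescribed to hold, at every time $t$, the single weight
\begin{align*}
W^{(t)} \triangleq \sup\bigl\{w \in \reals_+ : \calloc^{(t)}(w) \geq \randseed\bigr\}.
\end{align*}
Whenever $W^{(t)}$ changes, $\ALG\primed$ buybacks the previously held weight and reallocates its unit capacity to the new one; this is implementable online because $\calloc^{(t)}$ is available online from $\ALG$. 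Without loss of generality we assume that $\ALG$ satisfies greedily buyback (\ref{eq:greedily allocation property}): for truncated weight continuum instances, replacing any buyback by a buyback of the current minimum allocated weight only (weakly) increases the fractional profit, so we may restrict attention to such $\ALG$.

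The first step is to verify that the marginal distribution of the held weight matches $\ALG$'s allocation. Since $\calloc^{(t)}$ is nonincreasing in $w$ with $\calloc^{(t)}(0)=1$ and $\randseed$ is uniform on $[0,1]$,
\begin{align*}
\Pr[W^{(t)} \geq w] = \Pr[\calloc^{(t)}(w) \geq \randseed] = \calloc^{(t)}(w),
\end{align*}
so the law of $W^{(t)}$ matches the allocation distribution of $\ALG$ at every time $t$. In particular, the expected terminal reward of $\ALG\primed$ on instance $\instance_T$ equals $\int_0^T w\,\alloc^{(T)}(w)\,dw$, the fractional reward collected by $\ALG$.

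The second step is an infinitesimal accounting of the buyback cost. Decompose the execution of $\ALG$ into ``swap events'' in which $\ALG$ allocates $dx$ units of the resource to an arriving weight $w$ while buybacking $dx$ units from its current minimum allocated weight $\buybackweight^{(t)}$. Under the coupling, the set of seeds whose $\randseed$-quantile sits at $\buybackweight^{(t)}$ just before the swap is a subinterval of $[0,1]$ of Lebesgue measure exactly $dx$, and precisely for these seeds $W^{(t)}$ jumps from $\buybackweight^{(t)}$ to $w$. Hence $\ALG\primed$ performs an integral swap from $\buybackweight^{(t)}$ to $w$ with probability $dx$, contributing expected buyback cost $(1+\buybackcost)\,\buybackweight^{(t)}\,dx$ — exactly matching $\ALG$'s fractional buyback cost for the same swap. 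Aggregating over the horizon yields $\ALG\primed(\instance_T) = \ALG(\instance_T)$. Since the optimum offline benchmark is identical in the integral and fractional regimes (it is simply $T$), the competitive ratio of $\ALG\primed$ within $\continstances$ is exactly $\approxratio$.

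The main obstacle will be making the infinitesimal accounting rigorous: for a general $\ALG$, the map $t \mapsto \calloc^{(t)}$ may evolve irregularly, so the sample path of $W^{(t)}$ requires careful measure-theoretic treatment, e.g., a Fubini-style exchange between integration over the seed $\randseed$ and integration over time, in order to confirm that the expected reward and expected buyback cost of the randomized integral process coincide with the corresponding fractional quantities. The greedily-buyback reduction is what makes this identification clean: it guarantees that the unique weight revoked at each instant is the current minimum $\buybackweight^{(t)}$, so the seed-to-held-weight map drifts monotonically upward in $t$ and each $\ALG$-swap induces a single matching $\ALG\primed$-swap on a measurable set of seeds of the right measure, which is precisely the identification that powers the accounting argument above.
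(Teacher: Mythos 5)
Your construction is not implementable as an online integral algorithm, and the accounting in Step~2 rests on a false identification of which seeds move during a swap. When $\ALG$ removes $dx$ mass at the minimum $\buybackweight^{(t)}$ and adds $dx$ mass at the arriving weight $w$, the quantile function rises by $dx$ at \emph{every} $w'\in(\buybackweight^{(t)},w)$, so $W^{(t)}=\sup\{w':\calloc^{(t)}(w')\geq\randseed\}$ drifts upward for \emph{every} interior seed, not only those near $\randseed=1$. Concretely, under the canonical allocation on a weight continuum instance one has $\calloc^{(t)}(w')=1-\log(w'\thresholdweight/t)/\log\thresholdweight$ on $[t/\thresholdweight,t]$, hence $W^{(t)}=t\,\thresholdweight^{-\randseed}$, which moves continuously in $t$ for every fixed $\randseed\in(0,1)$. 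This breaks the proposal in two ways. First, at any time $t$ the weight $W^{(t)}<t$ has already departed, so prescribing that $\ALG\primed$ ``reallocates its unit capacity to the new one'' whenever $W^{(t)}$ changes is not an admissible online move. Second, even ignoring admissibility, each infinitesimal change of $W^{(t)}$ is a full-unit buyback in the integral model, so the putative $\ALG\primed$ accumulates unbounded buyback cost rather than matching $\ALG$. A further sign the accounting is off: the boundary seeds near $\randseed=1$ do not jump to $w$ either; after the swap the new minimum is $\buybackweight^{(t)}+d\buybackweight$ with $\alloc^{(t)}(\buybackweight^{(t)})\,d\buybackweight=dx$, and those seeds land at this new minimum, not at the arriving weight.

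The fix is to threshold the \emph{cumulative-over-time} allocation rather than the current-state quantile function. Let $\allocTilde(\weight)$ be the density $\ALG$ allocates when weight $\weight$ arrives (this running integral may exceed $1$ due to overbooking), and have $\ALG\primed$ allocate to the current arrival precisely when $\int_0^{\weight}\allocTilde$ crosses a threshold $\ell+\randseed-1$, $\ell\in\naturals$, buying back the weight held from the previous crossing. Between crossings the held weight is genuinely constant, and when a crossing occurs the new held weight \emph{is} the currently arriving weight, so the algorithm is online-implementable. For each $w$ the probability a crossing lands in $[w,w+dw]$ is $\allocTilde(w)\,dw$, and a weight $\weight$ is bought back exactly when $\int_\weight^T\allocTilde\geq 1$, which matches $\ALG$'s fractional buyback mass under greedy buyback. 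In short, $\calloc^{(t)}$ records only the current state and produces a continuously drifting target; the object to threshold is the running integral $\int_0^{\cdot}\allocTilde$, which records the full history and yields the step-function path a lossless integral rounding requires.
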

\begin{proof}{\emph{Proof.}}
Fix an arbitrary online algorithm $\ALG$
for the truncated weight continuum instances $\continstances$.
Note that every truncated weight continuum instance $\instance_T$ looks exactly the same 
before its termination. Thus, 
let $\allocTilde(\weight)$
be the allocated density/fraction
when online node $\weight\in\reals_+$ arrives 
in algorithm $\ALG$.
For any $\randseed \in [0, 1]$,
define function $\weight_\randseed: \naturals\rightarrow \reals_+$
where for every $\ell\in\naturals$,
$\weight_\randseed(\ell)$
is the solution $\weight$ such 
that $\int_0^{\weight}\allocTilde(t)\,dt = \ell + \randseed-1$.\footnote{If there exists no $\weight$
such that $\int_0^{\weight}\allocTilde(t)\,dt = \ell + \randseed-1$,
we let $\weight_\randseed(\ell)=\infty$.}

Now we construct a randomized integral algorithm $\ALG\primed$ as follows.
The algorithm samples $\randseed$ from $[0, 1]$ uniformly at random. 
Then it allocates the offline node 
to each online node $\weight \in \{\weight_\randseed(\ell):\ell\in\naturals\}$.
By construction, $\ALG\primed$
is a randomized integral algorithm.
We just need to prove the probability that $\ALG\primed$
allocates the offline node to an online node in the interval $[\underline{\weight},\overline{\weight}]$
is exactly $\min(1,\int_{\underline{\weight}}^{\overline{\weight}} \allocTilde(\weight) \,d\weight)$
for each truncated weight continuum instance $\instance_\totaltime$ with $\totaltime \geq \overline{\weight}$. First of all if $\int_{\underline{\weight}}^{\overline{\weight}} \allocTilde(\weight) \,d\weight > 1$ it means that there exists a $\weight \in [\underline{\weight},\overline{\weight}]$ and $\ell \in \naturals$ such that $\int_0^{\weight}\allocTilde(t)\,dt = \ell + \randseed-1$. Which means $\ALG\primed$ always allocate to the online node $\weight \in [\underline{\weight},\overline{\weight}]$. Otherwise, $\ALG\primed$ allocates to an online node in the interval $[\underline{\weight},\overline{\weight}]$ if and only if there exists an $\ell \in \naturals$ with $\int_0^{\underline{\weight}} \allocTilde(\weight) \,d\weight < \ell + \eta - 1 < \int_0^{\overline{\weight}} \allocTilde(\weight) \,d\weight$. Since $\eta$ is uniformly sampled, this happens with probability $ \int_0^{\overline{\weight}} \allocTilde(\weight) \,d\weight - \int_0^{\underline{\weight}} \allocTilde(\weight) \,d\weight = \int_{\underline{\weight}}^{\overline{\weight}} \allocTilde(\weight) \,d\weight $. Furthermore, notice that in both cases the arrival $\weight$ is bought back if and only if $\int_{\weight}^{T} \allocTilde(\weight) \,d\weight \geq 1$.
Thus, the competitive ratio (within 
truncated weight continuum instances $\continstances$)
of the constructed randomized integral online algorithm $\ALG\primed$
is the same as algorithm $\ALG$.
\hfill\halmos
\end{proof}

Now we are ready to prove \Cref{prop:lower bound single resource}.

\begin{proof}{\emph{Proof of \Cref{prop:lower bound single resource}.}}
To lower-bound the competitive ratio $\optCRgen$, by \Cref{lem:lower bound reduction main body1}, it is enough to find a lower-bound on the competitive ratio of integral algorithms within the class of truncated weight continuum instances. Now consider  \Cref{example:lower bound single resource} which is in this class. 
By 
\Cref{lem:lower bound single resource integral},
the expected total profit of the optimum offline benchmark as well as the optimal online integral algorithm $\ALG^*$
is 
\begin{align*}
     \OPT(\text{\Cref{example:lower bound single resource}}) &=  \log(\totaltime_0) + 1 \\
     \ALG^*(\text{\Cref{example:lower bound single resource}}) &= \max_{\weight\geq 1} 1 + (k(\weight) - 1)\cdot 
\frac{\weight - (1 + \buybackcost)}{\weight}
\end{align*}
where $k(\weight)= \max\{\ell\in\naturals:\weight^{\ell-1} \leq \totaltime_0\}$. Let $\weight^* \triangleq \argmax_{\weight\geq 1}\max_{\weight\geq 1} 1 + (k(\weight) - 1)\cdot 
\frac{\weight - (1 + \buybackcost)}{\weight}$,
and $k^*\triangleq k(\weight^*)$.
Now note that 
\begin{align*}
    \frac{1}{\optCRgen} 
    &\leq 
    \frac{
    \ALG^*(\text{\Cref{example:lower bound single resource}})
    }{
    \OPT(\text{\Cref{example:lower bound single resource}})}
    \\
    &=
    \frac{
     1 + (k^* - 1)\cdot 
\frac{\weight^* - (1 + \buybackcost)}{\weight^*}
    }{
    \log(\totaltime_0) + 1
    }
    \\
    &\leq 
    \frac{1}{\log(\totaltime_0)}
    +
    \frac{
     (k^* - 1)\cdot 
\frac{\weight^* - (1 + \buybackcost)}{\weight^*}
    }{
    (k^*-1)\log(\weight^*) 
    }
    \\
    &\leq 
    \frac{1}{\log(\totaltime_0)}
    +
    \max_{a\geq 1}
    \frac{
     {a - (1 + \buybackcost)}
    }{
    a\log(a) 
    }
    \\
    &\leq 
    \frac{1}{\log(\totaltime_0)}
    -
    \frac{1}{\Lambertterm}
\end{align*}
Finally, letting $\totaltime_0$ go to infinite finishes the proof.
\hfill\halmos
\end{proof}

\revcolor{\subsection{Proof of Theorem~\ref{thm:lower bound deterministic all f} for the small buyback regime
\texorpdfstring{{($\boldsymbol{\buybackcost \leq \detthreshold}$)}}{}}}
\label{sec:apx-missing matching lower determinstic}

\revcolor{
\lowerboundmatchingdet*

We prove the theorem statement for the small buyback regime using \Cref{example:lower bound Deterministic matching}.
}

\begin{proof}{\emph{Proof for the small buyback regime ($\buybackcost \leq \detthreshold$).}}
By way of contradiction assume $\ALG$ has a competitive ratio better than $\approxratio$ where $1 + 2\buybackcost + 
2\sqrt{\buybackcost(1+\buybackcost)} < \approxratio < \frac{2}{1-\buybackcost}$. 
Then for all $i$ and \revcolor{$\epsilon > 0$} we know
$$\frac{\weight_i \revcolor{-\epsilon} +\weight_0}{\weight_{i-1}-\buybackcost(\weight_0+\weight_1+...+\weight_{i-2})}\leq \approxratio.$$
\revcolor{Because of continuity in $\epsilon$, we may take the limit as $\epsilon \rightarrow 0$ and the inequality also holds when $\epsilon = 0$.} Let $\ell$ be the first index such that left hand side is strictly smaller.
Let $\rho = \frac{\weight_\ell + \weight_0}{\approxratio(\weight_{\ell-1} - \buybackcost(\weight_0+\weight_1+...+\weight_{\ell-2}))}\revcolor{<1}$.
Define a new sequence $\{z_i\}_{i=0}^\infty$ such that $z_i = \rho \weight_i$ for $i < \ell$ and $z_i=\weight_i$ otherwise. This new sequence still satisfies
$$\frac{z_i + z_0} {z_{i-1} - \buybackcost(z_0+z_1+...+z_{i-2})}\leq \approxratio,$$
\revcolor{for all $i$, but for $i \leq \ell$ specifically, the inequality holds with equality:}
$$\frac{z_i + z_0} {z_{i-1}- \buybackcost(z_0+z_1+...+z_{i-2})} = \approxratio.$$
Continuing this process for all positive number $M$ one can find a sequence $\{z_i\}_{i=0}^M$ such that
$$\frac{z_i+z_0}{z_{i-1}-\buybackcost(z_0+z_1+...+z_{i-2})} = \approxratio,$$
or
$$z_i = (\approxratio +1)z_{i-1} -\approxratio (1+\buybackcost) z_{i-2}.$$
Solving this we get:
$$z_i = C_1 \left(\frac{\approxratio+1 + \sqrt{(\approxratio+1)^2 -4\approxratio(1+\buybackcost)}}{2}\right)^i + C_2 \left(\frac{\approxratio+1 - \sqrt{(\approxratio+1)^2 -4\approxratio(1+\buybackcost)}}{2}\right)^i.$$
Given that $\approxratio > 1 + 2\buybackcost + 
2\sqrt{\buybackcost(1+\buybackcost)}$ both roots are real and positive numbers. Looking at the first terms:
$$z_0 = C_1+C_2$$
$$(\approxratio - 1)z_0 = z_1 = (C_1+C_2)\frac{\approxratio+1}{2} + (C_1-C_2)\frac{\sqrt{(\approxratio+1)^2 -4\approxratio(1+\buybackcost)}}{2}$$
We can normalize the sequence by setting $z_0 = 1$. Then $$C_1 = \frac{\approxratio-3}{2\sqrt{(\approxratio+1)^2-4\approxratio (1+\buybackcost)}} + \frac{1}{2}$$ 
Since $\buybackcost < \frac{1}{3}$ \revcolor{by our assumption $\approxratio < \frac{2}{1-f} \leq 3$}, so
$$C_1 \geq 0 \iff 1 \geq \frac{3 - \approxratio}{\sqrt{(\approxratio+1)^2-4\approxratio (1+\buybackcost)}} \iff \approxratio \geq \frac{2}{1-\buybackcost}$$
This implies that $C_1 < 0$ in our setting. But we know
$$z_i \geq 0 \iff C_1 + C_2\left(\frac{\approxratio+1 - \sqrt{(\approxratio+1)^2 -4\approxratio(1+\buybackcost)}}{\approxratio+1 + \sqrt{(\approxratio+1)^2 -4\approxratio(1+\buybackcost)}}\right)^i \geq 0$$
\revcolor{Clearly, this relationship cannot hold as} $i \rightarrow \infty$.
\hfill
\halmos
\end{proof}

\revcolor{\subsection{Proof of Theorem~\ref{thm:lower bound deterministic all f} for the large buyback regime
\texorpdfstring{{($\boldsymbol{\buybackcost \geq \detthreshold}$)}}{}}}
\label{sec:apx-missing single lower determinstic}



\revcolor{
\lowerboundmatchingdet*

We prove the theorem statement for the small buyback regime using \Cref{example:lower bound Deterministic matching}. Both the example and its analysis are adopted from \citet{BHK-09}, but we include them here for completeness.
}

\begin{proof}{\emph{Proof for the large buyback regime ($\buybackcost \geq \detthreshold$).}}
Suppose an algorithm has a competitive ratio $\approxratio$ which is strictly smaller than $1 + 2\buybackcost + 2 \sqrt{\buybackcost(1+\buybackcost)}$. \revcolor{For all $\epsilon > 0$}, we have:
$$\frac{\weight_i -\revcolor{\epsilon}}{\weight_{i-1}-\buybackcost(\weight_0+\weight_1+...+\weight_{i-2})}\leq \approxratio.$$
\revcolor{Because of continuity in $\epsilon$, we may take the limit as $\epsilon \rightarrow 0$ and the inequality also holds when $\epsilon = 0$.} Let $\ell$ be the first index such that left hand side is strictly smaller.
Let $\rho = \frac{\weight_j}{\approxratio(\weight_{\ell-1} - \buybackcost(\weight_0+\weight_1+...+\weight_{\ell-2}))}\revcolor{<1}$.
Define a new sequence $\{z_i\}_{i=0}^\infty$ such that $z_i = \rho \weight_i$ for $i < \ell$ and $z_i=\weight_i$ otherwise. This new sequence still satisfies
$$\frac{z_i}{z_{i-1}-\buybackcost(z_0+z_1+...+z_{i-2})}\leq \approxratio,$$
\revcolor{for all $i$, but for $i \leq \ell$ specifically, the inequality holds with equality:}
$$\frac{z_i}{z_{i-1}-\buybackcost(z_0+z_1+...+z_{i-2})} = \approxratio.$$
Therefore, for all positive number $M$ one can find a sequence $\{z_i\}_{i=0}^M$ such that
$$\frac{z_n}{z_{n-1}-\buybackcost(z_0+z_1+...+z_{n-2})} = \approxratio.$$
This means:
$$z_n = (\approxratio +1)z_{n-1} -\approxratio (1+\buybackcost) z_{n-2},$$
and solving this recursion we get:
$$z_n = C_1 \left(\frac{\approxratio+1 + \sqrt{(\approxratio+1)^2 -4\approxratio(1+\buybackcost)}}{2}\right)^n + C_2 \left(\frac{\approxratio+1 - \sqrt{(\approxratio+1)^2 -4\approxratio(1+\buybackcost)}}{2}\right)^n.$$
Since $1 < \approxratio < 1 + 2\buybackcost + 
2\sqrt{\buybackcost(1+\buybackcost)}$ both roots are non-real complex numbers. Using the fact that $z_i \in \reals$ we can get $\Bar{C_1} = C_2$ the equation can be rewritten as:
$$z_n = ar^ne^{in\theta} + \Bar{a}r^ne^{-in\theta} = 2r^n\Re (ae^{in\theta}) = 2|a|r^n \cos (\phi + n\theta)$$
where $0 < \theta < \pi$. Let $M > \frac{2 \pi}{\theta}$ then there will be at least one positive number $n<M$ with $(2m+\frac{1}{2})\pi<\phi + n\theta < (2m+\frac{3}{2})\pi$ 
which means $z_n < 0$. This contradicts our assumption about existing of an increasing sequence $\{z_i\}_{i=0}^\infty$.
\hfill
\halmos
\end{proof}

\section{Implementation of \texorpdfstring{\Cref{alg:primal dual matching}}{Algorithm~4}}
\label{apx:implementation}

In this section, we present two approaches to efficiently implement it with polynomial running time. (Recall that \Cref{alg:primal dual matching} is described as a continuous procedure.)

\paragraph{Water-filling method.} 
The first method utilizes the simple observation that for each online node $i$, \Cref{alg:primal dual matching} is essentially increasing ``\emph{water level}'' $\offlinedualj$ with identical $d\beta$ for every offline node $j\in \argmax_{j} \weightij - \offlinedualj$. Therefore, the continuous allocation procedure for online node $i$ can be divided into at most $m = |\offlinenodes|$ critical discrete time stamps where either (i) the set $\argmax_{j} \weightij - \offlinedualj$ increases, or (ii) the termination condition, i.e., the capacity of online node $i$ exhausts or $\offlinedualj \geq \weightij$ for all $j$, is satisfied. Between each pair of adjacent time stamps, we can binary search the increment $\Delta \offlinedual$ of water level $\offlinedualj$ for every offline node $j\in \argmax_{j} \weightij - \offlinedualj$ and then compute the increment of allocation $\Delta \alloc_j(\weightij)$ based on $\Delta \offlinedual$ accordingly. It is straightforward to verify that both $\Delta\offlinedual$ and $\Delta \alloc_j(\weightij)$ can be computed in polynomial time.

\newcommand{\penIntegrate}{\hat{\pen}}

\paragraph{Convex-programming method.} In this method, we determine the fractional allocation $\{\edgeallocij\}_{j\in\offlinenodes}$ for each online node $i$ by solving the following convex program:
\begin{align}
\label{eq:convex}
\tag{$\mathcal{P}_{\texttt{Convex}}(i)$}
\arraycolsep=1.4pt\def\arraystretch{1}
\begin{array}{llllllll}
\max\limits_{\boldsymbol{\edgealloc,\eta}\geq \mathbf{0}}  &\displaystyle
\displaystyle\sum_{j\in \offlinenodes}
\edgeallocij \weightij 
-
(1 + \buybackcost)\cdot 
\displaystyle\int_0^\infty \left(\alloc_j(\weight) - \eta_j(\weight)\right)\weight\,d\weight
\\
&\qquad\qquad\ 
-
\displaystyle\int_0^\infty \penIntegrate\left(
\int_\weight^\infty \eta_j(t)\,dt + 
\edgeallocij\cdot \indicator{\weightij \geq \weight}
\right)
\,d\weight
~~&\text{s.t.} \\[1.4em]
 &\displaystyle\sum_{j\in \offlinenodes}{\edgeallocij}\leq1 &
 \\[1.4em]
 &\displaystyle\int_{0}^\infty \eta_j(\weight)\,d\weight + \edgeallocij \leq 1 &j\in \offlinenodes~, \\ [1.4em]
 &\eta_j(\weight) \leq \alloc_j(\weight) &  j\in\offlinenodes~,~\weight\in[0,\infty)~.\\
\end{array}
\end{align}
where $\penIntegrate(x) \triangleq \int_0^x \pen(t)\,dt$,
variable $\edgeallocij$ specifies the fractional allocation between online node $i$ and offline node $j$, and $\alloc_j(\weight) - \eta_j(\weight)$ specifies the fractional buyback of offline node $j$ from weight $\weight$. Since allocation probability function $\alloc_j(\cdot)$ has at most $i$ strictly positive entries upon the arrival of online node $i$, it suffices to consider variable $\eta_j(\cdot)$ for those entries. Therefore, program~\ref{eq:convex} is a convex program with a polynomial number of variables and constraints. Consequently, it can be solved in polynomial time using classic approaches such as the Frank-Wolfe algorithm, projected gradient ascent, or mirror descent.

Finally, it suffices to argue that the optimal solution $\{\edgeallocij^*\}_{j\in\offlinenodes}$ coincides with the continuous allocation procedure described in \Cref{alg:primal dual matching}. This follows a similar argument as the one in \citet{FN-25} where the authors introduce the convex-programming-based online algorithm for the online edge-weighted bipartite matching with free disposal under batch arrival. At a high level, by analyzing the KKT condition of program~\ref{eq:convex}, the following two claims can be shown: (i) $\{\eta_j^*(\weight)\}_{j\in\offlinenodes,\weight\in[0, \infty)}$ follows greedy buyback; and (ii) if $\edgeallocij^* > 0$ then $\weightij - \offlinedualj^* \geq  \weight_{ij'} - \offlinedual_{j'}^*$ for every $j'\in \offlinenodes$ and $\weightij - \offlinedualj^* \geq 0$ where $\offlinedualj^*$ and $\offlinedual_{j'}^*$ are updated based on $\{\edgeallocij^*\}_{j\in\offlinenodes}$. The proof of the second claim relies on the first claim, and the second claim itself implies $\{\edgeallocij^*\}_{j\in\offlinenodes}$ coincides with the continuous allocation procedure described in \Cref{alg:primal dual matching} as desired. For the detailed proofs of these two claims, check Lemmas~2 and 3 in \citet{FN-25}.

\section{Randomized Rounding for Large Inventory}
\label{apx:matching rounding}

In this subsection, we focus on a variant model where each offline node $j$ has a large initial capacity (inventory) $\inventory_j\in \reals_+$. We present a near-optimal online rounding, i.e., we show any fractional online algorithm can be converted to a randomized integral online algorithm whose competitive ratio suffers an additional multiplicative factor $(1-(1+\buybackcost)\cdot O(\sqrt{\log (\mininventory)/\mininventory}))^{-1}$ where $\mininventory\triangleq \min_{j\in\offlinenodes}\inventory_j$ is the smallest initial capacity.
\begin{proposition}
\label{prop:rounding matching}
In the matching environment, any online fractional algorithm $\ALG$ with the competitive ratio $\approxratio$ can be converted into a randomized integral online algorithm $\ALG\primed$ with competitive ratio $\frac{\approxratio}{1-(1+\buybackcost)\cdot O(\sqrt{\log (\mininventory)/\mininventory})}$.
\end{proposition}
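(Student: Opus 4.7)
The plan is to convert $\ALG$ into a randomized integral $\ALG'$ via independent randomized rounding coupled with greedy overflow buybacks. Concretely, $\ALG'$ simulates $\ALG$ in parallel. At the arrival of each online node $i$, let $(a_{ij})_{j\in\offlinenodes}$ denote the (within-step) fractional allocation $\ALG$ produces, so $\sum_j a_{ij}\leq 1$. The integral $\ALG'$ samples one offline node $j_i\in\offlinenodes\cup\{\perp\}$ with $\Pr[j_i=j]=a_{ij}$, using fresh independent randomness across $i$, and matches $i$ integrally to $j_i$ whenever $j_i\neq\perp$. When $\ALG$ subsequently buybacks mass from a past allocation $(i',j)$, $\ALG'$ performs a coupled Bernoulli trial (with success rate equal to $\ALG$'s relative buyback fraction of $(i',j)$) to buyback its own integral allocation $(i',j_{i'})$ conditional on $j_{i'}=j$. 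In addition, if an integral allocation would push the total mass at some $j$ past capacity $s_j$, $\ALG'$ greedily buybacks the smallest-weight allocation currently held at $j$.

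Next, I would establish concentration via Chernoff. Fix any offline $j$ and let $X_j$ be the integral mass at $j$ just before overflow-triggered buybacks; by construction $X_j$ is a sum of independent Bernoulli indicators, one per online node, whose aggregate mean equals the \emph{kept} fractional mass at $j$ in $\ALG$ and is therefore at most $s_j$. Setting $\epsilon=c\sqrt{\log s_{\min}/s_{\min}}$ for a suitably large constant $c$, the multiplicative Chernoff bound gives $\Pr[X_j > s_j(1+\epsilon)] \leq \exp(-\Omega(\epsilon^2 s_j))\leq s_{\min}^{-c'}$ for any desired constant $c'$. A union bound over the $n$ offline nodes then guarantees that the ``good event'' $\{X_j\leq s_j(1+\epsilon)\ \forall j\}$ holds with probability at least $1-o(1)$.

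Finally, for the profit comparison, the coupling in Paragraph 1 ensures that, conditional on no overflow ever occurring, $\mathbb{E}[\textrm{profit}(\ALG')] = \textrm{profit}(\ALG)$ by linearity. On the good event, each offline $j$ overflows by at most $\epsilon s_j$ in mass, and $\ALG'$ greedily buybacks the corresponding $\epsilon s_j$ smallest-weight allocations at $j$, inflating the buyback penalty by at most $(1+\buybackcost)$ times their total weight. Because the greedy rule targets the bottom of the integral allocation distribution, these weights stochastically couple (in expectation) to the bottom of $\ALG$'s kept distribution at $j$, whose total contribution is an $O(\epsilon)$-fraction of the kept weight at $j$ in $\ALG$. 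Summing across $j$ and absorbing the negligible bad-event tail (where we bound per-$j$ profit crudely and rely on $c'$ being arbitrarily large to swamp any polynomial loss) gives $\mathbb{E}[\textrm{profit}(\ALG')]\geq (1-(1+\buybackcost)\cdot O(\epsilon))\cdot \textrm{profit}(\ALG)$, which yields the claimed competitive ratio. The main obstacle is precisely this last step: bounding the expected buyback cost of overflow when weights can be highly heterogeneous. The leverage comes from the greedy rule — overflow buybacks always fall on the smallest held weights, not on arbitrary allocations — so the heavy $(1+\buybackcost)$ penalty is paid only on a small-weight $\epsilon$-tail, rather than on allocations proportional to the average or maximum weight at $j$.
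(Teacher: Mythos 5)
Your plan shares the paper's outline (independent randomized rounding, Chernoff concentration, greedy overflow buyback), but differs in two ways, one of which is a genuine gap. First, the paper damps the per-step sampling probability to $(1-\kappa)\cdot\edgealloc_{ij}$ with $\kappa=O(\sqrt{\log\mininventory/\mininventory})$; you sample at the undamped rate $a_{ij}$. Second, the paper relies only on greedy buyback, while you additionally replay $\ALG$'s buybacks via coupled Bernoulli trials; this extra step is harmless but unnecessary, since greedy buyback by smallest weight already targets the allocations that $\ALG$ discards.

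The absence of damping is the substantive issue. With the paper's damping, the expected integral load at each offline node sits strictly below $\inventory_j$, so Chernoff drives the overflow probability down to $O(\kappa)$, and every allocation that $\ALG$ keeps is bought back by $\ALG\primed$ with probability at most $\kappa\cdot\edgealloc_{ij}$. That per-allocation bound is what the paper threads through its profit accounting. Without damping, the expected load can equal $\inventory_j$ exactly, overflow happens with constant probability, and the expected overflow mass is $\Theta(\sqrt{\inventory_j})$; your Chernoff bound controls only the tail beyond $(1+\epsilon)\inventory_j$, not the typical overflow. (A small side issue: your claim that conditioning on ``no overflow'' preserves the equality $\mathbb{E}[\textrm{profit}(\ALG\primed)]=\textrm{profit}(\ALG)$ is not quite right, since conditioning shifts the marginal distributions of the indicator variables.)

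Your greedy-targets-the-tail argument partially saves this: the bottom $\epsilon$-fraction of mass carries at most an $\epsilon$-fraction of the kept weight, so the overflow buyback cost is $(1+\buybackcost)\,O(\epsilon)\,K_j$ per node, where $K_j$ is $\ALG$'s kept weight at $j$. But the final step — ``Summing across $j$ \dots gives $\mathbb{E}[\textrm{profit}(\ALG\primed)]\geq(1-(1+\buybackcost)\cdot O(\epsilon))\cdot\textrm{profit}(\ALG)$'' — does not follow. The loss you bounded is proportional to $K_j$, whereas $\ALG$'s per-node profit $P_j=K_j-\buybackcost\,B_j$ can be arbitrarily small relative to $K_j$ when $\ALG$ itself buys back heavily (e.g.\ $K_j$ large and $\buybackcost\,B_j$ just barely below $K_j$). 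Summing over $j$ does not repair this, since $\sum_j K_j$ can dominate $\textrm{profit}(\ALG)=\sum_j P_j$ by an arbitrary factor. To close the gap you need either the paper's damping, which converts the overflow loss into a per-allocation event charged at probability $O(\kappa)$ against that allocation's own reward, or an additional argument that $\sum_j K_j$ is within a constant factor of $\textrm{profit}(\ALG)$ — a property that holds for the paper's primal-dual algorithms but not for arbitrary fractional $\ALG$.
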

\begin{proof}{\emph{Proof.}}
    Let $\kappa \triangleq O\left(\sqrt{\log (\mininventory)/\mininventory}\right)$.
    Without loss of generality, we assume $\ALG$ always greedily buys back, and its profit from each offline node is non-negative.
    We construct a randomized integral online algorithm $\ALG\primed$ following the fractional allocation decision in $\ALG$ as follows: 
    For each online node $i\in\onlinenodes$, let $\{\edgealloc_{ij}\}_{j\in\offlinenodes}$ be the fractional allocation between online node $i$ and each offline node~$j$ in algorithm $\ALG$. 
    The randomized integral algorithm $\ALG\primed$ samples an offline node $j^* = j$ with probability $(1-\kappa)\cdot \edgealloc_{ij}$
    for each offline node $j$, and $j^* = \emptyset$ otherwise.
    If $j^*$ is not $\emptyset$, algorithm $\ALG\primed$ matches online node $i$ with offline node $j^*$, and greedily buys back (when it is necessary). Importantly, the sampled offline nodes for each online node in algorithm $\ALG\primed$ are independent. 

\newcommand{\edgeallocTilde}{\tilde\edgealloc}
\newcommand{\edgeallocTildeij}{\edgeallocTilde_{ij}}

    Now we analyze the competitive ratio of algorithm $\ALG\primed$. It suffices to show for each offline node $j$, the profit in algorithm $\ALG\primed$ is an $(1-(1+\buybackcost)\kappa)$-approximation of the profit in algorithm $\ALG$. Fix an arbitrary offline node $j$. Recall that $\edgealloc_{ij}$ is the fractional allocation between online node $i$ and each offline node~$j$ in algorithm $\ALG$. Let $\tilde\edgealloc_{ij}$ be the fractional buyback between online node $i$ and each offline node~$j$ in algorithm $\ALG$. As a sanity check, the profit from offline node $j$ in algorithm $\ALG$ is 
    \begin{align*}
       \sum_{i\in\onlinenodes} \weightij \edgeallocij - (1+\buybackcost)\cdot \sum_{i\in\onlinenodes} \weightij\edgeallocTildeij
    \end{align*}
    Let $\hat\onlinenodes_j\triangleq \{i\in\onlinenodes: \edgeallocij > \edgeallocTildeij\}$ and $\tilde\onlinenodes_j \triangleq \onlinenodes\backslash \hat\onlinenodes_j$.
    By definition, $\sum_{j\in\hat\onlinenodes_j}\edgeallocij - \edgeallocTildeij \leq \inventory_j$. 
    Moreover, since algorithm $\ALG$ greedily buys back, we have $\sum_{j\in\hat\onlinenodes_j}\edgeallocij \leq \inventory_j + 1$
    
    Let $Z_{ij}\primed$ be the event that algorithm $\ALG\primed$ matches online node $i$ with offline node $j$. By definition, $\prob{Z_{ij}} = (1-\kappa)\cdot \edgeallocij$.
    Similarly, let $\tilde Z_{ij}\primed$ be the event that algorithm $\ALG\primed$ buys back the offline node $j$ from the online node $i$. For each online node $i\in \tilde\onlinenodes_j$, it guarantees that 
    \begin{align*}
        \prob{\tilde Z_{ij}} \leq \prob{Z_{ij}} = 
        \left(1-\kappa\right)
        \cdot \edgeallocij 
        = 
        \left(1-\kappa\right)
        \cdot \edgeallocTildeij
        \end{align*}
    where the last equality holds since $i\in\tilde\onlinenodes_j$. For each online node $i\in\hat\onlinenodes_j$, 
    it guarantees that 
    \begin{align*}
        \prob{\tilde Z_{ij}} &\overset{(a)}{\leq}
\prob{Z_{ij}}\cdot\prob{\sum_{i'\in\tilde\onlinenodes_j:i'\not=i}\indicator{Z_{ij}} \geq \inventory_j}
\\
    &\overset{(b)}{\leq}
    \left(1-\kappa\right)\cdot \edgeallocij
\cdot 
\kappa
\\
&=
\kappa\cdot \edgeallocij
    \end{align*}
    where inequality~(a) holds since both algorithm $\ALG$ and $\ALG\primed$ greedily buyback; and inequality~(b) holds due to the multiplicative form of the Chernoff bound \citep{che-52}. Putting all pieces together, the profit from offline node $j$ in algorithm $\ALG$ is 
    \begin{align*}
        &\sum_{i\in\onlinenodes}\weightij\prob{Z_{ij}\primed}
        -
        \buybackcost\cdot\sum_{i\in\onlinenodes}\weightij\prob{\tilde Z_{ij}\primed}
        \\
        \geq&
         \sum_{i\in\tilde\onlinenodes_j}
         \weightij(1-\kappa)\edgeallocij
         -
         (1+\buybackcost)
         \sum_{i\in\tilde\onlinenodes_j}
         \weightij(1-\kappa)\edgeallocTildeij
         +
         \sum_{i\in\hat\onlinenodes_j}
         \weightij(1-\kappa)\edgeallocij
         -
         (1+\buybackcost)
         \sum_{i\in\hat\onlinenodes_j}
         \weightij\kappa\edgeallocij
         \\
         \geq &
         (1-(1+\buybackcost)\kappa)\cdot \left(\sum_{i\in\onlinenodes} \weightij \edgeallocij - (1+\buybackcost)\cdot \sum_{i\in\onlinenodes} \weightij\edgeallocTildeij
         \right)
    \end{align*}
    which is an $(1-(1+\buybackcost)\kappa)$-approximation to the profit in algorithm $\ALG$ as desired.
    \hfill\halmos
\end{proof}

\section{Numerical Experiments}
\label{apx:numerics}

In this section we run numerical simulations on synthetic instances to measure the empirical performance of our proposed algorithms and compare them with alternative methods. We utilize the concept of a performance gap, represented as $\frac{\ALG (\textit{I})}{\OPT (\textit{I})}$ to measure the effectiveness of an algorithm on each problem instance $\textit{I}$. This metric serves as a measure of how well the algorithm performs in comparison to the optimal solution for a given problem instance.

\smallskip
\noindent\textit{Experimental setup}. We consider an online booking setting in which customers who are willing to pay for the service arrive over time (online nodes) and should be matched to available resources (offline nodes). As for the offline nodes, there are $5$ resources with independent qualities randomly drawn from $U[0,1]$. On the online side, there are $50$ customers arriving over time with different willingness to pay. The willingness to pay $v_i$ of the customers is increasing over time and its increment is $2.5\%$ of the previous online node. In particular, $v_i = 1.025^{i-1}$.
We define the value on an edge to be the product of resource's quality and customer's willingness to pay. In order to add edge-wise uncertainty to the problem, we assume a customer $i$ accepts an offline node with quality $u_j$ with probability $\min(\frac{50}{i}\times u_j,1)$. Note that in such an instance, more valuable and less flexible customers arrive later in the sequence --- hence it is critical for the online algorithm to protect the inventory to avoid paying large amounts of cancellation/buyback cost. \\
\textit{Policies}. In the experiments, we consider 4 different algorithms.
\begin{enumerate}
    \item \textbf{Primal-Dual Fractional}:
    this policy is \Cref{alg:primal dual matching} defined in \Cref{sec:matching}. It achieves the optimal competitive ratio among all online policies.
    \item \textbf{Primal-Dual Integral}:
    this policy is \Cref{alg:opt deterministic matching} defined in \Cref{sec:deterministic}. It achieves the optimal competitive ratio among all deterministic integral policies.
    \item \textbf{Marginal Greedy algorithm}: this policy that assigns the online node $i$ to the offline node $j$ with the largest positive marginal profit. $$\argmax_j \weightij - (1+f)\weight_j$$
    \item \textbf{Free-disposal algorithm}: The fraction algorithm that does not take into account the buyback cost. This is, in fact, \Cref{alg:primal dual matching} for $f=0$, which was introduced in the earlier work of \cite{DHKMY-16}.
\end{enumerate}
\textit{Result}.
We generate $20$ sample instances for each $\buybackcost \in \{0,0.04, \dots, 0.36\}$.
For each policy, we compute its performance gap as the ratio between its profit and the profit in the optimal offline benchmark. We also report the confidence interval for these performance gaps for each $\buybackcost$. The result of this simulation can be seen in \Cref{fig:numerics}.\\
It is clear that both algorithms \ref{alg:opt deterministic matching} and \ref{alg:primal dual matching} outperform the other benchmarks in all choices of $f$ in our experiments. In particular \Cref{alg:primal dual matching} has a higher mean and lower variance for all $\buybackcost > 0$. As we see in \Cref{fig:numerics}, the Primal-Dual Fractional algorithm outperforms the Primal-Dual Integral algorithm for $f\in[0,0.28]$, and they have comparable performance gaps for other choices of $f$.

\begin{figure}
    \centering\includegraphics[width=0.8\textwidth]{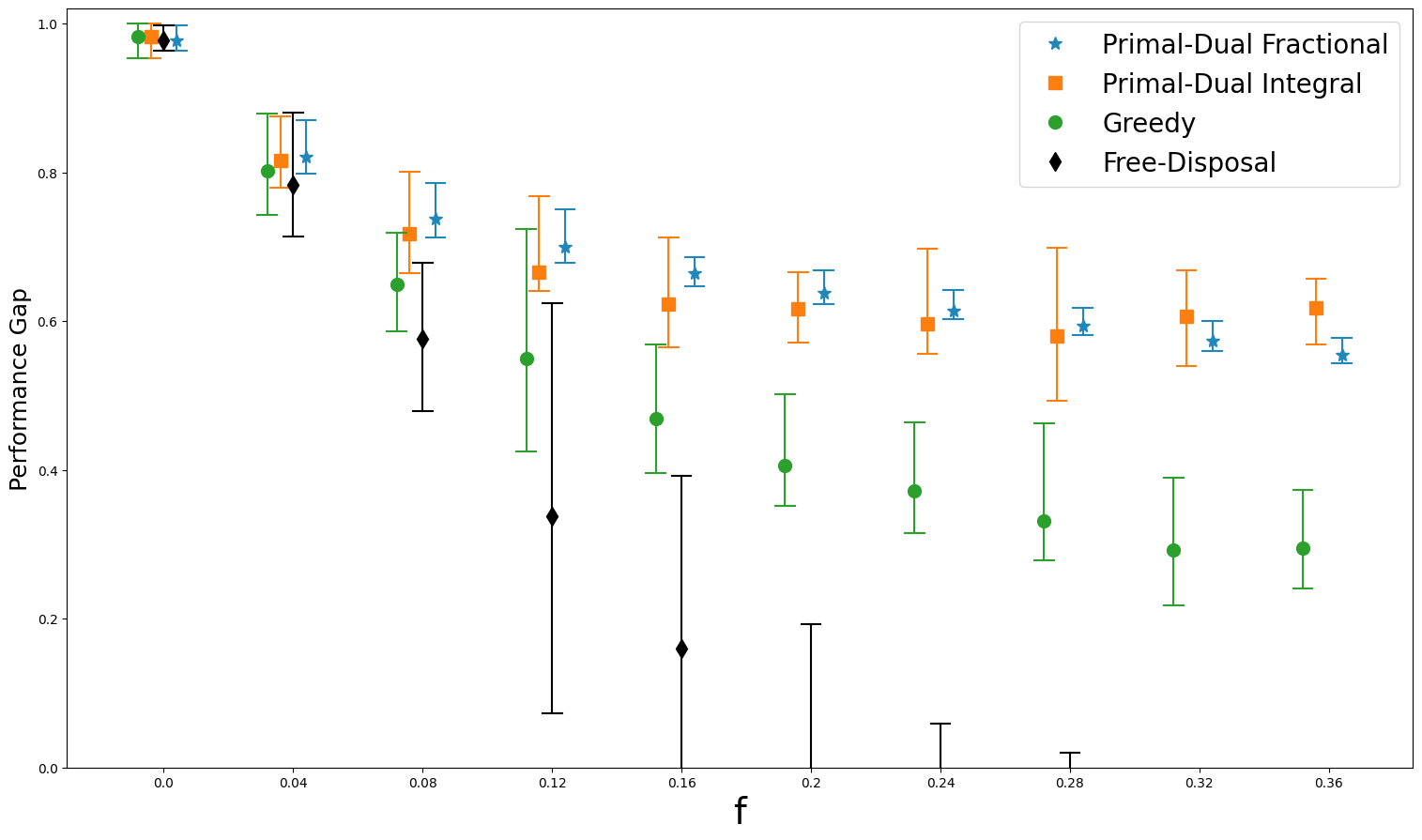}
    \caption{Numerical experiment results:
    The vertical lines are the $95\%$ confidence interval and the shapes are the mean for each algorithm and each $\buybackcost$}
    \label{fig:numerics}
\end{figure}

\revcolor{\section{Extensions}
\label{apx:extension}

In this section, we generalize our primal-dual analysis framework to multiple extension models with cancellation costs. 
While all subsections focus on the generalization of the fractional algorithm (\Cref{alg:primal dual matching}), the deterministic integral algorithm (\Cref{alg:opt deterministic matching}) can be extended using a similar analysis.

\subsection{Configuration allocation problem}
\label{sec:configuration}
In this section, we extend our model to allow richer forms of assignment, where an online node may be simultaneously matched to multiple offline nodes, potentially using different portions of each. This generalization captures a broader class of applications where resource sharing or combinatorial allocations are essential--such as in submodular welfare maximization (\Cref{apx:submodular welfare maximization}) and assortment optimization (\Cref{apx:assortment}). We present a unified framework that accommodates these more flexible configuration structures and demonstrate that the performance guarantee achieved in the original setting continues to hold under this generalized model.  
Notably, the special case of the configuration allocation model with zero buyback factor $\buybackcost = 0$ has been studied in \citet{DHKMY-16,FN-25}.
The primal-dual presentation of the configuration allocation model can be formulated as follows:
\begin{align}
\tag{$\mathcal{P}_{\textsc{Config}}$}
\label{eq:LP-max-weight-config}
\arraycolsep=1.4pt\def\arraystretch{1}
\begin{array}{llllllll}
\max  &\displaystyle\sum_{i\in\onlinenodes}
\displaystyle\sum_{C \subseteq \mathcal{C}_i} \displaystyle\sum_{j\in C}
\edgeallocijConfig \weightijConfig &~~\text{s.t.}&
& \quad\quad \text{min} &\displaystyle\sum_{i\in\onlinenodes }{\onlineduali}+\displaystyle\sum_{j\in \offlinenodes}{\offlinedualj}&~~\text{s.t.} \\[1.4em]
 & \edgeallocijConfig \leq \configallocij &  i\in\onlinenodes,j\in\offlinenodes,C \subseteq \mathcal{C}_i,& 
& &\theta_{ijC}+n_{jC}\offlinedualj\geq \weightijConfig&  i\in\onlinenodes,j\in \offlinenodes,C \subseteq \mathcal{C}_i~,\\[1.4em]
 &\displaystyle\sum_{C \subseteq \mathcal{C}_i}{\configallocij}\leq 1 &i\in \onlinenodes~, &
& &\onlineduali \geq \displaystyle\sum_{j\in \offlinenodes} \theta_{ijC} &i\in \onlinenodes,C \subseteq \mathcal{C}_i~, \\
 &\displaystyle\sum_{i\in\onlinenodes}
\displaystyle\sum_{C \subseteq \mathcal{C}_i} 
n_{j C} \cdot \edgeallocijConfig \leq 1 & j\in\offlinenodes~.
 \qquad \qquad &
& &\offlinedualj,\onlineduali,\theta_{ijC} \geq 0  &i\in\onlinenodes,j\in \offlinenodes,C \subseteq \mathcal{C}_i.
\end{array}
\end{align}
In the primal program above, the decision variable $z_{iC}$ represents the fractional allocation of online node $i$ under configuration $C$. The decision variable $x_{ijC}$ indicates the utilization level of offline node $j$ under configuration $C$ contributed by online node $i$. The parameter $n_{jC}$, a primitive model parameter, specifies the amount of capacity consumed by offline node $j$ per unit of utilization under configuration $C$. Therefore, the term $n_{jC} \cdot x_{ijC}$ captures the capacity consumption of offline node $j$ from online node $i$ under configuration~$C$. (In this setup, we assume that the platform can implement any strict sub-utilization, i.e., $x_{ijC} < z_{iC}$, without incurring additional costs.)

Our proposed dual-based algorithm works as follows. Upon the arrival of each online node $i$,
if there exists, a configuration with positive marginal, the algorithm picks the configuration $C$ maximizing the marginal as defined:
\begin{align}\label{eq:configALG}
    \sum_{j\in C} \left(w_{ijC} - n_{jC}\beta_j\right)^+.
\end{align}
Suppose an online node $i$ arrives and we choose to allocate a configuration $C$ of offline resources to it. To make this allocation, we may need to buy back some of the capacity of each offline node $j \in C$, which was previously assigned to another online node $i'_j$, via configuration $C'_j$. The resulting change in the primal objective can be expressed as:
\begin{align*}
    \Delta(\text{Primal}) = \sum_{\{j| j \in C, w_{ijC} > n_{jC}\beta_j \}} \left(w_{ijC} - (1+f)w_{i'_j j C'_j} \right) dx,
\end{align*}
Notice that for the $w_{ijC} \leq n_{jC}\beta_j$ means not using $n_{jC}$ capacity of online node $i$. In this case the algorithm may choose not to reallocate any portion of offline resource $j$ to the new arrival, in which case no buyback is necessary. The corresponding algorithm is presented in \Cref{alg:primal dual configuration}.
\begin{algorithm}
\revcolor{
\caption{Primal-dual fractional online algorithm for 
configuration allocation with buyback}
\label{alg:primal dual configuration}
    \SetKwInOut{Input}{input}
    \SetKwInOut{Output}{output}
 \Input{Penalty function $\pen$
 }
 
 \vspace{2mm}
 
 Initialize $\offlinedual_j \gets 0$, and $\forall \weight \in \mathbb{R}_+:\allocj(\weight) \gets \diracdeltafunction_0(\weight), \callocj(\weight) \gets \indicator{\weight = 0}$ for every $j\in\offlinenodes$.

{\color{royalazure}
\tcc{$\diracdeltafunction_0(\cdot)$
 is the Dirac delta function centered at 0.}}
  
 Initialize $\theta_{ijC} \gets 0$, for every $i \in \onlinenodes$ and $j\in C \in \mathcal{C}_i$.
 
 
 \vspace{1mm}
 
 
 \vspace{1mm}
 

 
 \For{each online node $i\in\onlinenodes$}
 {
 
 \vspace{1mm}
 
 \While{capacity of online node $i$ is not exhausted and
 there exists $C\in \mathcal{C}_i$ s.t.\ $\sum_{j\in C} \left(w_{ijC} - n_{jC}\beta_j\right)^+ > 0$}
 {
 \vspace{2mm}
 
 Let $C^* \gets \argmax_{C\in\mathcal{C}_i} \sum_{j\in C} \left(w_{ijC} - n_{jC}\beta_j\right)^+$

  \For{each $j \in \{j \in C^* \mid \weight_{ijC^*} \geq n_{jC^*} \offlinedual_{j}\}$}
 {
 
\vspace{1mm}
 Buy back fraction $dx$ of offline node $j$ from 
 the smallest allocated weight $\buybackweight$,
 i.e., $\alloc_{j}(\weight) \gets
 \alloc_{j}(\weight) - dx\cdot\delta_{\buybackweight}(\weight)$
 
 \vspace{1mm}
 Allocate fraction $dx$ of offline node $j$
 to online node $i$,
 i.e., $\alloc_{j}(\weight) \gets \alloc_{j}(\weight) +  dx\cdot\delta_{\weight_{ijC^*}}(\weight)$
 
 \vspace{1mm}
 Update the allocation quantile function 
$\displaystyle\calloc_{j}(\weight) \gets \int_\weight^{\infty}
\alloc_{j}(\weight')\,d\weight'$

 \vspace{1mm}
Update the dual assignment $\displaystyle\offlinedual_{j}\gets 
\int_{0}^{\infty}
\pen_{j}(\calloc_{j}(\weight))\,d\weight$}
 }
 
 }}
\end{algorithm}

In the remainder of this section, we prove a generalized version of \cref{thm:competitive ratio exponential penalty function}.

\begin{theorem}
\label{thm:CRConfig}
For every $\expfuncparamone \geq e$
and 
$\expfuncparamtwo \geq \frac{1 + \buybackcost}{\expfuncparamone - 1}$,
\Cref{alg:primal dual configuration} with generalized exponential penalty function 
$\pen(\calloc) = \expfuncparamtwo(\expfuncparamone^\calloc - 1)$
has competitive ratio at most $\approxratioexp(\buybackcost)$,
where
\begin{align*}
    \approxratioexp(\buybackcost) \triangleq
    \max_{\weight \geq 1+\buybackcost}
    \frac
    {
    (\expfuncparamtwo + 1)\log(\expfuncparamone)\weight 
    -
    \expfuncparamtwo\expfuncparamone\log(\expfuncparamone)
    }
    {
    \weight - (1 + \buybackcost) 
    }
\end{align*}
\end{theorem}

\begin{proof}{\emph{Proof.}}
We construct a dual assignment based on the allocation decision
made in \Cref{alg:primal dual configuration} (denoted by $\ALG$ throughout the proof) as follows.
First, set $\onlineduali \gets 0$, $\theta_{ijC} \gets 0$, and $\offlinedualj \gets 0$
for all $i\in\onlinenodes,j\in\offlinenodes, C \subseteq \mathcal{C}_i$.
Now consider every allocation (and buyback) decision in 
$\ALG$.
Whenever $\ALG$
buys back $dx$ fraction of offline node $j$ from online node $i'$
and then 
re-allocates this fraction to online node $i$,
update the dual variables as follows:
\begin{align*}
    \offlinedualj \gets 
    \offlinedualj + 
    \left(\displaystyle\int_{w_{i'_jjC'_j}}^{\weightijConfig}
    \penderivative(\callocj(\weight))\,d\weight
    \right)dx
    \qquad \textrm{and} \qquad
    \onlineduali \gets \onlineduali + 
    \log(\expfuncparamone)\sum_{j\in C}\left(
    \weightijConfig - n_{jC}\offlinedualj
    \right)^+dx,
\end{align*}
By construction, the invariant $\offlinedualj \equiv 
\int_0^{\infty}\pen(\callocj(\weight))\,d\weight$
is retained throughout the execution of $\ALG$. Moreover, the value of $\theta_{ijC}$ will set to satisfy constraint $2$. More specifically at the end of allocating online node $i$:
\begin{align*}
    \theta_{ijC} \gets \left(w_{ijC}-n_{jC}\beta_j\right)^+.
\end{align*}

The rest of the proof is done in two steps:

\noindent
[\emph{Step i}] \emph{Checking the feasibility of dual.}
We first show that the constructed dual assignment 
is feasible.
By construction, 
whenever $\ALG$ allocates $dx$ amount 
of offline node $j$ to online node $i$, 
we know that $\weightijConfig - n_{jC}\offlinedualj \geq 0$.
Thus, $\onlineduali \geq 0$ and $\offlinedualj \geq 0$
for all $i\in\onlinenodes,j\in\offlinenodes$. Furthermore, feasibility of $\theta_{ijC}+n_{jC}\beta_j \geq w_{ijC}$ follows from the construction.
Next, we show the feasibility of dual constraint 
$\onlineduali \geq \displaystyle\sum_{j\in \offlinenodes} \theta_{ijC}$
by considering two cases.
\begin{itemize}
    \item\underline{Case I --- $\ALG$ does not exhaust the unit capacity of online node $i$}:
    By construction, after the departure of online node $i$,
    $n_{j'C'}\offlinedual_{j'} \geq \weight_{ij'C'}$ for every offline node and configuration $j'\in C \subseteq \mathcal{C}_i$.
    Thus, $\theta_{ijC} = 0$ and the dual constraint is satisfied.
    \item\underline{Case II --- $\ALG$ exhausts the unit capacity of online node $i$}:
    By construction, each $dx$ fraction is allocated 
    to a configuration $C'$ such that $\sum_{j \in C'} \left(\weight_{ijC'} - n_{jC}\offlinedual_{j}\right) \geq 
    \sum_{j \in C} \left(\weight_{ijC} - n_{jC}\offlinedual_{j}\right)$. 
    Thus, the dual variable $\onlineduali$ can be lower bounded for all $C$ as
    \begin{align*}
        \onlineduali \geq \displaystyle\int_{0}^{1}
        \log(\expfuncparamone) \sum_{j\in C_x} \left(w_{ijC_x} - \offlinedualj^{(i, x)}\right)^+
        dx
        \geq
        \log(\expfuncparamone) \sum_{j\in C} \left(\weightijConfig - \offlinedualj^{(i, 1)}\right)^+
        \geq 
        \sum_{j\in C} (\weightijConfig - \offlinedualj)^+= \displaystyle\sum_{j\in C} \theta_{ijC},
    \end{align*}
    where $\offlinedualj^{(i, x)}$ is the value of dual variable 
    $\offlinedualj$ after $x$ fraction of online node $i$ is matched 
    with offline nodes,
    and the last inequality holds since $\expfuncparamone \geq e$
    and $\offlinedualj$ is increasing throughout the exectuion of $\ALG$.
\end{itemize}

\smallskip
\noindent
[\emph{Step ii}] \emph{Comparing objective values in primal and dual.}
Here we show that the total profit of $\ALG$ 
is a $\approxratioexp(f)$-approximation 
of the objective value of 
the above dual assignment in the dual program.
To show this, we consider every allocation (and buyback)
decision in $\ALG$ and its impact on total profit,
as well as the objective value of the dual assignment.

Suppose $\ALG$ allocates configuration $C$ to the arriving online node $i$, and for each offline node $j \in \tilde{C} \coloneqq \{j \in C \mid \weight_{ijC} \geq n_{jC} \offlinedual_{j}\}$, it buys back a fraction $dx$ from a previously assigned online node $i'_j$ and its corresponding configuration $C'_j$.
The change in the profit, that is, the net change in the primal objective \emph{after} we incorporate the buyback cost, can be lower bounded in the following way:
\begin{align*}
    \Delta(\text{Primal}) \geq \sum_{j \in \tilde{C}} \left(w_{ijC} - (1+f)w_{i'_jjC'_j}\right)dx.
\end{align*}
Furthermore, the change in the dual objective is upper-bounded as follows:
\begin{align*}
    \Delta(\text{Dual}) &=
    \left(
    \log(\expfuncparamone)\sum_{j \in \tilde{C}}
    \left(
    \weightijConfig -
    \displaystyle\int_0^\infty \pen(\callocj(\weight))\,d\weight
    \right)
    +
    \int_{w_{i'_jjC'_j}}^{\weightijConfig} 
    \penderivative(\callocj(\weight))\,d\weight
    \right)dx
    \\
    &\overset{(a)}{\leq} 
    \left(
    \log(\expfuncparamone)\sum_{j \in \tilde{C}}
    \left(
    \weightijConfig -
    \displaystyle\int_0^{w_{i'_jjC'_j}} \pen(1)\,d\weight
    -
    \displaystyle\int_{w_{i'_jjC'_j}}^{\weightijConfig} \pen(\callocj(\weight))\,d\weight
    \right)
    +
    \int_{w_{i'_jjC'_j}}^{\weightijConfig} 
    \penderivative(\callocj(\weight))\,d\weight
    \right)dx
    \\
    &=
    \left(
    \log(\expfuncparamone)\sum_{j \in \tilde{C}}
    \left(
    \weightijConfig -
    \expfuncparamtwo\left(\expfuncparamone - 1\right)w_{i'_jjC'_j}
    -
    \displaystyle\int_{w_{i'_jjC'_j}}^{\weightijConfig} 
    \expfuncparamtwo\left(
    \expfuncparamone^{\callocj(\weight)} - 1
    \right)\,d\weight
    \right)
    +
    \int_{w_{i'_jjC'_j}}^{\weightijConfig} 
    \log(\expfuncparamone)
    \expfuncparamtwo
    \expfuncparamone^{\callocj(\weight)}
    \,d\weight
    \right)dx
    \\
    &=
    \log(\expfuncparamone)\sum_{j \in \tilde{C}} \left( (\expfuncparamtwo + 1)\weightijConfig
    -
    \expfuncparamtwo\expfuncparamone
    w_{i'_jjC'_j} \right)dx~~,
\end{align*}
where inequality~(a) holds 
by dropping $\int_{\weightij}^\infty \pen(\callocj(\weight))\,d\weight$ 
and the fact that $\callocj(\weight) = 1$ 
for every~$\weight\leq\weightipj$.
Note that 
\begin{align*}
    \weightijConfig
&\geq \int_0^\infty \pen(\callocj(\weight))\,d\weight 
\geq 
\int_0^{w_{i'_jjC'_j}} \pen(\callocj(\weight))\,d\weight 
=
\pen(1)w_{i'_jjC'_j}
\geq (1 + \buybackcost)w_{i'_jjC'_j}~~,
\end{align*}
where the last inequality holds since $\expfuncparamtwo \geq \frac{1+f}{\expfuncparamone - 1}$. Therefore, $\frac
    {
    (\expfuncparamtwo + 1)\weightijConfig
    -
    \expfuncparamtwo\expfuncparamone
    w_{i'_jjC'_j}
    } 
    {
    \weightijConfig
    -
    (1+f)w_{i'_jjC'_j}
    }\log(\expfuncparamone) \leq \approxratioexp(\buybackcost) $ and 

$$\Delta(\text{Dual})= \frac
    {
    \log(\expfuncparamone) \sum_{j \in \tilde{C}}\left(
    (\expfuncparamtwo + 1)\weightijConfig
    -
    \expfuncparamtwo\expfuncparamone
    w_{i'_jjC'_j}\right)
    } 
    {
    \sum_{j \in \tilde{C}}\left(
    \weightijConfig
    -
    (1+f)w_{i'_jjC'_j}\right)
    }
    \Delta(\text{Primal}) \leq \approxratioexp(\buybackcost) 
\cdot \Delta(\text{Primal})$$
By summing $\Delta(\text{Dual})$ and $\Delta(\text{Primal})$ over all allocations and buyback decisions throughout the horizon, we obtain:
$$
\textrm{total-profit}(\ALG)\triangleq  \text{Primal}\geq \frac{1}{ \approxratioexp(\buybackcost) }\cdot\text{Dual}
$$
Finally, by weak duality of the linear program, $\text{Dual}\geq \textrm{profit}(\OPT)$, which  finishes the proof.
\hfill\halmos
\end{proof}

\subsection{Submodular welfare maximization problem}
\label{apx:submodular welfare maximization}

In this section, we explore another generalization of our model in which each online node may be matched to a subset of offline nodes, and the value derived from such a match is derived by a submodular function. This formulation captures settings where the marginal contribution of each additional offline node depends on the context of the chosen subset and has diminishing returns---common in recommendation systems, resource allocation with substitution, and combinatorial auctions. We demonstrate that this model is a non-trivial special case of the configuration-based allocation framework introduced in \Cref{sec:configuration}, and thus inherits its structural and algorithmic insights.

One can think of $w_{ijC}$ for $j\in C$ to be $f_i([j] \cap C) - f_i([j-1] \cap C)$. Then the Primal-Dual will be very similar to program~\ref{eq:LP-max-weight-config}:
\begin{align}
\tag{$\mathcal{P}_{\texttt{Submodular}}$}
\label{eq:LP-max-weight-submodular}
\arraycolsep=1.4pt\def\arraystretch{1}
\begin{array}{llllllll}
\max  &\displaystyle\sum_{i\in\onlinenodes}
\displaystyle\sum_{C \subseteq \offlinenodes} \displaystyle\sum_{j\in C}
\edgeallocijConfig \weightijConfig &~~\text{s.t.}&
& \quad\quad\quad\quad \text{min} &\displaystyle\sum_{i\in\onlinenodes }{\onlineduali}+\displaystyle\sum_{j\in \offlinenodes}{\offlinedualj}&~~\text{s.t.} \\[1.4em]
 & \edgeallocijConfig = \configallocij &  i\in\onlinenodes,j\in\offlinenodes,C\subseteq \offlinenodes,& 
& &\theta_{ijC}+\offlinedualj\geq \weightijConfig&  i\in\onlinenodes,j\in \offlinenodes,C\subseteq \offlinenodes~,\\[1.4em]
 &\displaystyle\sum_{C \subseteq \offlinenodes}{\configallocij}\leq 1 &i\in \onlinenodes~, &
& &\onlineduali \geq \displaystyle\sum_{j\in \offlinenodes} \theta_{ijC} &i\in \onlinenodes,C\subseteq \offlinenodes~, \\
 &\displaystyle\sum_{i\in\onlinenodes}
\displaystyle\sum_{C \subseteq \offlinenodes} 
\edgeallocijConfig \leq 1 & j\in\offlinenodes~.
 \qquad \qquad &
& &\offlinedualj,\onlineduali \geq 0  &i\in\onlinenodes,j\in \offlinenodes.
\end{array}
\end{align}
This program differs from program~\ref{eq:LP-max-weight-config} in that it replaces the inequality constraint $x_{ijC} \leq z_{iC}$ with the equality $x_{ijC} = z_{iC}$. This adjustment is required due to the definition of the weights $w_{ijC}$, as a configuration $C$ can only be exploited if, for every $j \in C$, the condition $x_{ijC} = z_{iC}$ holds. In what follows, we show that relaxing this equality to an inequality does not affect the optimal value, implying that program~\ref{eq:LP-max-weight-submodular} is in fact a special case of program~\ref{eq:LP-max-weight-config} with $n_{jC} = 1$. 

We proceed by contradiction. Let \((x^*, z^*)\) be an optimal solution to the relaxed program, and suppose there exists some \(\hat{i}, \hat{j}, \hat{C}\) such that \(x^*_{\hat{i}\hat{j}\hat{C}} < z^*_{\hat{i}\hat{C}}\). We construct a new feasible solution \((\tilde{x}, \tilde{z})\) by modifying only a few coordinates as follows:
\begin{align*}
    \tilde{z}_{\hat{i}C} =
    \begin{cases}
        x^*_{\hat{i}\hat{j}C} & \text{if } C = \hat{C},\\
        z^*_{\hat{i}C} + (z^*_{\hat{i}\hat{C}} - x^*_{\hat{i}\hat{j}\hat{C}}) & \text{if } C = \hat{C} \setminus \{\hat{j}\}, \\
        z^*_{\hat{i}C} & \text{otherwise},
    \end{cases}
    &&
    \tilde{x}_{\hat{i}jC} =
    \begin{cases}
        x^*_{\hat{i}jC} - (z^*_{\hat{i}\hat{C}} - x^*_{\hat{i}\hat{j}C}) & \text{if } j \in \hat{C} \setminus \{\hat{j}\},\, C = \hat{C}, \\
        x^*_{\hat{i}jC} + (z^*_{\hat{i}\hat{C}} - x^*_{\hat{i}\hat{j}C}) & \text{if } j \in \hat{C} \setminus \{\hat{j}\},\, C = \hat{C} \setminus \{\hat{j}\}, \\
        x^*_{\hat{i}jC} & \text{otherwise}.
    \end{cases}
\end{align*}
It is straightforward to verify that \((\tilde{x}, \tilde{z})\) remains feasible. Furthermore, the objective value strictly increases by
\begin{align*}
    (z^*_{\hat{i}\hat{C}} - x^*_{\hat{i}\hat{j}\hat{C}}) \sum_{j \in \hat{C} \setminus \{\hat{j}\}} \left(w_{\hat{i}j\hat{C} \setminus \{\hat{j}\}} - w_{\hat{i}j\hat{C}}\right),
\end{align*}
which, using the definition of \(w\), simplifies to
\begin{align*}
    (z^*_{\hat{i}\hat{C}} - x^*_{\hat{i}\hat{j}\hat{C}}) \sum_{j \in \hat{C} \setminus \{\hat{j}\}} \bigg( 
        & \left(f_{\hat{i}}([j] \cap (\hat{C} \setminus \{\hat{j}\})) - f_{\hat{i}}([j-1] \cap (\hat{C} \setminus \{\hat{j}\}))  \right)\\
        -\; & \left(f_{\hat{i}}([j] \cap \hat{C}) - f_{\hat{i}}([j-1] \cap \hat{C}) \right) 
    \bigg),
\end{align*}
which is non-negative by the submodularity of \(f_{\hat{i}}\). This contradicts the optimality of \((x^*, z^*)\). Therefore, the relaxed version of the problem achieves the same objective value as the original formulation. This confirms that the submodular welfare maximization problem is indeed a special case of the configuration allocation problem.

\subsection{Assortment planning problem}
\label{apx:assortment}

In this section, we explore another generalization of our model to the online assortment planning problem. The special case with buyback factor $\buybackcost = 0$ has been studied in many prior work \citep[e.g.,][]{GNR-14,FNS-19,GGISUW-22}. We first define the model and then discuss the extension of our proposed algorithms and their competitive ratio guarantees.

In the online assortment planning problem,
each offline node $j$ in $\offlinenodes$ corresponds to a product with capacity~$\inventory_j$.
Each node $i$ in $\onlinenodes$ corresponds a consumer with a choice model $\choice_i:2^{\offlinenodes}\times \offlinenodes \rightarrow [0, 1]$.
By displaying an \emph{assortment} $\assortment\subseteq\offlinenodes$ (i.e., a subset of products) to her,
consumer $i$ selects each product $j\in\assortment$ with probability $\choice_i(\assortment, j)$.
If product $j$ has available units, the platform decides whether to approval the transaction of product $j$ to consumer $i$.
Approving this transaction, the platform gains reward $\weight_{ij}$ and the inventory of product $j$ is decreased by one.
The goal of the platform is to maximize the expected total rewards.\footnote{\revcolor{For simplicity, we focus on a model where the platform can display assortments that include non-available products and has the ability to revoke consumers' selections. However, our results can be extended to an alternative model, where the platform is restricted to displaying assortments consisting only of available products and lacks the power to revoke selections. This extension can be achieved using the sub-assortment sampling technique developed in \citet{FNS-22}.}}

We introduce the \emph{fluid relaxation} of the assortment problem as follows:
for each consumer $i$, the platform decide a fractional assortment assignment $\{z_{i\assortment}\}_{\assortment}$
such that $\sum_{\assortment} z_{i\assortment} \leq 1$ and fractional product allocation $\{x_{i\assortment j}\}_{\assortment,j}$
such that $x_{i\assortment j} \leq  z_{i\assortment}$.
Given fractional decision $\{z_{i\assortment}, x_{i\assortment j}\}_{\assortment,j}$,
the platform collect rewards $\sum_{\assortment}\sum_{j}\weight_{ij} x_{i\assortment j}$ from consumer $i$,
and the inventory of each product $j$ is decreased by $\sum_{\assortment}\choice_i(\assortment, j)\cdot x_{i\assortment j}$.

First, we argue that the fluid relaxation of the assortment problem is a special case of the configuration allocation problem, and therefore, our proposed algorithms and their competitive ratio guarantees (\Cref{thm:CRConfig}) apply directly. To illustrate this, we describe a reduction where each instance of the assortment problem corresponds to an instance of the configuration allocation problem as follows: The feasible configuration set in the configuration allocation problem encodes all possible assortments in the assortment problem, i.e., $\Config \gets 2^{\offlinenodes}$. For each offline node $j$, we set $n_{jC} \gets \choice_i(\config, j)$ for all configurations (assortment) $\config \in \Config$. All other model parameters can be similarly defined. 


Finally, we explain how to extend the algorithm for the {fluid relaxation} to the original assortment planning problem in the large inventory regime. The overall approach is analogous to the {randomized rounding} technique used in the base model for large inventories (\Cref{apx:matching rounding}) plus the sub-assortment sampling procedure developed in \citet{FNS-22}.

Let $\ALG$ denote the duality-based algorithm for the fluid relaxation. 
We construct a randomized algorithm $\ALG\primed$ as follows: for each online node $i$ and each assortment $\assortment$, suppose that $\ALG$ displays assortment $\assortment$ to node $i$ with probability $z_{i\assortment}$ and selects product allocation $x_{i\assortment j}$ for each node $j$.
In the constructed algorithm $\ALG\primed$, with probability $z_{i\assortment}$, we invoke the {sub-assortment sampling procedure} \citep{FNS-22} with the targeted product allocation $\{x_{i\assortment j}\}$. \citet{FNS-22} shows that by using the sub-assortment sampling procedure, a sub-assortment $\assortment\primed \subseteq \assortment$ is displayed at random, and the probability that node $i$ selects each product $j$ is exactly $x_{i\assortment j}$.

Using the same reasoning as in \Cref{prop:rounding matching}, the competitive ratio of $\ALG\primed$ is that of $\ALG$ with an additional multiplicative factor of $(1-(1+\buybackcost)\cdot O(\sqrt{\log (\mininventory)/\mininventory}))^{-1}$
where $\mininventory = \min_{j \in \offlinenodes} \inventory_j$ is the smallest initial capacity.

\subsection{Resource allocation with secondary supply channels}
\label{apx:negative buyback cost}

In this section, we extend our model to allow for \emph{negative} values of buyback factor $\buybackcost$. Specifically, we consider a setting with $n$ resources, each having a nominal capacity of $1$. However, this capacity is not a hard constraint--exceeding it is permitted at a penalty: the cost of fulfilling an order beyond capacity becomes $(1 + \buybackcost)$ times its original value. This leads to the following extreme cases:
\begin{itemize}
    \item When $\buybackcost = -1$, overflow incurs no additional cost, effectively removing the capacity constraint. The problem then reduces to an uncapacitated online matching problem, where a greedy algorithm achieves a competitive ratio of $1$.
    \item When $\buybackcost = 0$, exceeding capacity yields zero value, mimicking the classical online matching with free cancellation (free disposal) model.
\end{itemize}
For values of buyback factor $\buybackcost$ between these two extremes, the system effectively operates under a two-tier capacity structure: a primary tier offering full value (up to capacity one per resource), and a secondary tier that accommodates overflow at a reduced value. This setting admits two interpretations:
(i) a third party with unlimited capacity absorbs overflow orders, but only at a discounted reward of $-\buybackcost \cdot \weight$; (ii) the system accesses a secondary, unreliable capacity--analogous to a spot market, where additional capacity is available with probability $-\buybackcost$.

It is important to note that the timing of cancellations (whether executed immediately upon arrival or deferred until the end) does not affect the final outcome of the problem. As a result, the algorithm focuses on retaining the highest-value allocations within the primary capacity and assigns lower-value allocations to the overflow capacity.

Moreover, the concept of buyback takes on a different meaning in this extended setting. Unlike in the main model, where buybacks represented a cost to the decision maker, here they function as an additional source of revenue. Accordingly, the optimum offline benchmark is also permitted to exploit this opportunity.

To capture this in our formulation, we introduce a new variable $t_{ij}$ in the linear program, where $t_{ij}$ denotes the amount of demand from online node $i$ allocated to the substitute (overflow) channel associated with offline node $j$. With this modification, the primal and dual programs are extended as follows:
\begin{align*}
\tag{$\tilde{\mathcal{P}}_{\texttt{OPT}}$}
\label{eq:LP-max-weight-Negative-f}
\arraycolsep=1.4pt\def\arraystretch{1}
\begin{array}{llllllll}
\max  &\displaystyle\sum_{i\in\onlinenodes}
\displaystyle\sum_{j\in \offlinenodes}\left(
\edgeallocij -\buybackcost t_{ij}\right) \weightij&~~\text{s.t.}&
& \quad\quad\quad\quad \text{min} &\displaystyle\sum_{i\in\onlinenodes }{\onlineduali}+\displaystyle\sum_{j\in \offlinenodes}{\offlinedualj}&~~\text{s.t.} \\[1.4em]
 &\displaystyle\sum_{j\in \offlinenodes}{\left(\edgeallocij+t_{ij}\right)}\leq1 &  i\in\onlinenodes~,& 
& &\onlineduali+\offlinedualj\geq \weightij& i\in\onlinenodes,j\in \offlinenodes~,\\[1.4em]
 &\displaystyle\sum_{i\in \onlinenodes}{\edgeallocij}\leq 1 &j\in \offlinenodes~, &
& &\onlineduali \geq -\buybackcost\weightij& i\in\onlinenodes,j\in \offlinenodes~, \\
 &\edgeallocij,t_{ij} \geq 0 &i\in\onlinenodes,j\in\offlinenodes~.
 \qquad \qquad &
& &\offlinedualj \geq 0 , \onlineduali \geq 0  &i\in\onlinenodes,j\in\offlinenodes~. 
\end{array}
\end{align*}
Notice that when $\buybackcost \geq 0$, the optimal solution satisfies $t_{ij} = 0$ for all $(i,j)$, and program~\ref{eq:LP-max-weight-Negative-f} reduces to the original formulation in program~\ref{eq:LP-max-weight}. Using this pair of primal and linear programs described---which in some sense is a generalization of our offline LP for the case with non-negative $f$---we extend our main results of \Cref{sec:matching} and \Cref{sec:deterministic} to the case with negative values of the parameter $f$. 

First, we extend both \Cref{alg:primal dual matching} and \Cref{thm:competitive ratio exponential penalty function} (in particular \Cref{coro:optimal competitive ratio small f}) to accommodate the case where $\buybackcost < 0$. We formalize this extended result in \Cref{thm:competitive ratio exponential penalty function negative F}, and describe the details of the modified version of our previous primal-dual fractional algorithm for the matching environment (under non-negative $f$) that can now achieve this result in \Cref{alg:primal dual matching negative F}.

\begin{algorithm}
\revcolor{
\caption{Primal-dual fractional online algorithm for 
 matching with $f \leq 0$}
\label{alg:primal dual matching negative F}
    \SetKwInOut{Input}{input}
    \SetKwInOut{Output}{output}
 \Input{Penalty function $\pen$
 }
 
 \vspace{2mm}
 
 Initialize $\offlinedual_j \gets 0$, and $\forall \weight \in \mathbb{R}_+:\allocj(\weight) \gets \diracdeltafunction_0(\weight), \callocj(\weight) \gets \indicator{\weight = 0}$ for every $j\in\offlinenodes$.
 
 
 
 \vspace{1mm}
 
{\color{royalazure} \tcc{$\diracdeltafunction_w'(\cdot)$
 is the Dirac delta function centered at $w'$.}}
 
 \vspace{1mm}
 

 
 \For{each online node $i\in\onlinenodes$}
 {
 
 \vspace{1mm}
 
 \While{capacity of online node $i$ is not exhausted}
 {
 \vspace{2mm}
 
 Let $j^* \gets \argmax_{j\in\offlinenodes}\ 
 \weightij - \offlinedualj$ 

 \vspace{1mm}

 Let $\tilde{j} \gets \argmax_{j\in\offlinenodes}\ 
 \weightij$ 
 
 \vspace{1mm}

 \If{$-\frac{\buybackcost e}{e-(1+f)}w_{i\tilde{j}} \geq w_{ij^*} - \beta_{j^*}$}{
    \vspace{2mm}
 Allocate the remaining capacity to the alternative channel of offline node $\tilde{j}$.
 }

 \Else{

 \vspace{1mm}
 
 Buyback $dx$ fraction of offline node $j^*$ from 
 the smallest allocated weight $\buybackweight_{j^*}$,
 i.e., $\alloc_{j^*}(\weight) \gets
 \alloc_{j^*}(\weight) - dx\cdot\delta_{\buybackweight_{j^*}}(\weight)$
 {\color{royalazure}\tcc{Formally, $\buybackweight_{j^*} = \inf\{\weight'\in\reals_+:
 \calloc_{j^*}(\weight') < 1\}$}}
 
 \vspace{1mm}
 Allocate $dx$ fraction of offline node $j^*$
 to online node $i$,
 i.e., $\alloc_{j^*}(\weight) \gets \alloc_{j^*}(\weight) + dx\cdot\delta_{\weight_{ij^*}}(\weight)$
 
 \vspace{1mm}
 Update allocation quantile function 
$\forall \weight\in[\buybackweight_{j^*},\weight_{ij^*}]:~\calloc_{j^*}(\weight) \gets \int_\weight^{\infty}
\alloc_{j^*}(t)\,dt$

 \vspace{1mm}
Update the dual assignment $\offlinedual_{j^*}\gets 
\int_{0}^{\infty}
\pen_{j^*}(\calloc_{j^*}(\weight))\,d\weight$}
}
}}
\end{algorithm}
\begin{theorem}[{Optimal competitive ratio of fractional algorithms for $\boldsymbol{f\in[-1,0]}$}]
\label{thm:competitive ratio exponential penalty function negative F}
For every $f \leq 0$ and by setting $\expfuncparamone = e$
and 
$\expfuncparamtwo = \frac{1 + \buybackcost}{\expfuncparamone - (1+f)}$,
\Cref{alg:primal dual matching negative F} with generalized exponential penalty function 
$\pen(\calloc) = (1+f)\frac{e^\calloc - 1}{e-(1+f)}$
has competitive ratio at most $\frac{e}{e-(1+\buybackcost)}$. 
\end{theorem}
\begin{proof}{\emph{Proof.}}
Consider the linear program \ref{eq:LP-max-weight-Negative-f}. We construct a dual assignment as follows: for each online node $i$, if $j^*$ was selected (i.e., ``else case'' in line 8 of \Cref{alg:primal dual matching negative F}), we update the dual variables similar to the update in the proof of \Cref{thm:competitive ratio exponential penalty function}:
\begin{align*}
    \onlineduali \gets \onlineduali + 
    \left(
    \weight_{ij^*} - \offlinedual_{j^*}
    \right)\, dx
    \qquad\textrm{and} \qquad
    \offlinedual_{j^*} \gets 
    \offlinedual_{j^*} + 
    \left(\displaystyle\int_{\weight_{i'j^*}}^{\weight_{ij^*}}
    \penderivative(\calloc_{j^*}(\weight))\,d\weight
    \right)dx~~,
\end{align*}
where $i'<i$ is the previous online node whose value $\weight_{i'j^*} = \buybackweight$ has been bought back in the algorithm.
Otherwise (if $\tilde{j}$ was selected, i.e., ``if case'' in line 6 of \Cref{alg:primal dual matching negative F}), we update the dual variables in the following way: 
\begin{align*}
    \onlineduali \gets \onlineduali -\frac{\buybackcost e}{e-(1+f)}\weight_{i\tilde{j}}\,dx.
\end{align*}
The rest of the proof is done in two similar steps:

\noindent
[\emph{Step i}] \emph{Checking the feasibility of the new dual constraint.} By construction, whenever the algorithm allocates $dx$ units of the main channel of an offline node $j$ to an online node $i$, we have $w_{ij} - \beta_j \ge 0$. Hence, $\onlineduali$ is nondecreasing and positive for all $i\in\onlinenodes$. Also, $\offlinedualj$ is updated only if $w_{ij^*} - \beta_{j^*} \ge -\frac{\buybackcost e}{e-(1+f)}w_{ij^*}$ which is equivalent to: 
\begin{align*}
\frac{(1+f)(e-1)}{e-(1+f)}w_{ij^*} \geq \beta_{j^*} \geq \pen(1)w_{i'j^*} = (1+f)\frac{e - 1}{e-(1+f)}w_{i'j^*}.    
\end{align*}
Therefore, $w_{ij^*} \geq w_{i'j^*}$ and $\offlinedualj$ is nondecreasing during the execution of the algorithm, and in particular $\offlinedualj \geq 0$ for all $j\in\offlinenodes$.

\noindent Finally, with each infinitesimal fraction $dx$ allocated to an offline node, we have 
$$d \onlineduali \geq \max\left\{\weightij-\offlinedualj,-\frac{\buybackcost e}{e-(1+f)}w_{ij}\right\}dx \geq \max\left\{\weightij-\offlinedualj,-\buybackcost w_{ij}\right\}dx.$$ 
Thus, the dual variable $\onlineduali$ can be lower bounded as follows:
   \begin{align*}
        \onlineduali \geq \displaystyle\int_{0}^{1}
        \max\left\{\weightij - \offlinedualj^{(i, x)},-fw_{ij}\right\}dx
        \geq
        \max\left\{\weightij - \offlinedualj^{(i, 1)},-fw_{ij}\right\}
        \geq 
        \max\left\{\weightij-\offlinedualj,-fw_{ij}\right\},
    \end{align*}
    where $\offlinedualj^{(i,x)}$ is the value of dual variable $\offlinedualj$ after a fraction $x$ of online node $i$ is matched with offline nodes, and the last inequality follows since $\offlinedualj$ is non-decreasing throughout $\ALG$'s execution.

\smallskip
\noindent
[\emph{Step ii}] \emph{Comparing objective values in primal and dual.} We will show the approximation holds true after each infinitesimal allocation of online node $i$.
In the case of choosing $j^*$ (i.e., ``else case'' in line 8 of \Cref{alg:primal dual matching negative F}), 
following the same argument in \Cref{thm:competitive ratio exponential penalty function}, it can be shown that 
\begin{align*}
    \frac{\Delta(\text{Dual})}{\Delta(\text{Primal})}= \frac{
    (\expfuncparamtwo + 1)\log(\expfuncparamone)\weightij
    -\expfuncparamtwo\expfuncparamone\log(\expfuncparamone)
    \weightipj
    }{
    {
    \weightij
    -
    (1+\buybackcost)\weightipj
    }} = \frac{
    (\expfuncparamtwo + 1)\weightij
    -\expfuncparamtwo\expfuncparamone
    \weightipj
    }{
    {
    \weightij
    -
    (1+\buybackcost)\weightipj
    }} = \frac{e}{e-(1+f)}.
\end{align*}
The new case is when $\tilde{j}$ is chosen (i.e., ``if case'' in line 6 of \Cref{alg:primal dual matching negative F}) in which the changes in the primal and dual can be easily written as:
\begin{align*}
    \frac{\Delta(\text{Dual})}{\Delta(\text{Primal})} = \frac{
    -\frac{fe}{e-(1+f)}\weightij}{-fw_{ij}}= \frac{e}{e-(1+f)}.
\end{align*}
Hence, by summing $\Delta(\text{Dual})$ and $\Delta(\text{Primal})$ over all allocations and buyback decisions throughout the horizon, we obtain:
$$
\textrm{total-profit}(\text{\Cref{alg:primal dual matching negative F}})\triangleq  \text{Primal}\geq \frac{1}{ \approxratioexp(\buybackcost) }\cdot\text{Dual}
$$
still holds and the results follows from weak-duality.
\hfill\halmos
\end{proof}
\begin{remark}
It is not hard to see that the competitive ratio upper-bound  $\frac{e}{e-(1+f)}$ in \Cref{thm:competitive ratio exponential penalty function negative F} for fractional online algorithms under $f\in[-1,0]$ is tight---which means that no online (fractional) algorithm can obtain a better competitive ratio. The hard instance to show this result is the classic ``half graph'' (i.e., special case of \Cref{example:lower bound matching} when $f=0$). As an informal argument, consider an instance with $n$ online nodes and $n$ offline nodes (indexed by $1,2,\ldots,n$), each with capacity $1$.  Suppose that each online node $i\in[n]$ has an edge (with weight $1$) to all offline nodes $j\in\{i,\dots,n\}$. Since the graph admits a perfect matching, the offline optimum does not use any secondary channels and obtains $n$ in the objective. At the same time, the best fractional online algorithm, due to the symmetry of the instance and the fact that $-f\leq 1$, would be a greedy algorithm that (i)~uniformly allocates the arriving online node $i$ to available offline nodes until they reach their capacity of $1$, and then (ii)~will use the second channel to get an additional reward at the rate of $-f$ for the remaining online nodes. By similar calculations to those in \Cref{example:lower bound matching} (or by using the analysis of the lower bound for the classic online unweighted matching problem in \citealp{KVV-90}), we conclude that this algorithm will fully allocate the first $\lfloor\frac{n(e-1)}{e}\rfloor$ online nodes, and obtains a reward of at most $-f\cdot\frac{n}{e}+o(n)$ by using the secondary channel from the matching of the remaining online nodes. Therefore, the competitive ratio of this algorithm is at least $\frac{n}{n(\frac{e-1}{e})-f\frac{n}{e}}+\Omega(\frac{1}{n})$, which converges to $\frac{e}{e-(1+f)}$ as $n$ goes to $+\infty$. The formal argument to cover randomized algorithms is based on using Yao's lemma and picking a distribution over graph isomorphisms of the half-graph, where the nodes are indexed by a uniform random permutation $\Pi$. See details of this formal argument for the special case of $f=0$ in \Cref{example:lower bound matching}, \Cref{sec:lower-bound}, and a more formal analysis in \Cref{lem:lower bound matching optimum online}). We omit these details here for brevity. 
\end{remark}

Next, we shift our focus to the case of deterministic integral algorithms. In particular, we extend both \Cref{alg:opt deterministic matching} and \Cref{thm:competitive ratio deterministic integral} (in particular, \Cref{coro:optimal deterministic competitive ratio small f}) to accommodate the case where $\buybackcost < 0$. We formalize this extended result \Cref{thm:competitive ratio determinsitic integral negative F}, and explain the modified version of our previous deterministic integral primal-dual algorithm for the matching  environment (under non-negative $f$) that can achieve this result in  \Cref{alg:primal dual matching determinstic negative F}.
\begin{algorithm}
\revcolor{
\caption{Primal-dual deterministic integral online algorithm for matching with $f\leq 0$}
\label{alg:primal dual matching determinstic negative F}
    \SetKwInOut{Input}{input}
    \SetKwInOut{Output}{output}
 \Input{penalty scalar $\penscalar$
 }
 
 \vspace{2mm}
 
 Initialize $\buybackweightj \gets 0$
 for every offline node $j\in\offlinenodes$.
 
 \vspace{1mm}
 
 \For{each online node $i\in\onlinenodes$}
 {
 
 \vspace{1mm}

  Let $j^* \gets \argmax_{j\in\offlinenodes}\ 
 \weightij - \penscalar\cdot \buybackweightj$

  Let $\tilde{j} \gets \argmax_{j\in\offlinenodes}\ 
 \weightij$
    
\If{$(1-\tau)w_{i\tilde{j}} \geq w_{ij^*} -  \penscalar\cdot \buybackweight_{j^*}$}{
    \vspace{2mm}
 Allocate online node $i$ to the alternative channel of offline node $\tilde{j}$.
 }

 \Else{

 \vspace{1mm}
 
 Buy back offline node $j^*$ from 
 the previously allocated online node $i'$ with $w_{i'j^*}\equiv\buybackweight_{j^*}$
 
 \vspace{1mm}
 Allocate offline node $j^*$ to online node $i$ 
 
  \vspace{1mm}
 Update $\buybackweight_{j^*}\leftarrow \weight_{ij^*}$
}
}}
\end{algorithm}

\begin{theorem}[{Optimal competitive ratio of deterministic integral algorithms for $\boldsymbol{f\in[-1,0]}$}]
\label{thm:competitive ratio determinsitic integral negative F}
For every buyback factor $\buybackcost \leq 0$,
\Cref{alg:primal dual matching determinstic negative F}
with $\penscalar = \frac{1+f}{1-f}$
has competitive ratio at most $\frac{2}{1 - \buybackcost}$.
\end{theorem}
\begin{proof}{\emph{Proof.}}
Similar to above, we construct a dual assignment for the linear program \ref{eq:LP-max-weight-Negative-f} as follows: for each online node $i$, if $j^*$ was selected (i.e., ``else case'' in line 7 of \Cref{alg:primal dual matching determinstic negative F}) we update similar to the update in the proof of \Cref{thm:competitive ratio deterministic integral}:
\begin{align*}
    \onlineduali \gets w_{ij^*} - \penscalar\cdot \buybackweight_{j^*}
    \qquad\textrm{and} \qquad
    \offlinedual_{j^*} \gets 
    \offlinedual_{j^*} + 
    \penscalar \left(w_{ij^*} - \buybackweight_{j^*}\right)
\end{align*}
(It is easy to see before the allocation $\offlinedual_{j^*} = 
    \penscalar \buybackweight_{j^*}$ and after the allocation $\offlinedual_{j^*} = 
    \penscalar w_{ij^*}$.) On the other hand if $\tilde{j}$ was selected (i.e., ``if case'' in line 5 of \Cref{alg:primal dual matching negative F}), we update the dual variables as 
\begin{align*}
    \onlineduali \gets(1-\tau) \weight_{i\tilde{j}}
\end{align*}
The rest of the proof is done in two similar steps:

\noindent[\emph{Step i}] \emph{Checking the feasibility of the new dual constraint.} By construction, whenever the algorithm allocates the main channel of an offline node $j$ to an online node $i$, we have $w_{ij} - \beta_j \ge 0$. Also as $1-\tau \geq 0$, then $\onlineduali$ is nondecreasing and positive for all $i\in\onlinenodes$. \\
For $\offlinedualj$, observe that upon update we have $w_{ij^*} - \penscalar\cdot \buybackweight_{j^*} \ge (1-\penscalar)w_{ij^*}$ which is equivalent to $w_{ij^*} \geq w_{i'j^*}$.  
Therefore, $\offlinedualj$ is nondecreasing during the execution of the algorithm, and in particular $\offlinedualj \geq 0$ for all $j\in\offlinenodes$.\\
Finally, after the allocation of online node $i$, for all $j$, we have $$\onlineduali \geq \max\left\{\weightij-\offlinedualj,(1-\tau)w_{ij}\right\} \geq \max\left\{\weightij-\offlinedualj,-fw_{ij}\right\}.$$
The last inequality comes from the fact that $\tau = \frac{1+f}{1-f} \leq 1+f$.

\smallskip
\noindent
[\emph{Step ii}] \emph{Comparing objective values in primal and dual.}
Here we show that the total profit of $\ALG$ 
is a $\frac{2}{1-f}$-approximation 
of the objective value of 
the above dual assignment.\\
First suppose allocation is made to the main channel (``else case'' in line 7 of \Cref{alg:primal dual matching determinstic negative F}), therefore, $\ALG$ buys back offline node $j$ 
from online node $i'$ (with weight $\weightipj\equiv\buybackweightj$)
and then re-allocates it to online node $i$.
The change in the profit (i.e., the net change in the primal objective \emph{after} we incorporate buyback cost) is $\Delta(\text{Primal}) = 
    \weightij - (1+\buybackcost)\buybackweightj,$
and the change in the dual objective is $
    \Delta(\text{Dual}) =
    \weightij - \penscalar\cdot \buybackweightj
    +
    \penscalar(\weightij - \buybackweightj)=
    (\penscalar + 1)\weightij
    -
    2\penscalar\buybackweightj$. Combining with the fact that
$\weightij \geq \buybackweightj \geq (1+f) \buybackweightj$,
we have 
\begin{align*}
    \frac{\Delta(\text{Dual})}{\Delta(\text{Primal})} = \frac{
    (\penscalar + 1)\weightij
    -
    2\penscalar\buybackweightj}{\weightij - (1+\buybackcost)\buybackweightj}=
    \frac{
    \frac{2}{1-f}\cdot
\frac{\weightij}{\buybackweightj}
    -
    2\frac{1+f}{1-f}}{
\frac{\weightij}{ \buybackweightj} - (1+\buybackcost)} = \frac{2}{1-f}.
\end{align*}
Second, consider the case where the allocation is made to the secondary tier  (``if case'' in line 5 of \Cref{alg:primal dual matching negative F}), therefore:
\begin{align*}
    \frac{\Delta(\text{Dual})}{\Delta(\text{Primal})} = \frac{
    (1-\tau)\weightij}{-fw_{ij}}= \frac{2}{1-f}.
\end{align*}
By summing $\Delta(\text{Dual})$ and $\Delta(\text{Primal})$ over the entire horizon, we obtain:
$$
\textrm{total-profit}(\ALG)\triangleq  \text{Primal}\geq \frac{1}{ \approxratiodet(\buybackcost) }\cdot\text{Dual}
$$
Finally, by weak duality of the linear program, $\text{Dual}\geq \textrm{profit}(\OPT)$, which  finishes the proof.
\hfill\halmos
\end{proof}
\begin{remark} The competitive ratio upper bound of $\frac{2}{1-f}$ in \Cref{thm:competitive ratio determinsitic integral negative F} for deterministic integral algorithms under $f\in[-1,0]$ is also tight, meaning that no deterministic integral online algorithm can obtain an improved competitive ratio. To see this lower bound, consider a simple $2 \times 2$ instance, where $w_{i_1j_1} = w_{i_1j_2} = 1$. Suppose the online algorithm matches online node $i_1$ to offline node $i_1$. Then the adversary introduces a second online node with $w_{i_2j_1} = 1$ (see \Cref{fig:badExample}). The offline optimum selects the edges ${(i_1,j_2), (i_2,j_1)}$, achieving a total value of $2$. In contrast, the best online algorithm obtains at most $1-f$ by using the secondary channel of $j_1$ at the time of arrival of $i_2$, which implies that the competitive ratio is greater than $\tfrac{2}{1-f}$.
\end{remark}
\begin{figure}[h!]
\revcolor{
\centering
\begin{tikzpicture}[x=2.8cm,y=1.6cm]
  \tikzset{
    vertex/.style={circle, draw, minimum size=8mm, inner sep=0pt}
  }

  \node[vertex] (j1) at (2,  1) {$j_1$};
  \node[vertex] (j2) at (2, -1) {$j_2$};
  \node[vertex] (i1) at (0,  1) {$i_1$};
  \node[vertex] (i2) at (0, -1) {$i_2$};

  \draw (j1) -- (i1);
  \draw (j2) -- (i1);
  \draw (j1) -- (i2);

\end{tikzpicture}
\caption{\centering The offline optimum matches $\{(i_1,j_2),(i_2,j_1)\}$ while the online algorithm matches $\{(i_1,j_1),(i_2,j_1)\}$ where $(i_2,j_1)$ uses the secondary channel.}
}
\label{fig:badExample}
\end{figure}

\begin{remark}
    When $f=-1$, the restricted main channel provides no benefit compared to the alternative channel. Consequently, the competitive ratio of \Cref{thm:competitive ratio exponential penalty function negative F} simplifies to $\tfrac{e}{e-(1+f)}=1$, also the competitive ratio of \Cref{thm:competitive ratio determinsitic integral negative F} simplifies to $\tfrac{2}{1-f}=1$. In both cases, the algorithm reduces to a straightforward greedy rule that assigns each online node to the offline neighbor with maximum weight, i.e., $\max_j \weight_{ij}$.
\end{remark}
}
\end{document}